\newcommand*{\QEDA}{\hfill\ensuremath{\blacksquare}}  
 \newtheorem{theorem}{Theorem}[section]
\newtheorem{proof}{Proof}
\newtheorem{corollary}{Corollary}[section]
\newtheorem{lemma}{Lemma}[section]
\newtheorem{remark}{Remark}[section]
\newtheorem{assumption}{Assumption}[section]
\begin{document}

\def\ba{\begin{array}}
\def\ea{\end{array}}
\def\ban{\begin{eqnarray*}}
\def\ean{\end{eqnarray*}}
\def\bd{\begin{description}}
\def\ed{\end{description}}
\def\be{\begin{equation}}
\def\ee{\end{equation}}
\def\bna{\begin{eqnarray}}
\def\ena{\end{eqnarray}}
\allowdisplaybreaks
\title{Distributed Stochastic Optimization With Unbounded Subgradients Over Randomly Time-Varying Networks}
%
%
%

\author{Yan~Chen,
       Alexander L. Fradkov,~\IEEEmembership{Life Fellow,~IEEE},
       Keli~Fu,
       Xiaozheng~Fu and Tao~Li,~\IEEEmembership{Senior Member,~IEEE}
\thanks{Tao Li's work was funded by the National Natural Science Foundation of
China under Grant No. 62261136550. Alexander L. Fradkov's work was funded by Russian Science Foundation under Grant No. 23-41-00060. The authors are listed in alphabetical order. Corresponding author: Tao Li. }
\thanks{Y. Chen, K. Fu and X. Fu and T. Li are with the Shanghai Key Laboratory of Pure Mathematics and Mathematical Practice, School of Mathematical Sciences, East China Normal University, Shanghai 200241, China (e-mail: tli@math.ecnu.edu.cn).}
\thanks{Alexander L. Fradkov is with the Institute for
Problems in Mechanical Engineering of the Russian Academy of Sciences
(IPME RAS), 199178, Saint Petersburg, Russia.}
}

%
%

\markboth{Journal of \LaTeX\ Class Files,~Vol.~14, No.~8, August~2015}%
{Shell \MakeLowercase{\textit{et al.}}: Bare Demo of IEEEtran.cls for IEEE Journals}
%



\maketitle

\begin{abstract}
Motivated by distributed statistical learning over uncertain communication networks, we study distributed stochastic optimization by networked nodes to cooperatively minimize a sum of convex cost functions. The network is modeled by a sequence of time-varying random digraphs with each node representing a local optimizer and each edge representing a communication link. We consider the distributed subgradient optimization algorithm with noisy measurements of local cost functions' subgradients, additive and multiplicative noises among information exchanging between each pair of nodes. By stochastic Lyapunov method, convex analysis, algebraic graph theory and martingale convergence theory, we prove that if the local subgradient functions grow linearly and the sequence of digraphs is conditionally balanced and uniformly conditionally jointly connected,  then proper algorithm step sizes can be designed so that all nodes' states converge to the global optimal solution almost surely.
\end{abstract}

\begin{IEEEkeywords}
Distributed stochastic convex optimization, Additive and multiplicative communication noise, Random graph, Subgradient.
\end{IEEEkeywords}

%
\IEEEpeerreviewmaketitle

\section{Introduction}
\subsection{Related work}
\hspace*{12pt} In recent years, distributed cooperative optimization over networks has attracted extensive attentions,
such as the economic dispatch in power grids (\cite{YiP}) and  the traffic flow control in intelligent transportation networks (\cite{MohebifardR}),  et al. Considering the various uncertainties in practical network environments, distributed stochastic optimization algorithms have been widely studied. The (sub)gradients of local cost functions are used in many distributed optimization algorithms.
However, it is difficult to get accurate (sub)gradients in many practical applications. For example, in distributed statistical machine learning (\cite{LiuX}), the local loss functions are the mathematical expectations of random functions so that the local optimizers can only obtain the measurement of the (sub)gradients with random noises. The influence of (sub)gradient measurement noises has been considered for distributed optimization algorithms in \cite{ThinhT}-\cite{AlghunaimSA}.
In real networked systems, the information exchange among nodes is often affected by communication noises, and the structure of the network often changes randomly due to packet dropouts, link/node failures and recreations, which are studied in \cite{BastianelloN2}-\cite{WangD}.

Most of the above works consider the  randomly switching   networks, (sub)gradient measurement and communication link noises separately.
However, a variety of random factors may co-exist in practical environment.
For distributed statistical machine learning algorithms, the (sub)gradients of local loss functions cannot be obtained accurately, the graphs may change randomly and the communication links may be noisy. There are many excellent results on the distributed optimization with multiple uncertain factors (\cite{HongM}-\cite{Srivastava}).
Both (sub)gradient noises and random graphs are considered in \cite{HongM}-\cite{SahuAK}. In \cite{HongM}, the local gradient noises are independent with bounded second-order moments and the graph sequence is i.i.d.
In \cite{YiP2}-\cite{LeiJ}, the (sub)gradient measurement noises are martingale difference sequences and their second-order conditional moments depend on the states of the local optimizers. The random graph sequences in \cite{YiP2}-\cite{Srivastava}  are i.i.d. with connected and undirected mean graphs. In addition, additive communication noises are considered in \cite{LeiJ}-\cite{Srivastava}.

In addition to uncertainties in information exchange, different assumptions on the cost functions have been discussed.
 In the most of existing works on the distributed convex optimization, it is assumed that the subgradients are bounded if the local cost
functions are not differentiable (\cite{Srivastava}-\cite{LiuS}) and the subgradients are Lipschitz continuous
   only for the case with  differentiable local cost functions (\cite{SahuAK}-\cite{LeiJ}, \cite{ShiW}-\cite{MokhtariA}).

\subsection{Main contribution}

Though the above works have made the deep research on distributed stochastic optimization, the practical cases may be more complex.
For example, for the LASSO (Least Absolute Shrinkage and Selection Operator) regression problem,
the local cost functions are not differentiable and  subgradients are not bounded as in \cite{NedicA}-\cite{LiuS} .
Besides, the network graphs may change randomly with spatial and temporal dependency (i.e. Both the weights of different edges in the network graphs at the same time instant and the network graphs at  different time instants may be mutually dependent.) rather than i.i.d. graph sequences as in \cite{YiP2}-\cite{Srivastava},
and additive and multiplicative communication noises may co-exist in communication links (\cite{JingWang}).
In summary, in quite a number of important problems, the assumptions required in the existing works are  not satisfied.

Motivated by distributed statistical learning over uncertain communication networks, we study the distributed stochastic convex optimization by networked local optimizers to cooperatively minimize a sum of local convex cost functions. The network is modeled by a sequence of time-varying random digraphs which may be spatially and temporally dependent. The local cost functions are not required to be differentiable, nor do their subgradients need to be bounded. The local optimizers can only obtain the measurement information of the local subgradients with random noises. The additive and multiplicative communication noises co-exist in communication links. We consider the distributed stochastic subgradient optimization algorithm and prove that if the sequence of random digraphs is conditionally balanced and uniformly conditionally jointly connected, then the states of all local optimizers converge to the same global optimal solution almost surely. The main contributions of our paper are listed as follows.

$\textrm{\uppercase\expandafter{\romannumeral1}.}$  The local cost functions in this paper are not required to be differentiable and the subgradients only satisfy the linear growth condition.
The inner product of the subgradients and the error between local optimizers' states and the global optimal solution inevitably exists in  the  recursive inequality of the conditional mean square error. This leads the nonnegative supermartingale convergence theorem  not to be applied directly
 (Lemma 3.1).
  To this end, we estimate  the upper bound of the mean square increasing rate of the local optimizers' states at first (Lemma 3.2). Then we substitute this upper bound  into the Lyapunov function difference inequality of the consensus error, and obtain the estimated convergence rate of  mean square consensus (Lemma 3.3). Further, the estimations of these rates are substituted into the recursive inequality of the conditional mean square error between the states and the global optimal solution. Finally, by properly choosing the  step sizes,  we prove that the states of all local optimizers converge to the same global optimal solution almost surely by the non-negative supermartingale convergence theorem. The key lies in that the algorithm step sizes should be chosen carefully to eliminate the possible increasing effect caused by the linear growth of the subgradients and to balance the rates between achieving consensus and seeking the optimal solution.

$\textrm{\uppercase\expandafter{\romannumeral2}.}$ The structure of the networks among optimizers is modeled by a more general sequence of random digraphs. The sequence of random digraphs is conditionally balanced, and the weighted adjacency matrices are not required to have special statistical properties such as independency with identical distribution,  Markovian switching, or stationarity, etc. The edge weights are also not required to be nonnegative at every time instant. By introducing the concept of conditional digraphs and developing the stochastic Lyapunov method for distributed optimization over non-stationary randomly time-varying networks,  uniformly conditionally joint connectivity condition is established to ensure the convergence of the distributed stochastic optimization algorithms.
The joint connectivity condition for Markovian and deterministic switching graphs, and the connectivity condition on the mean graph for i.i.d. graphs are all special cases of our condition.

$\textrm{\uppercase\expandafter{\romannumeral3}.}$ The co-existence of random graphs, subgradient measurement noises, additive and multiplicative communication noises are considered.  Compared with the case with only a single random factor, the coupling terms of different random factors inevitably affect the mean square difference between optimizers' states and any given vector. What's more, multiplicative noises relying on the relative states between adjacent local optimizers make states, graphs and noises coupled together. It becomes more complex to estimate the mean square upper bound of the local optimizers' states  (Lemma 3.1). We firstly employ the property of conditional independence to deal with the coupling term of different random factors. Then, we prove that the mean square upper bound of the coupling term of states, network graphs and noises depends on the second-order   moment of the difference between optimizers' states and the given vector. Finally, we get an estimate of the mean square increasing rate of the local optimizers' states in terms of the step sizes of the algorithm (Lemma 3.2).



\subsection{Notations and symbols}
$\mathbf{1}_{N}$: $N$ dimensional vector with all ones; $\mathbf{0}_{N}$: $N$ dimensional vector with all zeros;
  $I_{N}$: $N$ dimensional identity matrix; $O_{m \times n}$: $m \times n$ dimensional zero matrix; $A \geq B$: the matrix $A-B$ is positive semi-definite; $A \succeq B$:   the elements of  $A-B$ are all nonnegative; $A\otimes B$: the Kronecker product of matrices $A$ and $B$;  $\operatorname{Tr} (A)$: the trace of matrix $A$; $\lambda_{2}(A)$: the second smallest eigenvalue of a real symmetric matrix $A$; $\operatorname{diag}\left(B_{1}, \ldots, B_{n}\right)$: {the} block diagonal matrix with entries being $B_{1}, \ldots, B_{n}$; $\|A\|$: {the} 2-norm of matrix $A$; $\|A\|_{F}$: {the}  Frobenius-norm of matrix $A$; $E[\xi]$: the  mathematical expectation of random variable $\xi$; $|S|$: the cardinal number of set $S$;
$C_{r}$ inequality: $\left|\sum_{i=1}^n a_i\right|^r\leq\sum_{i=1}^n\left|a_i\right|^r,\  0<r<1$ and $\left|\sum_{i=1}^n a_i\right|^r\leq n^{r-1}  \sum_{i=1}^n\left|a_i\right|^r$,  $r \geq 1,$ $\ a_{i} \in \mathbb{R},\ i=1,\ldots,n$;
$d_{f}(\bar{x})$: a subgradient of the convex function $f$ at $\bar{x}$;
$\partial f(\bar{x}):$ the sub-differential set of the convex function $f$ at $\bar{x}$; for a weighted digraph  $\mathcal{G}=\{\mathcal{V},\mathcal{E}_{\mathcal{G}},\mathcal{A}_{\mathcal{G}}
=[a_{ij}]_{i,j=1}^N\}$,
the in-degree and out-degree of node $i$  are denoted by $\text{deg}_{i}^{in} =\sum^N_{j=1}a_{ij}$ and $\text{deg}_{i}^{out} =\sum^N_{j=1}a_{ji}$, respectively.  If $\text{deg}_{i}^{in}=\text{deg}_{i}^{out}$, $\forall\ i\in \mathcal{V}$, then $\mathcal{G}$ is balanced.

\section{Problem formulation}
\label{sec:problem}

Consider a network with $N$ nodes. Each node represents a local optimizer. The objective of the network is to solve the optimization problem
\begin{equation}\label{model}
\min\limits_{x\in\mathbb{R}^{n}}\ f(x)\triangleq\sum\limits_{i=1}^Nf_{i}(x),
\end{equation}
where each local cost function $f_{i}(\cdot)$: $\mathbb{R}^{n}\rightarrow \mathbb{R}$ is convex and is only known to optimizer $i$.
For the problem \eqref{model},
denote  the optimal value by $f^{*}= \min_{x\in\mathbb{R}^{n}} f(x)$ and the set of optimal solutions by $\mathcal{X}^{*}=\left\{x \in \mathbb{R}^{n}: f(x)=f^{*}\right\}$.

The information structure of the network is described by a sequence of random digraphs $\{\mathcal{G}(k)=\{\mathcal{V},\mathcal{E}_{\mathcal{G}(k)},\mathcal{A}_{\mathcal{G}(k)}\},k\geq0\}$, where $\mathcal{V}=\{1,\ldots,N\}$ is the set of nodes, $\mathcal{E}_{\mathcal{G}(k)}$ is the set of edges at time instant $k$, and $(j,i)\in\mathcal{E}_{\mathcal{G}(k)}$ if and only if the $j$th optimizer can send information to the $i$th optimizer directly.
The neighbourhood of the $i$th optimizer at time instant $k$ is denoted by $\mathcal{N}_i(k)=\{j\in \mathcal{V}|(j,i)\in\mathcal{E}_{\mathcal{G}(k)}\}$. $\mathcal{A}_{\mathcal{G}(k)}=[a_{ij}(k)]_{i,j=1}^N$ is \emph{the generalized weighted adjacency matrix} at time instant $k$, where $a_{ii}(k)=0$, and $a_{ij}(k)\neq0\Leftrightarrow j\in\mathcal{N}_i(k)$, representing the weight on channel $(j, i)$ at time instant $k$.
The generalized Laplacian matrix of the digraph $\mathcal{G}(k)$ is denoted by $\mathcal{L}_{\mathcal{G}(k)}=\mathcal{D}_{\mathcal{G}(k)}-\mathcal{A}_{\mathcal{G}(k)}$, where $\mathcal{D}_{\mathcal{G}(k)}=$ $\operatorname{diag}(\text{deg}_{1}^{in}(k),\ldots,\text{deg}_{N}^{in}(k))$.
Let $\widetilde{\mathcal{G}}(k):=\{\mathcal{V},\mathcal{E}_{\widetilde{\mathcal{G}}(k)},\mathcal{A}_{\widetilde{\mathcal{G}}(k)}\}$ be the reversed digraph of $\mathcal{G}(k)$, where $(i,j)\in\mathcal{E}_{\widetilde{\mathcal{G}}(k)}$ if and only if $(j,i)\in\mathcal{E}_{\mathcal{G}(k)}$ and $\mathcal{A}_{\widetilde{\mathcal{G}}(k)}=\mathcal{A}^T_{\mathcal{G}(k)}$.
Let $\widehat{\mathcal{G}}(k):=\Big\{\mathcal{V},\mathcal{E}_{\mathcal{G}(k)}\cup\mathcal{E}_{\widetilde{\mathcal{G}}(k)},
\frac{\mathcal{L}_{\mathcal{G}(k)}+\mathcal{L}^T_{\mathcal{G}(k)}}{2}\Big\}$ be the symmetrized graph of $\mathcal{G}(k)$.
Denoted $\widehat{\mathcal{L}}_{\mathcal{G}(k)}=\frac{\mathcal{L}_{\mathcal{G}(k)}+\mathcal{L}^T_{\mathcal{G}(k)}}{2}$.

We consider the  distributed stochastic subgradient algorithm
\begin{equation}\label{algorithm}
x_{i}(k+1)=x_{i}(k)+c(k)\sum\limits_{j\in\mathcal{N}_{i}(k)}a_{ij}(k)(y_{ji}(k)-x_{i}(k))-\alpha(k)\tilde{d}_{f_{i}}(x_{i}(k)),\ k\geq0,\ i\in \mathcal{V},
\end{equation}
where $x_{i}(k)\in \mathbb{R}^{n}$ is the state of the $i$th optimizer at time instant $k$, representing its local estimate of the global optimal solution to the  problem \eqref{model}; $x_{i}(0)\in\mathbb{R}^{n}$, $i=1,2,\ldots, N$ are the initial values; $c(k)$ and $\alpha(k)$ are the time-varying step sizes; $y_{ji}(k)\in \mathbb{R}^{n}$  denotes the measurement of the neighbouring optimizer $j$'s state by  optimizer $i$ at time instant $k$, which is given by
\begin{equation}\label{algorithm1}
y_{ji}(k)=x_{j}(k)+\psi_{ji}(x_{j}(k)-x_{i}(k))\xi_{ji}(k),\ j\in\mathcal{N}_{i}(k),\ i\in\mathcal{V},
\end{equation}
where $\{\xi_{ji}(k),k\geq0\}$ is the sequence of communication noises in channel $(j,i)$ and $\psi_{ji}(\cdot):\mathbb{R}^{n}\rightarrow \mathbb{R}$ is the noise intensity function. Let $\tilde{d}_{f_{i}}(x_{i}(k))$  denote the noisy measurement of the subgradient $d_{f_{i}}(x_{i}(k))$ by optimizer $i$, i.e.
\begin{equation}\label{algorithm2}
\tilde{d}_{f_{i}}(x_{i}(k))=d_{f_{i}}(x_{i}(k))+\zeta_{i}(k),
\end{equation}
where $\{\zeta_{i}(k),k\geq0\}$ is the measurement noise sequence.

Denote $X(k)=[x^T_{1}(k),\ldots,x^T_{N}(k)]^T$, $\xi(k)=[\xi^T_{11}(k)$, $\ldots,$  $\xi^T_{N1}(k);\ldots;$ $\xi^T_{1N}(k),\ldots,\xi^T_{NN}(k)]^T$ and $\zeta(k)=[\zeta^T_{1}(k),\ldots, \zeta^T_{N}(k)]^T$, where $\xi_{ji}(k)\equiv\mathbf{0}_{n}$ if $j\notin \mathcal{N}_{i}(k)$ for all $k\geq0$.
\vskip 0.2cm
We have the following assumptions.

\vskip 0.2cm
\begin{assumption}
\label{subgradient}
 (Linear growth condition)  There exist nonnegative constants $\sigma_{di}$ and $C_{di}$, such that $\|d_{f_{i}}(x)\|\leq\sigma_{di}\|x\|+C_{di}, \forall \ x \in \mathbb{R}^{n},  d_{f_{i}}(x)\in\partial f_{i}(x)$, $i=1,\cdots,N$.
\end{assumption}

\vskip 0.2cm

\begin{assumption}
\label{noise1}
There exists a $\sigma$-algebra flow $\{\mathcal{F}(k), k\geq 0\}$, such that
$\{\mathcal{A}_{\mathcal{G}(k)}$, $\mathcal{F}(k)$, $k\geq0\}$ is an adapted sequence. The communication noise process $\{\xi(k),\mathcal{F}(k)$, $k\geq0\}$ is a vector-valued martingale difference and there exists a positive constant $C_{\xi}$ such that $\sup_{k\geq0}E\big[\|\xi(k)\|^2\big|$ $\mathcal{F}(k-1)\big]\leq C_{\xi}$ a.s. For any given time instant $k$, $\sigma\{\xi(k)\}$ and $\sigma\{\mathcal{A}_{\mathcal{G}(k)}$, $\mathcal{A}_{\mathcal{G}(k+1)},\ldots\}$ are conditionally independent given $\mathcal{F}(k-1)$.
\end{assumption}

\vskip 0.2cm

\begin{assumption}
\label{noise2}
For the $\sigma$-algebra flow given by \Cref{noise1}, the subgradient measurement noise sequence $\{\zeta(k),\mathcal{F}(k),$ $k\geq0\}$ is a vector-valued martingale difference. There exist nonnegative constants $\sigma_{\zeta}$ and $C_{\zeta}$ such that $E\left[\|\zeta(k)\|^2|\mathcal{F}(k-1)\right]\leq \sigma_{\zeta}\|X(k)\|^2+C_{\zeta}$ a.s.  For any given time instant $k$, $\sigma\{\zeta(k)\}$ and $\sigma\{\mathcal{A}_{\mathcal{G}(k)},\mathcal{A}_{\mathcal{G}(k+1)},\ldots\}$ are conditionally independent given $\mathcal{F}(k-1)$.
\end{assumption}

\vskip 0.2cm
Note that Koloskova et al. (\cite{Koloskova})  studied a similar problem as the problem (\ref{model}) in the sense that
 $f(x)=\frac{1}{N} \sum_{i=1}^N f_i(x)$ with $ f_i(x)= E[ F_i\left(x, \eta_i\right)]$.
Though relative weak assumptions are used in \cite{Koloskova}  compared with prior works,  they are not weaker than ours.

\vskip 0.2cm

\begin{assumption}
\label{intensity}
There exist nonnegative constants $\sigma_{ji}$ and $b_{ji},i,j\in\mathcal{V}$, such that $|\psi_{ji}(x)|\leq\sigma_{ji}\|x\|+b_{ji},\ \forall\  x\in \mathbb{R}^{n}$.
\end{assumption}

\vskip 0.2cm

\begin{remark}
\Cref{intensity} means that the measurement model \eqref{algorithm1} covers both cases with additive and multiplicative measurtment/communication noises.
Dithered quantization in distributed parameter estimation
 leads to additive communication noises (\cite{SKar}).
If the logarithmic quantization is used, then the quantized measurement of $x_{j}(k)-x_{i}(k)$ by agent $i$ is given by $x_{j}(k)-x_{i}(k)+(x_{j}(k)-x_{i}(k))\xi_{ji}(k)$, where  $\xi_{ji}(k)$ can be regarded as the white noise (\cite{Carli}).

\end{remark}

\vskip 0.2cm

\begin{assumption}
\label{countable}
The set $\mathcal{X}^{*}$   is non-empty and countable.
\end{assumption}

\vskip 0.2cm
We call $E\left[\mathcal{A}_{\mathcal{G}(k)}|\mathcal{F}(m)\right],m\leq k-1,$ the conditional generalized weighted adjacency matrix of $\mathcal{A}_{\mathcal{G}(k)}$ with respect to $\mathcal{F}(m)$, and call its associated random graph the conditional digraph of $\mathcal{G}(k)$ with respect to $\mathcal{F}(m)$, denoted by $\mathcal{G}(k|m)$, i.e. $\mathcal{G}(k|m)=\{\mathcal{V},E\left[\mathcal{A}_{\mathcal{G}(k)}|\mathcal{F}(m)\right]\}$
(\cite{LiT}).  Denote $\lambda_{k}^h:=\lambda_{2}(\sum^{k+h-1}_{i=k}
E[\widehat{\mathcal{L}}_{\mathcal{G}(i)}\\ |\mathcal{F}(k-1)])$. We
consider the sequence of balanced conditional digraphs as follows.
\vskip 0.2cm
\begin{assumption}
\label{stochasticgraph}
 The  random graph sequence $\{\mathcal{G}(k),k\geq0\}\in\Gamma_{1}$, where
$$\Gamma_{1}=\Big\{\{\mathcal{G}(k),k\geq0\}|E\left[\mathcal{A}_{\mathcal{G}(k)}
|\mathcal{F}(k-1)\right]
\succeq O_{N\times N}\ \mbox{a.s.},\ \mathcal{G}(k|k-1) \text{ is balanced}\ \mbox{a.s.},\ k\geq0\Big\}.$$
\end{assumption}

\vskip 0.2cm



We give an example satisfying the above assumptions. The local cost function $f_{i}$ is the risk function associated with the $i$th optimizer's local data, i.e.
$
f_{i}(x)=E\left[\ell_{i}(x;\mu_{i})\right]+R_{i}(x),
$
where $\ell_{i}(\cdot\ ;\cdot)$ is a loss function {which} is convex with respect to its first argument, $\mu_{i}$ is the data sample of optimizer $i$, and $R_{i}:\mathbb{R}^n\rightarrow\mathbb{R}$ is a convex regularization term  (\cite{THastie}).
An example of $L_{2}$-regularization is given in \cite{JakoveticD1}, for which \Cref{subgradient}  naturally holds.
If the quadratic loss is considered with $L_{1}$-regularization,
then it is called the LASSO regression problem
\begin{equation}\label{lasso}
\min\limits_{x\in \mathbb{R}^{n}}\sum_{i=1}^N\left(E\left[\ell_{i}(x;u_{i}(k),p_{i}(k))\right]+\kappa \|x\|_{1}\right),
\end{equation}
where
$
\ell_{i}(x;u_{i}(k),p_{i}(k))=\frac{1}{2}\|p_{i}(k)-u^T_{i}(k)x\|^2,
$
$
p_{i}(k)=u^T_{i}(k)x_{0}+\nu_{i}(k),
$
$x_{0}\in \mathbb{R}^n$ is the unknown parameter, $u_{i}(k)\in \mathbb{R}^n$ is the regression vector   and $\nu_{i}(k)$ is the local measurement noise. Random sequences $\{u_{i}(k),k\geq0\}$ and $\{\nu_{i}(k),k\geq0\}$ are mutually independent i.i.d. Gaussian sequences with distributions $N(\mathbf{0}_{n},R_{u,i})$ and  $N(0,\sigma^{2}_{\nu,i})$. It can be verified that Assumptions \ref{subgradient}-\ref{noise2} hold. See Appendix D  for  details.

\vskip 0.2cm
 For the LASSO regression problem \eqref{lasso},  one can also employ a distributed proximal gradient (DPG) algorithm
 \begin{equation}
\begin{cases}
    q_i (k)  =c(k)\sum\limits_{j\in\mathcal{N}_{i}(k)}a_{ij}(k)(m_{ji}(k)-z_{i}(k)), \notag\\
    \hat{q}_i (k) =q_i (k) -\alpha(k)\tilde{d}_{i}(q_i (k) ), \notag \\
    z_i (k+1) = \underset{z \in \mathbb{R}}{\operatorname{argmin}}\left\{\kappa |z|+\frac{1}{2 \alpha(k)}\|z-\hat{q}_i (k) \|^2\right\},\ k\geq0,\ i\in \mathcal{V},\notag
\end{cases}
\end{equation}
where $z_{i}(k)\in \mathbb{R} $ is the state  of the $i$th optimizer at time instant $k$, representing its local estimate  of the global optimal solution to  (\ref{lasso}),  $m_{ji}(k)=z_{j}(k)+(0.1(z_{j}(k)-z_{i}(k))+0.1)\xi_{ji}(k)$ and $\tilde{d}_{i}(q_i (k) )=  q_i (k) -x_{0} +(u_{i}(k)u _{i}(k)-1)(q_i (k) -x_{0})-u_{i}(k)\nu_{i}(k)$, $\{\xi_{ji}(k),i,j=1,\ldots,N,k\geq0\}$ are independent standard normally distributed random variables,  $\{u_{i}(k),k\geq0\}$ and $\{\nu_{i}(k),k\geq0\}$ are mutually independent i.i.d.   sequences with standard normal distribution.
The elapsed time of DPG  and algorithm  (2)-(4) to achieve  the same precision may be roughly equivalent. This is mainly because the subproblem $\min_{z\in \mathbb{R}^{n}} (\|z\|_{1}+\frac{1}{2\alpha}\|z-y\|^{2})$, $\alpha >0$, $ \forall \ y \in \mathbb{R}^{n}$ in DPG can be solved in a closed form (\cite{W. Shi}). However,  for many scenarios,  the subproblem may not have an analytical solution (\cite{Schmidt}), and may require more time and computational resources to be solved. Generally, whether to use DSG  or DPG may depend on specific problems.

\vskip 0.2cm

We consider the following conditions of algorithm step sizes, which are required to hold simultaneously.

\begin{itemize}[\itemindent=8pt]
  \global\setlength{\labelsep}{3pt}
  \item[(C1)] $c(k)\downarrow0$,  $\alpha(k)\downarrow0$, $\sum_{k=0}^{\infty}\alpha(k)=\infty, \ \sum_{k=0}^{\infty}\alpha^2(k)<\infty,\ \sum_{k=0}^{\infty}c^2(k)<\infty$, $c(k)=O(c(k+1))$, $k\rightarrow\infty$;
  \item[(C2)] $\lim\limits_{k\rightarrow\infty}\dfrac{c^2(k)}{\alpha(k)}=0;$
  \item[(C3)] for any given positive constant $C$, $\sum_{k=0}^{\infty}\alpha(k)\exp(-C\sum_{t=0}^k\alpha(t))<\infty$;
  \item[(C4)] for any given positive constant $C$,  $\lim_{k\rightarrow\infty}\frac{\alpha(k)\exp(C\sum_{t=0}^k\alpha(t))}{c(k)}=0$;
  \item[(C5)] for any given positive constant $C$, the sequence  $\{\alpha(k)\exp(C\sum_{t=0}^k\alpha(t)),k\geq0\}$  decreases monotonically for   sufficiently large $k$ and
      $\alpha(k)\exp(C\sum_{t=0}^k\alpha(t)) -\alpha(k+1)\exp(C\sum_{t=0}^{k+1}\alpha(t))$
      =$O(\alpha^{2}(k)$ $\times \exp(2C\sum_{t=0}^k\alpha(t)))$.
\end{itemize}\vskip 0.2cm

\vskip 0.2cm

\begin{remark}\label{examplealphack}
 There exist step sizes satisfying Conditions (C1)-(C5). For example, $\alpha(k)=\frac{\alpha_{1}}{(k+3)\ln^{\tau_{1}}(k+3)},\\ \ \tau_{1}\in(0,1],\ c(k)=\frac{\alpha_{2}}{(k+3)^{\tau_{2}}\ln^{\tau_{3}}(k+3)},\ \tau_{2}\in(0.5,1),\ \tau_{3}\in(-\infty,1],$
where $\alpha_{1},\alpha_{2}$ are given positive constants. See Appendix E for further details.
\end{remark}

\section{Main results}
\label{sec:main}

Let $D(k)=\operatorname{diag}(a_{1}^T(k),\ldots,a_{N}^T(k))\otimes I_{n}$, {where} $a_{i}^T(k)$ {is} the $i$th row of $\mathcal{A}_{\mathcal{G}(k)}$, {$i=1,...,N$}, $\psi_{i}(k)=\operatorname{diag}(\psi_{1i}(x_{1}(k)-x_{i}(k)),\ldots,\psi_{Ni}(x_{N}(k)-x_{i}(k)))$, {$i=1,...,N$},
$\Psi(k)=\operatorname{diag}(\psi_{1}(k),\ldots,$ $ \psi_{N}(k))\otimes I_{n}$, {and} $d(k)=[d^T_{f_{1}}(x_{1}(k)),\ldots,d^T_{f_{N}}(x_{N}(k))]^T$. Let $\sigma_{d}=\max_{1\leq i\leq N}\{\sigma_{di}\}$, $C_{d}=\max_{1\leq i\leq N}\{C_{di}\}$, $\sigma=\max_{1\leq i,j\leq N}\{\sigma_{ji}\}$, and $b=\max_{1\leq i,j\leq N}\{b_{ji}\}$.

Rewrite the algorithm \eqref{algorithm}-\eqref{algorithm2} in a compact form as
\begin{equation}\label{compact}
X(k+1)=((I_{N}-c(k)\mathcal{L}_{\mathcal{G}(k)})\otimes I_{n})X(k)+c(k)D(k)\Psi(k)\xi(k)-\alpha(k)(d(k)+\zeta(k)).
\end{equation}

Denote $\bar{x}(k):=\frac{1}{N}\sum_{i=1}^Nx_{i}(k)$, the consensus error vector $\delta(k):=(P\otimes I_{n})X(k)$ and the Lyapunov function $V(k):=\|\delta(k)\|^2$, where $P=I_{N}-\frac{1}{N}\mathbf{1}_{N}\mathbf{1}_{N}^T$.  By $(\mathcal{L}_{\mathcal{G}(k)}\otimes I_{n})(\mathbf{1}_{N}\mathbf{1}_{N}^T\otimes I_{n})=O_{nN\times nN}$, we have $(\mathcal{L}_{\mathcal{G}(k)}\otimes I_{n})X(k)=(\mathcal{L}_{\mathcal{G}(k)}\otimes I_{n})\delta(k)$. Therefore,
$
(P\otimes I_{n})((I_{N}-c(k)\mathcal{L}_{\mathcal{G}(k)})\otimes I_{n})X(k)
=((I_{N}-c(k)P\mathcal{L}_{\mathcal{G}(k)})\otimes I_{n})\delta(k),
$
which together with \eqref{compact} gives
\begin{align}
\label{error}
\delta(k+1)
=&((I_{N}-c(k)P\mathcal{L}_{\mathcal{G}(k)})\otimes I_{n})\delta(k)+(P\otimes I_{n})(c(k)D(k)\Psi(k)\xi(k)-\alpha(k)\zeta(k))
-\alpha(k)(P\otimes I_{n})d(k).
\end{align}

\subsection{The convergence of the algorithm \eqref{algorithm}-\eqref{algorithm2}}

In the following theorem, we will prove the convergence of the algorithm \eqref{algorithm}-\eqref{algorithm2}.

\vskip 0.2cm

\begin{theorem}\label{theorem3.1}
For the  convex optimization problem \eqref{model} and the algorithm \eqref{algorithm}-\eqref{algorithm2}, assume that

(a) Assumptions \ref{subgradient}-\ref{stochasticgraph} and Conditions (C1)-(C5) hold;

(b) there {exists} a positive integer $h$, positive constants $\theta$ and $\rho_{0}$, such that

\ \ \ (b.1) $\inf_{m\geq0}\lambda_{mh}^h\geq\theta$\ \ a.s.;

\ \ \ (b.2) $\sup_{k\geq0}\left[E\left[\|\mathcal{L}_{\mathcal{G}(k)}\|^{2^{\max\{h,2\}}}\Big|
\mathcal{F}(k-1)\right]\right]^{\frac{1}{2^{\max\{h,2\}}}}\leq\rho_{0}$\ \ a.s.\\
Then, there exists a random vector $z^{*}$ taking values in $\mathcal{X}^{*}$, such that
$\lim_{k\rightarrow\infty}x_{i}(k)=z^{*}$, $i=1, \cdots, N$ a.s.
\end{theorem}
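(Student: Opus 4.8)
The plan is a two--time--scale analysis. Following \eqref{error}, write $X(k)=\mathbf 1_N\otimes\bar x(k)+\delta(k)$ with $\bar x(k)=\tfrac1N\sum_{i=1}^N x_i(k)$ and $\delta(k)=(P\otimes I_n)X(k)$, and regard the interaction step (scaled by $c(k)$) as fast and the subgradient step (scaled by $\alpha(k)$) as slow. The argument proceeds in three stages: (i) a mean--square \emph{divergence--rate} bound for $\|X(k)\|^2$; (ii) a mean--square \emph{consensus} estimate $E[\|\delta(k)\|^2]\to0$, obtained by feeding (i) into the Lyapunov recursion for $V(k)=\|\delta(k)\|^2$; (iii) the a.s.\ convergence $\bar x(k)\to z^*\in\mathcal X^*$ via the non--negative almost--supermartingale (Robbins--Siegmund) theorem together with $V(k)\to0$ a.s., after which $\|x_i(k)-z^*\|\le\|\delta(k)\|+\|\bar x(k)-z^*\|\to0$.

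\emph{Stage (i).} From \eqref{compact} I condition on $\mathcal F(k-1)$ and expand $\|X(k+1)\|^2$. Since $\{\xi(k)\}$ and $\{\zeta(k)\}$ are martingale differences and, by \Cref{noise1} and \Cref{noise2}, $\sigma\{\xi(k)\}$ and $\sigma\{\zeta(k)\}$ are conditionally independent of $\sigma\{\mathcal A_{\mathcal G(k)},\mathcal A_{\mathcal G(k+1)},\dots\}$ given $\mathcal F(k-1)$, every cross term carrying $\xi(k)$ or $\zeta(k)$ has zero conditional mean. Because $\{\mathcal G(k)\}\in\Gamma_1$, $E[\hat{\mathcal L}_{\mathcal G(k)}\,|\,\mathcal F(k-1)]$ is symmetric positive semidefinite and $\mathbf 1_N^T E[\mathcal L_{\mathcal G(k)}\,|\,\mathcal F(k-1)]=\mathbf 0_N^T$, whence $E[X(k)^T(\mathcal L_{\mathcal G(k)}\otimes I_n)X(k)\,|\,\mathcal F(k-1)]=\delta(k)^T(E[\hat{\mathcal L}_{\mathcal G(k)}\,|\,\mathcal F(k-1)]\otimes I_n)\delta(k)\ge0$, so the $-2c(k)$ term only helps. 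Bounding the remaining quadratic terms by the linear growth of the subgradients (\Cref{subgradient}), of the noise intensity (\Cref{intensity}) and of the conditional second moment of $\zeta(k)$ (\Cref{noise2}), by $\sup_k E[\|\xi(k)\|^2\,|\,\mathcal F(k-1)]\le C_\xi$ (\Cref{noise1}) and by (b.2) (which also controls $E[\|D(k)\|^2\,|\,\mathcal F(k-1)]$), gives
\begin{equation*}
E[\,\|X(k+1)\|^2\mid\mathcal F(k-1)\,]\le\bigl(1+C_1\alpha(k)+C_2(\alpha^2(k)+c^2(k))\bigr)\|X(k)\|^2+C_3(\alpha^2(k)+c^2(k)).
\end{equation*}
Since $\sum_k\alpha(k)=\infty$ but $\sum_k(\alpha^2(k)+c^2(k))<\infty$ by (C1), a discrete Gr\"onwall estimate yields $E[\|X(k)\|^2]=O\!\big(\exp(C_1\sum_{t=0}^{k}\alpha(t))\big)$, which by \eqref{alphack} is sub--polynomial.

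\emph{Stage (ii).} From \eqref{error} I iterate $V$ over a window $[mh,(m+1)h)$: expanding $\|\delta((m+1)h)\|^2$ produces the matrix products $\prod_i\bigl((I_N-c(i)P\mathcal L_{\mathcal G(i)})\otimes I_n\bigr)$ and their cross--products with the noise/subgradient injections, and the exponent $2^{\max\{h,2\}}$ in (b.2) is exactly what makes all of these conditionally integrable. The windowed contraction $E[\|\delta((m+1)h)\|^2\,|\,\mathcal F(mh-1)]\le(1-\tfrac\theta2 c((m+1)h-1)+o(c))\|\delta(mh)\|^2+\text{(driving)}$ comes from (b.1), i.e.\ $\lambda_{mh}^h\ge\theta$; the driving term is $O\!\big((c^2+\alpha^2/c)(\|X\|^2+1)\big)$ over the window, the $\alpha^2/c$ part arising from a Young's--inequality split of the cross term $-\alpha(i)\delta(\cdot)^T(P\otimes I_n)d(i)$ estimated through \Cref{subgradient}. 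Substituting $E[\|X(k)\|^2]=O(\exp(C_1\sum\alpha))$ from Stage (i) and running a recursive estimate with the decreasing weights $c(mh)$ (here (C2) and (C5) enter) gives $E[V(k)]=O(\beta(k))$ for an explicit $\beta(k)\to0$.

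\emph{Stage (iii).} Fix $z\in\mathcal X^*$. From the averaged dynamics, which by \eqref{compact} read $\bar x(k+1)=\bar x(k)-\tfrac{c(k)}N(\mathbf 1_N^T\mathcal L_{\mathcal G(k)}\otimes I_n)\delta(k)+\tfrac{c(k)}N(\mathbf 1_N^T\otimes I_n)D(k)\Psi(k)\xi(k)-\tfrac{\alpha(k)}N(\mathbf 1_N^T\otimes I_n)(d(k)+\zeta(k))$, the noise cross terms and the term $(\mathbf 1_N^T\mathcal L_{\mathcal G(k)}\otimes I_n)\delta(k)$ all have zero conditional mean (again by $\Gamma_1$ and \Cref{noise1}--\Cref{noise2}), while convexity of each $f_i$ with the subgradient inequality at $x_i(k)$ and at $\bar x(k)$ gives $\langle\tfrac1N\sum_i d_{f_i}(x_i(k)),\bar x(k)-z\rangle\ge\tfrac1N(f(\bar x(k))-f^*)-\tfrac CN(\|X(k)\|+1)\|\delta(k)\|$. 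Hence
\begin{equation*}
E[\,\|\bar x(k+1)-z\|^2\mid\mathcal F(k-1)\,]\le(1+C\alpha^2(k)+Cc^2(k))\|\bar x(k)-z\|^2-\tfrac{2\alpha(k)}N(f(\bar x(k))-f^*)+\nu_k,
\end{equation*}
where $\nu_k$ collects $O(\alpha(k)(\|X(k)\|+1)\|\delta(k)\|)$, $O(c^2(k)\|\mathcal L_{\mathcal G(k)}\|^2\|\delta(k)\|^2)$ and $O((\alpha^2(k)+c^2(k))(\|X(k)\|^2+1))$. By Stages (i)--(ii), Cauchy--Schwarz, (b.2) and the step--size conditions (C3)--(C5) one checks $\sum_k E[\nu_k]<\infty$, hence $\sum_k\nu_k<\infty$ a.s.; with $\sum_k(\alpha^2(k)+c^2(k))<\infty$ the Robbins--Siegmund theorem gives that $\|\bar x(k)-z\|^2$ converges a.s.\ and $\sum_k\alpha(k)(f(\bar x(k))-f^*)<\infty$ a.s. Since $\sum_k\alpha(k)=\infty$, $\liminf_k f(\bar x(k))=f^*$ a.s.; as $\{\bar x(k)\}$ is a.s.\ bounded (it converges relative to a fixed $z$) and $f$ is continuous, along a subsequence $\bar x(k_l)\to\bar x^*$ with $\bar x^*\in\mathcal X^*$. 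Using the countability of $\mathcal X^*$ (\Cref{countable}) I intersect the above probability--one events over $z\in\mathcal X^*$, so on one event $\|\bar x(k)-z\|^2$ converges for every $z\in\mathcal X^*$; applied to $z=\bar x^*$ this forces $\|\bar x(k)-\bar x^*\|^2\to\lim_l\|\bar x(k_l)-\bar x^*\|^2=0$, i.e.\ $\bar x(k)\to z^*:=\bar x^*$ a.s. In parallel, a windowed Robbins--Siegmund argument on $V(mh)$ (driving terms a.s.\ summable by Stages (i)--(ii) and (C2)) gives $V(mh)\to0$ a.s.\ and $\sum_m c(mh)V(mh)<\infty$ a.s., which with $\sum_m c(mh)=\infty$ pins the limit at $0$; interpolating over each length--$h$ block (using $c(k)\|\mathcal L_{\mathcal G(k)}\|\to0$ a.s.\ from (b.2), (C1) and Borel--Cantelli) yields $V(k)\to0$ a.s. Combining, $\lim_k x_i(k)=z^*\in\mathcal X^*$ a.s. The main obstacle is precisely Stages (ii)--(iii): the difference inequalities for $V(k)$ and for $\|\bar x(k)-z\|^2$ carry coupling terms that simultaneously mix the random graph (through $\|\mathcal L_{\mathcal G(k)}\|$ and the length--$h$ products $\prod(I_N-c(i)P\mathcal L_{\mathcal G(i)})$), the multiplicative communication noise $\Psi(k)$ (depending on the relative states, hence on $\delta(k)$), and the unbounded $\|X(k)\|$ (subgradients are only linearly growing); the conditional--independence structure in \Cref{noise1}--\Cref{noise2}, the balancedness built into $\Gamma_1$, the high--moment bound (b.2) and the finely tuned Conditions (C3)--(C5) are exactly what make each such term either conditionally mean--zero or summable, and since $\sum_k E[V(k)]$ need not be finite the step from mean--square to almost--sure consensus must go through the $h$--step windowed supermartingale argument rather than a direct Borel--Cantelli bound.
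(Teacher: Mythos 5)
Your proposal is correct and follows essentially the same route as the paper: Stage (i) is the paper's Lemma 3.3 (mean-square growth bound $E[\|X(k)\|^2]=O(\exp(C\sum\alpha))$), Stage (ii) is Lemma 3.4 ($h$-step windowed contraction via (b.1)--(b.2) giving mean-square and a.s.\ consensus), and Stage (iii) mirrors the paper's final argument (Robbins--Siegmund, $\liminf f(\bar x(k))=f^*$, countability of $\mathcal X^*$ to intersect the convergence events, and a subsequence/continuity argument). The only cosmetic difference is that you run the supermartingale recursion on $\|\bar x(k)-z\|^2$ while the paper runs it on $\|X(k)-\mathbf 1_N\otimes z\|^2$; both rest on the same estimates and both require the a.s.\ consensus from Stage (ii).
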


\vskip 0.2cm

The proof of \Cref{theorem3.1} needs the following three lemmas whose proofs are put in Appendix A.

\vskip 0.2cm

\begin{lemma}\label{lemma2}
For the  convex optimization problem \eqref{model} and the algorithm \eqref{algorithm}-\eqref{algorithm2}, if Assumptions \ref{subgradient}-\ref{intensity} and Assumption \ref{stochasticgraph} hold, and there exists a positive constant $\rho_{0}$, such that $\sup_{k\geq0}\Big[E\big[\|\mathcal{L}_{\mathcal{G}(k)}\|^2\Big|
\mathcal{F}(k-1)\big]\Big]^{\frac{1}{2}}\leq\rho_{0}$ a.s.,  then the following inequalities hold.

(i)\begin{align}
& E[V(k+1)|\mathcal{F}(k-1)]\notag\\
\leq&\big(1+2c^2(k)(\rho^2_{0}+8\sigma^2C_{\xi}\rho_{1})
\big)V(k)+8b^2C_{\xi}\rho_{1}c^2(k)+2\alpha^2(k)(2\sigma_{\zeta}
+3\sigma^2_{d})\|X(k)\|^2\notag\\
&+2\alpha^2(k)
\left(2C_{\zeta}+3NC^2_{d}\right)+2\alpha(k)\left\|d(k)\right\|\left\|\delta(k)\right\|,\ \forall\ k\geq0 \text{ a.s.};\label{itconvk}
\end{align}

(ii)
\begin{align}
&E\left[\|X(k+1)-\mathbf{1}_{N}\otimes x\|^2|\mathcal{F}(k-1)\right]\notag\\
\leq&\left(1+2c^2(k)(\rho^2_{0}+8\sigma^2C_{\xi}\rho_{1})
+4\alpha^2(k)\left(2\sigma_{\zeta}+3\sigma^2_{d}\right)\right)\|X(k)-\mathbf{1}_{N}\otimes x\|^2+8b^2C_{\xi}\rho_{1}c^2(k)\notag\\
&+2\alpha^2(k)\big(2C_{\zeta}+3NC^2_{d}+2(3\sigma^2_{d}
+2\sigma_{\zeta})N\|x\|^2\big)-2\alpha(k)d^T(k)(X(k)-\mathbf{1}_{N}\otimes x),\ \forall\ x\in \mathbb{R}^n,\ k\geq0 \text{ a.s.}, \label{itconvk1}
\end{align}
where $\rho_{1}$ is a positive constant satisfying $\sup_{k\geq0}E\left[|\mathcal{E}_{\mathcal{G}(k)}| \max_{1\leq i,j\leq N}a_{ij}^2(k)\big|\mathcal{F}(k-1)\right]\leq\rho_{1}$ a.s.
\vskip 2mm
\end{lemma}
\vskip 0.2cm

\vskip 0.2cm

\begin{lemma}\label{lemma3}
For the  convex optimization problem \eqref{model} and the algorithm \eqref{algorithm}-\eqref{algorithm2},  if Assumptions \ref{subgradient}-\ref{intensity}, Assumption \ref{stochasticgraph} and Conditions (C1)-(C3) hold, and there exists a positive constant $\rho_{0}$, such that $\sup_{k\geq0}
\Big[E\big[\|\mathcal{L}_{\mathcal{G}(k)}\|^2\big|\mathcal{F}(k-1)\big]\Big]^{\frac{1}{2}}\leq\rho_{0}$ a.s., then  there exists a constant $C_{1}>0$, such that
\begin{align}
E\left[\|X(k)\|^2\right]\leq C_{1}\beta(k),\ \forall\ k\geq0,\label{incresecon}
\end{align}
where $\beta(k)=\exp(C_{0}\sum_{t=0}^k\alpha(t))$, and $C_{0}=1+2\rho^2_{0}+16\sigma^2C_{\xi}\rho_{1}+8\sigma_{\zeta}+14\sigma^2_{d}$.
\end{lemma}
\vskip 0.2cm

\vskip 0.2cm

\begin{lemma}\label{lemmac}
For the  convex optimization problem \eqref{model} and the algorithm \eqref{algorithm}-\eqref{algorithm2}, assume that

(a) Assumptions \ref{subgradient}-\ref{intensity}, Assumption \ref{stochasticgraph} and Conditions (C1)-(C5) hold;

(b) there {exists} a positive integer $h$, positive constants $\theta$ and $\rho_{0}$, such that

\ \ \ (b.1) $\inf_{m\geq0}\lambda_{mh}^h\geq\theta$\ \ a.s.;

\ \ \ (b.2) $\sup_{k\geq0}\left[E\left[\|\mathcal{L}_{\mathcal{G}(k)}\|^{2^{\max\{h,2\}}}\Big|
\mathcal{F}(k-1)\right]\right]^{\frac{1}{2^{\max\{h,2\}}}}\leq\rho_{0}$\ \ a.s.\\
%
%
%
Then,
\begin{itemize}
\item[(i)] there exists a constant $C_{2}>0$, such that
$E[V(k)]\leq C_{2}\beta^{-3}(k),\ k\geq0,
$
where $\beta(k)$ is given in \Cref{lemma3};
\item[(ii)]  $V(k)$ vanishes almost surely, i.e. $
V(k)\rightarrow0,\ k\rightarrow\infty\ \mbox{a.s.}
$
\end{itemize}
\end{lemma}

\noindent
{\bfseries Proof of \Cref{theorem3.1}: }
By Assumption \ref{subgradient} and the convexity of local cost functions, we have, for any $x^{*}\in\mathcal{X}^{*},$
\begin{align}
& -d^T_{f_{i}}(x_{i}(k))(x_{i}(k)-x^{*})\notag\\
\leq& f_{i}(x^{*})-f_{i}(\bar{x}(k))+f_{i}(\bar{x}(k))-f_{i}(x_{i}(k))\notag\\
\leq& f_{i}(x^{*})-f_{i}(\bar{x}(k))+d^T_{f_{i}}(\bar{x}(k))(\bar{x}(k)-x_{i}(k))\notag\\
\leq &f_{i}(x^{*})-f_{i}(\bar{x}(k))+(\sigma_{d}\|\bar{x}(k)
\|+C_{d})\|\bar{x}(k)-x_{i}(k)\|.\notag
\end{align}
Then, by  H\"{o}lder inequality,  we have
\begin{align}
  &-2\alpha(k)d^T(k) (X(k)-\mathbf{1}_{N}\otimes x^{*})\notag\\
 \leq  &  2\alpha(k)(\sqrt{N}(\sigma_{d}\|\bar{x}(k)
\|+C_{d}) \|\delta(k)\| -(f(\bar{x}(k))-f(x^{*}))).\notag
 \end{align}
Then, by \Cref{lemma2} (ii), we have, for any $x^{*}\in\mathcal{X}^{*},$
\begin{align}
&E[\|X(k+1)-\mathbf{1}_{N}\otimes x^{*}\|^2|\mathcal{F}(k-1)]\notag\\
\leq&(1+2c^2(k)(\rho^2_{0}+8\sigma^2C_{\xi}\rho_{1})
+4\alpha^2(k)(2\sigma_{\zeta}+3\sigma^2_{d}))  \|X(k)-\mathbf{1}_{N}\otimes x^{*}\|^2
+8b^2C_{\xi}\rho_{1}c^2(k)\notag\\
&+2\alpha^2(k)\big(2C_{\zeta}+3NC^2_{d}+2(3\sigma^2_{d}
+2\sigma_{\zeta})N\| x^{*}\|^2\big)+2\sqrt{N}\alpha(k)\|\delta(k)\| (\sigma_{d}\|\bar{x}(k)
\|+C_{d})\notag\\
&-2\alpha(k) (f(\bar{x}(k))-f(x^{*})))\  \hbox{a.s.} \label{itetooptimal1}
\end{align}
From $C_{r}$ inequality,
 H\"{o}lder inequality, \Cref{lemma3}, \Cref{lemmac} (i),  $\beta(k)\geq1$,  Condition (C3) and Corollary 4.1.2 in \cite{YSChow}, we have
 \begin{align}
 &E\left[ \sum_{k=0}^{\infty}  \alpha(k)\|\delta(k)\| \left(\sigma_{d}\|\bar{x}(k)\|+C_{d}\right)\right]\notag\\
 \leq &\sum_{k=0}^{\infty}  \alpha(k)\left[E\left[(\sigma_{d}\|\bar{x}(k)\|+C_{d})^2\right]\right]^{\frac{1}{2}}
\left[E\left[\|\delta(k)\|^2\right]\right]^{\frac{1}{2}}\notag\\
\leq &\sum_{k=0}^{\infty}  \alpha(k)\Big(\frac{2}{N}\sigma^2_{d}E\big[ \|X(k) \|^2\big]+2C^2_{d}
\Big)^{\frac{1}{2}}\sqrt{E[V(k)]}\notag\\
\leq &\Big(\Big(\frac{2}{N}\sigma^2_{d}C_{1}+2C^2_{d}\Big)C_{2}\Big)^{\frac{1}{2}}\sum_{k=0}^{\infty}  \alpha(k)\beta^{-1}(k)<\infty.\notag
 \end{align}
This together with the non-negativity of $\sum_{k=0}^{\infty} \alpha(k) \|\delta(k)\|  (\sigma_{d}\|\bar{x}(k)\|  +C_{d}) $ gives
\begin{align}
 \sum_{k=0}^{\infty} \alpha(k)\|\delta(k)\|(\sigma_{d}\|\bar{x}(k)
\|+C_{d})<\infty \text{ a.s.}\notag
\end{align}
By Condition (C1) and the above inequality, we have
\begin{align}
& \sum_{k=0}^{\infty}  \big(8b^2 C_{\xi}\rho_{1}c^2(k)+2\alpha^2(k)\big(2C_{\zeta}+3NC^2_{d}+2(3\sigma^2_{d}
+2\sigma_{\zeta}) N\| x^{*}\|^2\big)\notag\\
& +2\sqrt{N}\alpha(k)\|\delta(k)\|(\sigma_{d}\|\bar{x}(k)
\|+C_{d}) \big) <\infty \ \text{ a.s.} \notag
 \end{align}
Then, noting that
 $f(\bar{x}(k))-f^{*}\geq0$ and by
Theorem 1 in \cite{Robbins}  and Condition (C1), we obtain that, for any given $x^*\in \mathcal{X}^{*}$, there exists a measurable set $\Omega_{x^*}$ with $P\{\Omega_{x^*}\}=1$, such that  for any  $\omega\in\Omega_{x^*}$, $\{\|X(k,\omega)-\mathbf{1}_{N}\otimes x^{*}\|,k\geq0\}$  converges  and $\sum_{k=0}^{\infty}\alpha(k)[f(\bar{x}(k,\omega))-f^*]<\infty$.
This together with $f(\bar{x}(k,\omega))\geq f^*$ and $\sum_{k=0}^{\infty}\alpha(k)=\infty$ gives that, for any  $\omega\in\Omega_{x^*}$, $\sup_{k\geq0}\|X(k,\omega)\|<\infty$ and
\begin{equation}\label{limf}
\liminf\limits_{k\rightarrow\infty}f(\bar{x}(k,\omega))=f^*.
\end{equation}
Denote $\Omega_{1}=\{\omega|\lim_{k\rightarrow\infty}\|x_{i}(k,\omega)
-\bar{x}(k,\omega)\|=0,\ i=1,\ldots,N.\}$. From 
\Cref{lemmac} (ii), we know that $P\{\Omega_{1}\}=1$.
Denote $\Omega_{0}=(\bigcap_{x^{*}\in\mathcal{X}^{*}}\Omega_{x^{*}})\bigcap\Omega_{1}$. From \Cref{countable}, it follows that $P\{\Omega_{0}\}=1$. For any given $\omega\in\Omega_{0}$, by (\ref{limf}), we know that there is a subsequence $\{\bar{x}(k_{l},\omega),l\geq0\}$ of $\{\bar{x}(k,\omega),k\geq0\}$ such that $\lim_{l\rightarrow\infty}f(\bar{x}(k_{l},\omega))=f^*$.
By the continuity of $f$ and the boundedness of $\{\bar{x}(k_l,\omega)$, $l\geq0\}$,   we know that  there is a subsequence $\{\bar{x}(k_{l^{'}}, \omega), l^{'}\geq0\}$ of $\{\bar{x}(k_l,\omega)$, $l\geq0\}$,  converging to a point $z^{*}(\omega)$ in $\mathcal{X}^{*}$, i.e. $\lim_{l^{'}\rightarrow\infty}\bar{x}(k_{l^{'}},\omega)=z^*(\omega)$, which gives $\lim_{l^{'}\rightarrow\infty}\|x_i(k_{l^{'}},\omega)-z^*(\omega)\|=0$, $i=1,2,...,N$. Then we get
$\lim_{l^{'}\rightarrow\infty}\|X(k_{l^{'}},\omega)-\mathbf{1}_{N}\otimes z^{*}(\omega)\|=0$.
This together with the convergence  of $\{\|X(k,\omega)-\mathbf{1}_{N}\otimes z^{*}(\omega)\|,k\geq0\}$  leads to $\lim_{k\rightarrow\infty}\|x_i(k,\omega)-z^*(\omega)\|=0$, $i=1,2,...,N$. Then by the arbitrariness of $\omega$ and $P\{\Omega_{0}\}=1$, we get that $\lim_{k\rightarrow\infty}x_{i}(k)=z^*$  a.s., $i=1,\cdots,N.$       \QEDA

\vskip 0.2cm

\subsection{Special cases}
Next, we consider some special cases of random graph sequences.

At first, we suppose $\{\mathcal{G}(k),k\geq0\}$  is a Markov chain with countable state space. For this  case, Condition (b.1) of \Cref{theorem3.1} becomes more intuitive and Condition (b.2) is weakened.

Denote $S_{1}=\{\mathcal{A}_{j},j=1,2,\ldots\}$, which is a countable set of generalized weighted adjacency matrices and denote the associated generalized Laplacian matrix of $\mathcal{A}_{j}$ by $\mathcal{L}_{j}$. Let $\widehat{\mathcal{L}}_{j}=\frac{\mathcal{L}_{j}+\mathcal{L}^T_{j}}{2}$. We consider the random graph sequences
$$\begin{array}{rcl}
\Gamma_{2}&=&\left\{\{\mathcal{G}(k),k\geq0\}|\{\mathcal{A}_{\mathcal{G}(k)},k\geq0\}\subseteq S_{1}
\text{ is a homogeneous and uniformly ergodic Mar-}\right.\\
&&\text{kov chain with unique stationary distribution}\ \pi; E\left[\mathcal{A}_{\mathcal{G}(k)} \big| \mathcal{A}_{\mathcal{G}(k-1)}\right] \succeq O_{N \times N} \ \mbox{a.s.},\\
&&\left.\text{and the associated digraph of } E\left[\mathcal{A}_{\mathcal{G}(k)} \big| \mathcal{A}_{\mathcal{G}(k-1)}\right] \text{is balanced a.s., } k\geq 0\right\},
\end{array}$$
where $\pi=[\pi_{1},\pi_{2},\ldots]^T$, $\pi_{j}\geq 0$, $\sum_{j=1}^{\infty}\pi_{j}=1$ and $\pi_{j}$ is the stationary probability at $\mathcal{A}_{j}$.
We have the following corollary.

\begin{corollary}\label{corollary3.1}

For the convex optimization problem \eqref{model}, the algorithm \eqref{algorithm}-\eqref{algorithm2} and the associated random graph sequence $\{\mathcal{G}(k),k\geq0\}\in\Gamma_{2}$, assume that

(i) Assumptions \ref{subgradient}-\ref{countable} and Conditions (C1)-(C5) hold;

(ii) the associated graph of the Laplacian matrix $\sum_{j=1}^{\infty}\pi_{j}\mathcal{L}_{j}$ contains a spanning tree;

(iii) $\sup_{j\geq 1}\|\widehat{\mathcal{L}}_{j}\|<\infty$.\\
Then, there exists  a random vector $z^{*}$ taking values in $\mathcal{X}^{*}$, such that
$\lim_{k\rightarrow\infty}x_{i}(k)=z^{*}$\ a.s.,\ $i=1,\cdots,N.$
\end{corollary}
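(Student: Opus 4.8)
The plan is to show that a graph sequence in $\Gamma_{2}$ satisfying (i)--(iii) meets every hypothesis of \Cref{theorem3.1}, and then invoke that theorem. First I would check $\Gamma_{2}\subseteq\Gamma_{1}$: by the Markov property of $\{\mathcal{A}_{\mathcal{G}(k)},k\geq0\}$ relative to $\{\mathcal{F}(k)\}$, $E[\mathcal{A}_{\mathcal{G}(k)}\,|\,\mathcal{F}(k-1)]=E[\mathcal{A}_{\mathcal{G}(k)}\,|\,\mathcal{A}_{\mathcal{G}(k-1)}]$ a.s., and the latter is, by the definition of $\Gamma_{2}$, a nonnegative matrix whose digraph is balanced, a.s.; hence $\{\mathcal{G}(k),k\geq0\}\in\Gamma_{1}$. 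Condition (b.2) of \Cref{theorem3.1} will then come for free once $h$ is fixed: since $S_{1}$ is countable and (iii) bounds the relevant edge weights uniformly, $\|\mathcal{L}_{\mathcal{G}(k)}\|$ is dominated by a deterministic constant, so its conditional $2^{\max\{h,2\}}$-th moment is bounded by a constant $\rho_{0}$. Thus everything reduces to producing a finite $h$ and a $\theta>0$ that realize the uniformly conditionally joint connectivity condition (b.1).

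For (b.1) I would argue as follows. Put $\bar{\mathcal{L}}:=\sum_{j\geq1}\pi_{j}\hat{\mathcal{L}}_{j}$ and let $M_{j}:=E[\mathcal{A}_{\mathcal{G}(k)}\,|\,\mathcal{A}_{\mathcal{G}(k-1)}=\mathcal{A}_{j}]=\sum_{i}p_{ji}\mathcal{A}_{i}$, with $p_{ji}$ the transition probability $\mathcal{A}_{j}\to\mathcal{A}_{i}$. Stationarity $\sum_{j}\pi_{j}p_{ji}=\pi_{i}$ gives $\sum_{j}\pi_{j}M_{j}=\sum_{i}\pi_{i}\mathcal{A}_{i}$, so the mean adjacency matrix is a convex combination of the nonnegative balanced matrices $M_{j}$, hence itself nonnegative and balanced; consequently $\sum_{j}\pi_{j}\mathcal{L}_{j}$ is the Laplacian of a balanced digraph and $\bar{\mathcal{L}}$ is the positive semidefinite Laplacian of its mirror graph. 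By (ii) that digraph has a spanning tree, and a weight-balanced digraph with a spanning tree is strongly connected, so its mirror graph is connected and $\lambda_{2}(\bar{\mathcal{L}})>0$. The same convex-combination argument shows every conditional digraph $\mathcal{G}(i\,|\,k-1)$ with $i\geq k$ is balanced and nonnegative, so $E[\hat{\mathcal{L}}_{\mathcal{G}(i)}\,|\,\mathcal{F}(k-1)]$ is a symmetric positive semidefinite Laplacian annihilating $\mathbf{1}_{N}$. Uniform ergodicity gives $\sup_{\mathcal{A}\in S_{1}}\|P^{t}(\mathcal{A},\cdot)-\pi\|_{\mathrm{TV}}\leq M\beta^{t}$ for some $M<\infty$, $\beta\in(0,1)$, which with (iii) yields $\|E[\hat{\mathcal{L}}_{\mathcal{G}(k+t)}\,|\,\mathcal{F}(k-1)]-\bar{\mathcal{L}}\|\leq C\beta^{t}$ a.s., $C$ independent of $k$ and of $\mathcal{A}_{\mathcal{G}(k-1)}$; averaging over $i=k,\dots,k+h-1$ and using $\frac{1}{h}\sum_{t=0}^{h-1}\beta^{t}\leq\frac{1}{h(1-\beta)}$ gives $\big\|\frac{1}{h}\sum_{i=k}^{k+h-1}E[\hat{\mathcal{L}}_{\mathcal{G}(i)}\,|\,\mathcal{F}(k-1)]-\bar{\mathcal{L}}\big\|\leq\frac{C}{h(1-\beta)}$ a.s.\ for all $k$. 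Taking $h$ with $\frac{C}{h(1-\beta)}<\tfrac12\lambda_{2}(\bar{\mathcal{L}})$ and applying Weyl's perturbation bound on $\mathbf{1}_{N}^{\perp}$ gives $\lambda_{k}^{h}=h\,\lambda_{2}\big(\tfrac{1}{h}\sum_{i=k}^{k+h-1}E[\hat{\mathcal{L}}_{\mathcal{G}(i)}\,|\,\mathcal{F}(k-1)]\big)\geq\tfrac12 h\,\lambda_{2}(\bar{\mathcal{L}})=:\theta>0$ for every $k\geq0$, which is even stronger than (b.1).

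With (b.1) and (b.2) in hand for this $h$ and $\theta$, and Assumptions \ref{subgradient}--\ref{countable} and Conditions (C1)--(C5) holding by (i), \Cref{theorem3.1} applies and yields the conclusion. The hard part will be exactly this derivation of (b.1): converting the asymptotic, initial-state-independent convergence of the $t$-step conditional graphs to the mean graph into a lower bound on $\lambda_{2}$ of a \emph{finite}-window average that is uniform over the (countably many) starting states — which is where uniform (rather than mere) ergodicity, the boundedness in (iii), and the balancedness-plus-spanning-tree argument for $\lambda_{2}(\bar{\mathcal{L}})>0$ all enter; verifying $\Gamma_{2}\subseteq\Gamma_{1}$ and (b.2) is comparatively routine.
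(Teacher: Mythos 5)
Your proposal is correct and takes essentially the same route as the paper: verify $\Gamma_{2}\subseteq\Gamma_{1}$, obtain Condition (b.2) of \Cref{theorem3.1} from (iii), obtain Condition (b.1) from (ii) via the uniform ergodicity of the chain, and then invoke \Cref{theorem3.1}. The paper's own proof simply delegates the key step (b.1) to the proof of Theorem 2 in \cite{LiT}; the uniform-ergodicity/perturbation argument you spell out (geometric total-variation convergence to $\pi$, positivity of $\lambda_{2}$ of the mean mirror Laplacian from balancedness plus the spanning tree, and a Weyl bound on a sufficiently long window) is precisely the content of that cited argument.
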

\vskip 0.2cm
\begin{proof}
From the definition of $\Gamma_{2}$, we know that $\Gamma_{2}\subseteq\Gamma_{1}$. Then, similar to the proof of Theorem 2 in \cite{LiT}, we get that Condition (b.1) of \Cref{theorem3.1} holds by Condition (ii). From Condition (iii), we know that Condition (b.2) of \Cref{theorem3.1}  holds. Finally, the conclusion of \Cref{corollary3.1} is obtained by \Cref{theorem3.1}.
\end{proof}
\vskip 0.2cm

Consider the independent graph sequences
$$\begin{array}{rcl}
\Gamma_{3}&=&\left\{\{\mathcal{G}(k),k\geq0\}|\{\mathcal{G}(k),k\geq0\}\text{ is an independent process, } E\left[\mathcal{A}_{\mathcal{G}(k)}\right] \succeq O_{N \times N}, \right.\\
&&\left.\text{and the associated digraph of }E\left[\mathcal{A}_{\mathcal{G}(k)}\right] \text{is balanced, } k\geq 0\right\}.
\end{array}$$


\vskip 0.2cm
\begin{theorem}\label{corollary3.2}

For the convex optimization problem \eqref{model}, the algorithm \eqref{algorithm}-\eqref{algorithm2} and the associated random graph sequence $\{\mathcal{G}(k),k\geq0\}\in\Gamma_{3}$, assume that

(i) Assumptions \ref{subgradient}-\ref{countable} and Conditions (C1)-(C5) hold;

(ii) there exists a positive integer $h$ such that
$$\inf _{m \geq 0}\lambda_{2}\left(\sum_{i=m h}^{(m+1) h-1} E\left[\widehat{\mathcal{L}}_{\mathcal{G}(i)}\right]\right)>0;$$

(iii) $\sup _{k \geq 0} E\left[\left\|\mathcal{L}_{\mathcal{G}(k)}\right\|^{2}\right]<\infty$.\\
Then, there exists  a random vector $z^{*}$ taking values in $\mathcal{X}^{*}$, such that
$\lim_{k\rightarrow\infty}x_{i}(k)=z^{*}$\ a.s.,\ $i=1,\cdots,N.$
\end{theorem}

The proof of \Cref{corollary3.2} is similar to that of Theorem 3.1 and is omitted here. For details, see Appendix A. The only difference is that
by the independence between $\mathcal{L}_{\mathcal{G}(i)}$ and $\mathcal{L}_{\mathcal{G}(j)}$, $i\neq j$,  we prove (\ref{Vmh4}) of Lemma 3.3 by Lyapunov inequality and condition (iii) of Theorem 3.2 instead of using conditional H\"{o}lder inequality  in the proof of Theorem 3.1. Thus, there is no dependence on $h$  in condition (iii) of \Cref{corollary3.2}.

\vskip 0.2cm
Noting that an undirected graph is  balanced, we get the following corollary, which is consistent with the results in \cite{JakoveticD1} and  \cite{Srivastava}, directly from \Cref{corollary3.2}.
\vskip 0.2cm

Consider the i.i.d undirected graph sequences
$$\begin{array}{rcl}
\Gamma_{4}&=&\left\{\{\mathcal{G}(k),k\geq0\}|\{\mathcal{G}(k),k\geq0\}\text{ is an i.i.d process, } E\left[\mathcal{A}_{\mathcal{G}(k)}\right] \succeq O_{N \times N}, \right.\\
&&\left.\text{and the associated graph of }E\left[\mathcal{A}_{\mathcal{G}(k)}\right] \text{is undirected, } k\geq 0\right\}.
\end{array}$$
\vskip 0.2cm

\begin{corollary}\label{corollary3.3}

For the convex optimization problem \eqref{model}, the algorithm \eqref{algorithm}-\eqref{algorithm2} and the associated random graph sequence $\{\mathcal{G}(k),k\geq0\}\in\Gamma_{4}$, assume that

(i) Assumptions \ref{subgradient}-\ref{countable} and Conditions (C1)-(C5) hold;

(ii) $\lambda_2\left(E\left[\mathcal{L}_{\mathcal{G}(0)}\right]\right)>0$;

(iii) $E\left[\left\|\mathcal{L}_{\mathcal{G}(0)}\right\|^{2}\right]<\infty$.\\
Then, there exists  a random vector $z^{*}$ taking values in $\mathcal{X}^{*}$, such that
$\lim_{k\rightarrow\infty}x_{i}(k)=z^{*}$\ a.s.,\ $i=1,\cdots,N.$
\end{corollary}

\subsection{Discussion on the assumptions}

\begin{remark}

\Cref{subgradient} is weaker than the existing assumptions on the differentiability of local cost functions and the boundedness of (sub) gradients in distributed convex optimizations (\cite{ThinhT}, \cite{JakoveticD1}, \cite{AlghunaimSA}, \cite{SahuAK}-\cite{LiuS}).

The ways to deal with the convex cost functions with bounded or Lipschitz continuous (sub)gradients employ the boundness or Lipschitz continuity of the (sub)gradients, respectively (\cite{ThinhT}, \cite{AlghunaimSA}, \cite{SahuAK}-\cite{LiuS}).
In \cite{SahuAK}, the gradients of local cost functions satisfy Lipschitz continuity, in which, the key step of analyzing the mean square error
between the average state of nodes and the optimal solution is to obtain
the recursive inequality by adding and subtracting the gradients
of the global cost function at the average state and the optimal
solution, and then using the Lipschitz continuity of gradients.
That is, the mean square error  at the next time can be controlled by that at the
previous time and the consensus error. However, this can not be obtained for the case with the  linearly growing subgradients. Also, different from \cite{Srivastava},  the subgradients are not required to be bounded and the inequality (28) in \cite{Srivastava} does not hold.

As a result, the existing methods are no longer applicable. In fact, the inner product of the subgradients and the error between local optimizers' states and the global optimal solution inevitably exists in  the  recursive inequality of the conditional mean square error, which leads the nonegative supermartingale convergence theorem not to be used directly.
We first estimate the mean square increasing rate of the states in \Cref{lemma3}, and then substitute this rate into the recursive inequality  (\ref{itetooptimal1}) of the conditional mean square error between the state and the global optimal solution.

\end{remark}

\vskip 0.2cm
For the case with $\mu$-strongly convex local cost functions, the conditions on step sizes can be weaken. We have the following result.

\vskip 0.2cm
\begin{theorem}\label{strcoconver}

 For the optimization problem \eqref{model} and the algorithm \eqref{algorithm}-\eqref{algorithm2}, if the local cost functions $f_{i}(\cdot), \ i=1,\ldots,N$ are $\mu$-strongly convex, under the same assumptions  as \Cref{theorem3.1} with Conditions (C3)-(C5) replaced by (C3)'-(C5)', then
\begin{itemize}
 \item[(i)]$\lim_{k\to \infty}
  E\left[\|x_{i}(k)-z^{*}\|^{2}\right]=0,\ i=1,\ldots,N$;
  \item[(ii)]$ \lim_{k \to \infty}x_{i}(k)= z^{*}$, $i=1,\ldots,N$ a.s.,
 \end{itemize}
where $z^{*}$  is the unique optimal solution of \eqref{model} and
 \begin{itemize}[\itemindent=8pt]
  \global\setlength{\labelsep}{3pt}
  \item[(C3)'] $\sum_{k=0}^{\infty}\alpha^{\frac{3}{2}}(k)c^{-\frac{1}{2}}(k)<\infty$;
 \item[(C4)'] $\lim_{k\rightarrow\infty}\frac{\alpha(k)}{c(k)}=0$;
  \item[(C5)']for any given positive integer $h$,
      $\frac{\alpha(k)-\alpha(k+h)}{\alpha(k+h)}
      =o(c(k+h)),\ k \to \infty.$
\end{itemize}
\end{theorem}
The proof of the above result is put in Appendix A.


\subsection{Convergence rates for the case with strongly  convex local cost functions}
\label{sec:rate}

{If the local cost functions $f_{i}(\cdot)$, $i=1,\ldots,N$  in problem \eqref{model} are $\mu$-strongly convex, then there is a unique optimal solution for the optimization problem \eqref{model}.  We denote the unique optimal solution by $z^{*}\in\mathbb{R}^{n}$. The following theorem gives the convergence rates that the local optimizers' states converge to $z^{*}$ in mean square. The proof of the theorem is given in  Appendix \ref{app:rate}.}

\begin{theorem}
\label{theorem3.2}
{For the optimization problem \eqref{model}, the algorithm \eqref{algorithm}-\eqref{algorithm2} with step sizes $c(k)=\frac{c_{0}}{(k+1)^{\gamma_{1}}}$, $\alpha(k)=\frac{\alpha_{0}}{(k+1)^{\gamma_{2}}}$, where $\gamma_{1}\in(0.5,1)$, $\gamma_{2}\in(\gamma_{1},1]$, $c_{0}>0$, $\alpha_{0}>0$, assume that}

{(a) Assumptions \ref{subgradient}-\ref{intensity} and  and Assumption \ref{stochasticgraph}  hold and the local cost functions $f_{i}(\cdot),i=1,\ldots,N$ are $\mu$-strongly convex;}

{(b) there exists a positive integer $h$ and positive constants $\theta$ and $\rho_{0}$, such that}

{\ \ \ (b.1) $\inf_{m\geq0}\lambda_{mh}^h\geq\theta$\ a.s.;}

{\ \ \ (b.2) $\sup_{k\geq0}\left[E\left[\|\mathcal{L}_{\mathcal{G}(k)}
\|^{2^{\max\{h,2\}}}\Big|\mathcal{F}(k-1)\right]\right]^{\frac{1}
{2^{\max\{h,2\}}}}\leq\rho_{0}$\ a.s.\\
Then,}
{ the convergence rates of $E[\|X(k)-\mathbf{1}_{N}\otimes z^{*}\|^2]$ for different cases are given as follows:}


{\ \ \ (1) if $3\gamma_{1}>2\gamma_{2}$ and $\gamma_{2}\in(\gamma_{1},1)$, then $\limsup_{k\rightarrow\infty}(k+2)^{\gamma_{2}-\gamma_{1}}E\left[\|X(k+1)-\mathbf{1}_{N}\otimes z^{*}\|^2\right]\leq\frac{2NC_{\varphi_{1}}C_{\varphi_{2}}}{\mu\alpha_{0}};$}

{\ \ \ (2) if $3\gamma_{1}>2\gamma_{2}$, $\gamma_{2}=1$ and $\gamma_{1}+\frac{\mu\alpha_{0}}{2N}>1$, then $\limsup_{k\rightarrow\infty}(k+2)^{1-\gamma_{1}} E\left[\|X(k+1)-\mathbf{1}_{N}\otimes z^{*}\|^2\right]\leq\frac{C_{\varphi_{1}}C_{\varphi_{2}}}{\gamma_{1}+\frac{\mu\alpha_{0}}{2N}-1};$ }

{\ \ \ (3) if $3\gamma_{1}>2\gamma_{2}$, $\gamma_{2}=1$ and $\gamma_{1}+\frac{\mu\alpha_{0}}{2N}=1$, then $\limsup_{k\rightarrow\infty}(k+2)^{1-\gamma_{1}}(\ln(k+2))^{-1}E\big[\|X(k+1)-\mathbf{1}_{N}\otimes z^{*}\|^2\big]\leq C_{\varphi_{1}}C_{\varphi_{2}};$}

{\ \ \ (4) if $3\gamma_{1}>2\gamma_{2}$, $\gamma_{2}=1$ and $\gamma_{1}+\frac{\mu\alpha_{0}}{2N}<1$, then
$E\left[\|X(k)-\mathbf{1}_{N}\otimes z^{*}\|^2\right]=O\Big(k^{-\frac{\mu\alpha_{0}}{2N}}\Big);$}


{\ \ \ (5) if $3\gamma_{1}\leq2\gamma_{2}$ and $\gamma_{2}\in(\frac{3}{2}\gamma_{1},1)$, then $\limsup_{k\rightarrow\infty}(k+2)^{2\gamma_{1}-\gamma_{2}}E\left[\|X(k+1)-\mathbf{1}_{N}\otimes z^{*}\|^2\right]\leq\frac{2NC_{\varphi_{1}}C'_{\varphi_{2}}}{\mu\alpha_{0}};$}

{\ \ \ (6) if $3\gamma_{1}\leq2\gamma_{2}$, $\gamma_{2}=1$ and $\frac{\mu\alpha_{0}}{2N}>2\gamma_{1}-1$, then $\limsup_{k\rightarrow\infty}(k+2)^{2\gamma_{1}-1} E\left[\|X(k+1)-\mathbf{1}_{N}\otimes z^{*}\|^2\right]\leq\frac{C_{\varphi_{1}}C'_{\varphi_{2}}}{1-2\gamma_{1}+\frac{\mu\alpha_{0}}{2N}};$}

{\ \ \ (7) if $3\gamma_{1}\leq2\gamma_{2}$, $\gamma_{2}=1$ and $\frac{\mu\alpha_{0}}{2N}=2\gamma_{1}-1$, then $\limsup_{k\rightarrow\infty}(k+2)^{2\gamma_{1}-1}(\ln(k+2))^{-1}E\big[\|X(k+1)-\mathbf{1}_{N}\otimes z^{*}\|^2\big]\leq C_{\varphi_{1}}C'_{\varphi_{2}};$}

{\ \ \ (8) if $3\gamma_{1}\leq2\gamma_{2}$, $\gamma_{2}=1$ and $\frac{\mu\alpha_{0}}{2N}<2\gamma_{1}-1$, then
$E[\|X(k)-\mathbf{1}_{N}\otimes z^{*}\|^2]=O\Big(k^{-\frac{\mu\alpha_{0}}{2N}}\Big)$,\\
where
$C_{\varphi_{1}}=
\exp\Big(\max\{2(\rho^2_{0}+8\sigma^2C_{\xi}\rho_{1}),
4(2\sigma_{\zeta}+3\sigma^2_{d})\}\sum_{k=0}^{\infty}(c^2(t)+\alpha^2(t))\Big)$,  
$C_{\varphi_{2}}=4\frac{\alpha^2_{0}}{c_{0}}\Bigg(1+
2\bigg(2C_{V1}\\ \Big(\frac{1}{N}\sigma^2_{d}\Big(\frac{2NC_{d}^2}{\mu^2}
+1\Big)+C^2_{d}\Big)\bigg)^{\frac{1}{2}}
\Bigg)$,
$\ C'_{\varphi_{2}}=4\Bigg(8b^2C_{\xi}\rho_{1}+\frac{2\alpha^2_{0}}{c_{0}^2}\sqrt{2C_{V2}
\Big(\frac{1}{N}\sigma^2_{d}\Big(\frac{2NC_{d}^2}{\mu^2}+1\Big)+C^2_{d}\Big)}+1\Bigg)c^2_{0}$,
$C_{V1}=\frac{\eta^hh^2C_{\rho}\Big(2\sigma^2_{d}\Big(\frac{2NC_{d}^2}{\mu^2}+1\Big)
+2NC^2_{d}\Big)}{\theta^2}$,
$\ C_{V2}=\frac{\eta^hh^2C_{\rho}\Big(2\sigma^2_{d}\Big(\frac{2NC_{d}^2}{\mu^2}+1\Big)
+2NC^2_{d}\Big)\alpha_{0}^{2}}{\theta^2c_{0}^{3}}+$
 $\frac{4\eta^hh^2\rho_{1}C_{\rho}C_{\xi}\Big(4\sigma^2\Big(\frac{2NC_{d}^2}{\mu^2}+1\Big)
+2b^2\Big)}{\theta}$,
$\eta=2\big(1+c^2_{0}(\rho^2_{0}+8\sigma^2C_{\xi}\rho_{1})\big)$,
$C_{\rho}=\Big\{2^{2(h-1)}\sum_{l=0}^{2(h-1)}M_{2(h-1)}^l\rho_{0}^{2l}\Big\}^{\frac{1}{2}}$,
 $M_{2(h-1)}^l$ denotes the combinatorial number of choosing $l$ elements from $2(h-1)$.}
\end{theorem}

\vskip 0.2cm

\begin{remark}
The convergence rates of deterministic distributed subgradient optimization algorithms with strongly convex cost functions were studied in \cite{LiuS}, \cite{NedicA2}-\cite{XiC},
where the convergence rate of $|f(\bar{x}(k))-f^{*}|$ is $O\left(\frac{\log(k)}{\sqrt{k}}\right)$ in \cite{NedicA2}-\cite{XiC},
and the convergence rate of $\|X(k)-\mathbf{1}_{N}\otimes z^{*}\|$ is $O\left(\frac{1}{\sqrt{k}}\right)$ in \cite{LiuS}.
Different from \cite{LiuS}, \cite{NedicA2}-\cite{XiC}, we consider distributed stochastic subgradient descent algorithms with communication and subgradient noises over random communication graphs,
and show how various random factors affect the convergence rate of the algorithm in \Cref{theorem3.2}.
In \cite{JakoveticD1}, the convergence rates of the distributed stochastic gradient descent algorithm with precise communications were analyzed under the conditions that the communication graphs are i.i.d. and the mean graph is connected and undirected.
The convergence rate of the mean square error was given as $O\left(\frac{1}{k}\right)$ with the step sizes $c(k)=\frac{c_{0}}{(1+k)^{\gamma_{1}}}$ and $\alpha(k)=\frac{\alpha_{0}}{k+1}$, where $\gamma_{1}\in[0,0.5]$.
Here,  we consider the algorithm with noisy communications. For attenuating the noises, the step sizes in \cite{JakoveticD1} are not applicable.
In \Cref{theorem3.2}, $\gamma_{1}$ is greater than $0.5$, which leads to a slower convergence than $O\left(\frac{1}{k}\right)$.
Compared with \cite{JakoveticD1},  in \Cref{theorem3.2}, we provide a systematic convergence rate analysis for distributed stochastic optimization algorithms with noisy and general random graphs (i.e. which may be non-stationary) for the case with typical step sizes
and provide the explicit limit bounds of the  convergence rates in terms of algorithm and network parameters.
\end{remark}

\vskip 0.2cm

\begin{remark}
\Cref{theorem3.2} reveals an interesting dependence of the convergence rates on the step size parameters $\gamma_{1}$, $\gamma_{2}$ and $\alpha_{0}$, the cost function's strongly convex coefficient $\mu$, and the number of the local optimizers $N$. What's more, the result shows that larger parameters $\sigma,b,C_{\xi},\sigma_{d},C_{d},\sigma_{\zeta},h$ and $\rho_{0}$ lead to a slower convergence, where $\sigma_{d}=\max\limits_{1\leq i\leq N}\{\sigma_{di}\}$, $C_{d}=\max\limits_{1\leq i\leq N}\{C_{di}\}$ are the subgradient parameters given in \Cref{subgradient};  $C_{\xi}$ is given in \Cref{noise1}; $\sigma_{\zeta}$ is the intensity coefficient of subgradient noises given in \Cref{noise2}; $\sigma=\max\limits_{1\leq i,j\leq N}\{\sigma_{ji}\}$ and $b=\max\limits_{1\leq i,j\leq N}\{b_{ji}\}$ are the intensity coefficients of communication noises given in \Cref{intensity}; $h$ is the length of the intervals over which the graphs are jointly connected, and $\rho_{0}$ is given in Condition (b.2).
\Cref{theorem3.2} is helpful for choosing appropriate parameters in practical implementation to achieve fast convergence.
\end{remark}

\vskip 0.2cm

\section{Numerical simulation}
\label{sec:numerical}


Consider the LASSO regression problem \eqref{lasso} over a random network with $20$ nodes. Let  $\{\mathcal{G}(k),k\geq0\}$
be an  i.i.d sequence of random communication graphs. All the random weights of edges $\{a_{i,j}(k),\ i,j=1,\ldots,20\}$ are chosen by the following rules. For any nonnegative integer $m$, if $k=4m$, then $\{a_{i,j}(k),\ i,j=1,\ldots,20\}$  are uniformly distributed on $[0,1]$; if $k\neq 4m$, the random weights are uniformly distributed on $[-0.5,0.5]$.
We assume that the communication noises $\{\xi_{ji}(k),i,j=1,\ldots,N,k\geq0\}$ are independent standard normally distributed random variables. And $\{\xi_{ji}(k),i,j=1,\ldots,N,k\geq0\}$, $\{u_{i}(k),i=1,\ldots,N,k\geq0\}$, $\{\nu_{i}(k),i=1,\ldots,N,k\geq0\}$ and $\{\mathcal{G}(k),k\geq0\}$ are assumed to be mutually independent. Let $\kappa=0.1$, $\sigma=0.1$, $b=0.1$,  $n=1$ and $R_{u,i}=1$.
 Let $x_{0}=6$, then $x_{0}-\kappa>0$. Noting that $0\in\partial f(x_{0}-\kappa)$, it follows that $x_{0}-\kappa$ is the unique optimal solution to the problem \eqref{lasso} (\cite{Bertsekas1}).
Let the initial states $x_{i}(0)=0$, $i=1,2,\ldots,20$, the step sizes $c(k)=1/(k+3)^{0.75}$ and $\alpha(k)=3/(k+3)\ln(k+3)$.

Then, the algorithm \eqref{algorithm}-\eqref{algorithm2}
is implemented.  The trajectories of all local optimizers' states are shown in \Cref{fig1} (a), which shows that the states of all local optimizers converge to $x_{0}-\kappa$ asymptotically.
Let the step sizes $c(k)=1/(k+1)^{0.4}$ and $\alpha(k)=3/(k+1)$. The algorithm \eqref{algorithm}-\eqref{algorithm2} with and without communication noises are implemented, respectively.
\Cref{fig1} (b) shows that if there is no communication noise as in \cite{JakoveticD1} (i.e. $ \sigma=0$ and $b=0$), then the mean square error of the algorithm tends to $0$;  if there is communication noise (i.e. $ \sigma=0.1$ and $b=0.1$), then
 the mean square error does not vanish. For the case with communication noise ($ \sigma=0.1$ and $b=0.1$),
 we choose $c(k)=1/(k+1)^{0.7}$ and $\alpha(k)=16/(k+1)$.  \Cref{fig1} (c) shows that
the convergence rate of  $E[\|X(k)-\mathbf{1}_{N}\otimes z^{*}\|^2]$ is  $O(k^{-0.3})$, while  the mean square error does not vanish if the step sizes are $c(k)=1/(k+1)^{0.4}$ and $\alpha(k)=3/(k+1)$.


\begin{figure}[htbp]
    \centering
    \begin{minipage}{0.5\textwidth}
    \centering
    \includegraphics[width=1.2\linewidth,height=6cm]{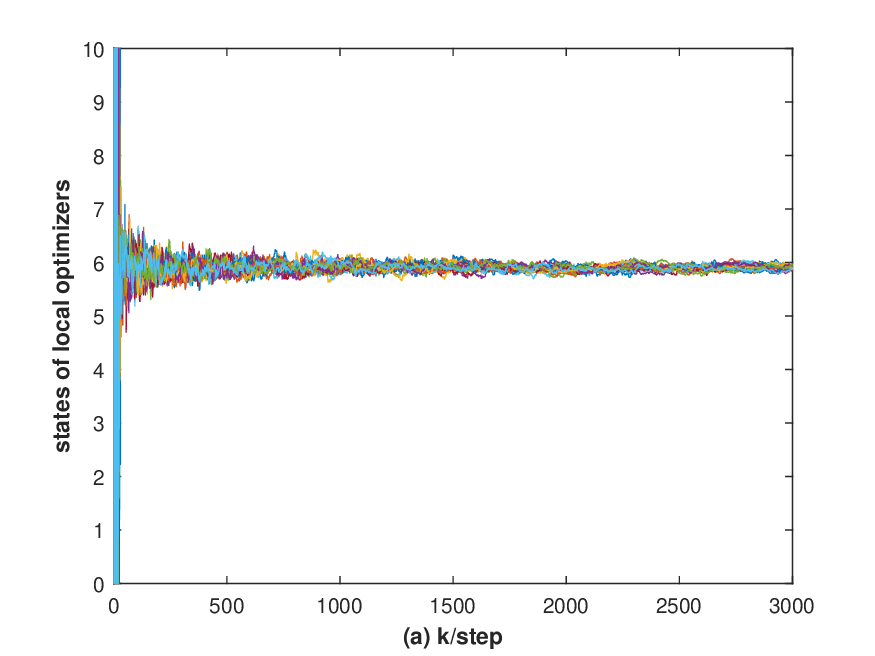}
    \end{minipage}
    \qquad
     \subfigure{\includegraphics[scale=0.5]{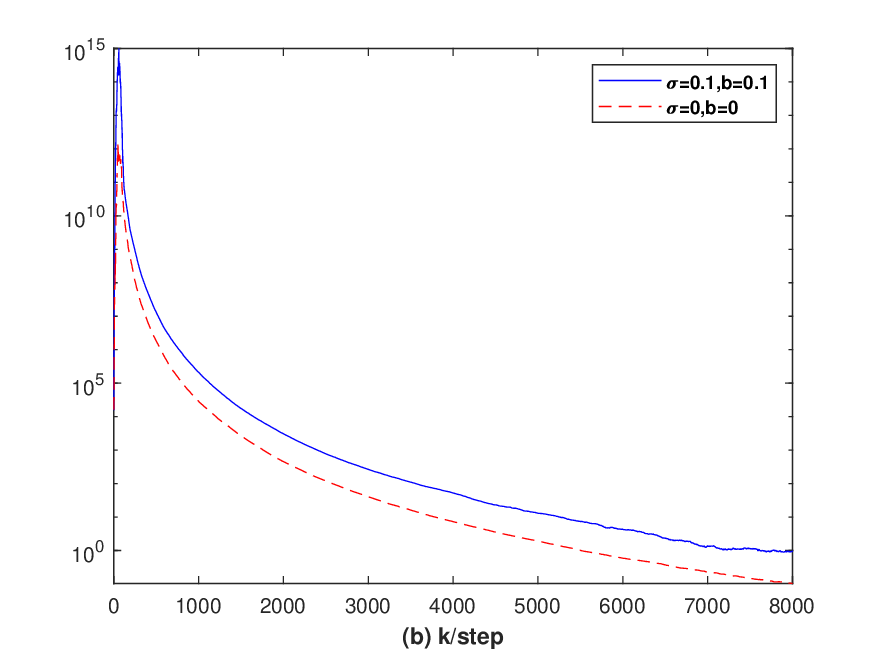}}
     \hspace{-5mm}
	\subfigure{\includegraphics[scale=0.5]{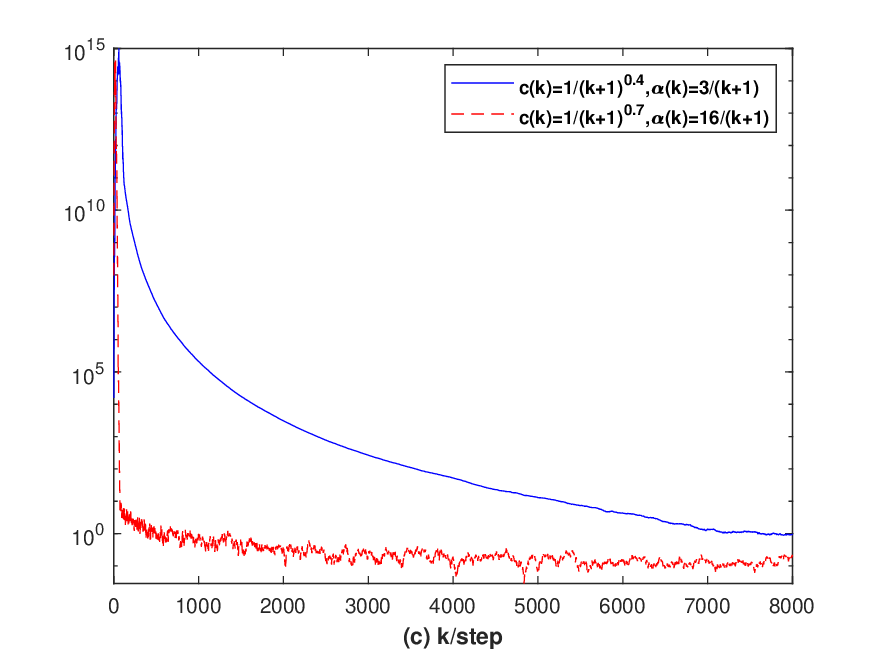}}

\caption{ (a) LASSO regression: trajectories of states; (b) LASSO regression: convergence  of mean square errors with $c(k)=1/(k+1)^{0.4}$ and $\alpha(k)=3/(k+1)$; (c) LASSO regression: the red dashed   line is the trajectory of $k^{0.3}E[\|X(k)-\mathbf{1}_{N}\otimes z^{*}\|^2]$; the blue  solid  line is the trajectory of $E[\|X(k)-\mathbf{1}_{N}\otimes z^{*}\|^2]$.}
\label{fig1}
	\end{figure}

\section{Conclusions}
\label{sec:conclusions}

We have studied the distributed stochastic subgradient  algorithm  for the stochastic optimization by networked nodes to cooperatively minimize a sum of convex cost functions.
  We have proved that if the local subgradient functions grow linearly and the sequence of digraphs is conditionally balanced and uniformly conditionally jointly connected, then proper algorithm step sizes can be designed so that all nodes' states converge to the global optimal solution almost surely.

 The  graph with a generalized weighted adjacency matrix  is often used to describe the competitive and cooperative interaction behaviors  existing in some scenarios of applications.
  So, it is also worth studying the distributed stochastic optimization over the network with the generalized weighted adjacency matrix in the future.

\appendices

\numberwithin{equation}{section}

\section{Proofs of Lemmas in \Cref{sec:main}}
\label{appendix:proof}


\noindent

\vskip 0.2cm

\emph{Proof of \Cref{lemma2}:}

By \eqref{error}, $C_{r}$ inequality and $\|P\otimes I_{n}\|=1$, we have
\begin{align}
\label{V(k+1)}
&V(k+1)\cr
\leq&V(k)-2c(k)\delta^T(k)\frac{\left(\mathcal{L}^T_{\mathcal{G}(k)
}P^T+P\mathcal{L}_{\mathcal{G}(k)}\right)\otimes I_{n}}{2}\delta(k)+2c^2(k)\|\mathcal{L}_{\mathcal{G}(k)}\otimes I_{n}\|^2\|\delta(k)\|^2 \cr
&+2(c(k)D(k)\Psi(k)\xi(k)-\alpha(k)\zeta(k))^T(P\otimes I_{n})((I_{N}-c(k)P\mathcal{L}_{\mathcal{G}(k)})\otimes I_{n})\delta(k)\cr
&+4c^2(k)\|D^{T}(k)D(k)\|\|\Psi(k)\|^{2}\|\xi(k)\|^2+4\alpha^2(k)\|\zeta(k)\|^2\cr
&+2\alpha(k)\|d(k)\|\|\delta(k)\|+3\alpha^{2}(k)\|d(k)\|^2.
\end{align}
Now, we consider the conditional expectation of each term on the right side of \eqref{V(k+1)}. For the 2nd term, by \Cref{stochasticgraph}    and  $\delta(k)\in\mathcal{F}(k-1)$
we have
%
\begin{align}
&E\left[-2c(k)\delta^T(k)\dfrac{(\mathcal{L}^T_{\mathcal{G}(k)}P^T
+P\mathcal{L}_{\mathcal{G}(k)})\otimes I_{n}}{2}\delta(k)\bigg|\mathcal{F}(k-1)\right]\notag\\
=&-2c(k)\delta^T(k)E[\widehat{\mathcal{L}}_{\mathcal{G}(k)}\otimes I_{n}|\mathcal{F}(k-1)]\delta(k) \notag\\
\leq & 0 \ \mbox{a.s.}\label{deltaLdelta}
\end{align}
For the $3$rd term, by $\sup_{k\geq0}\left[E\left[\|\mathcal{L}_{\mathcal{G}(k)}\|^2\big|
\mathcal{F}(k-1)\right]\right]^{\frac{1}{2}}\leq\rho_{0}$\ a.s.,  $\|\mathcal{L}_{\mathcal{G}(k)}\otimes I_{n}\|=\|\mathcal{L}_{\mathcal{G}(k)}\|$ and $\delta(k)\in\mathcal{F}(k-1)$, we get
\begin{align}
\label{Ldelta}
&E\left[2c^2(k)\|\mathcal{L}_{\mathcal{G}(k)}\otimes I_{n}\|^2\|\delta(k)\|^2\big|\mathcal{F}(k-1)\right]\notag\\
=&2c^2(k)E\left[\|\mathcal{L}_{\mathcal{G}(k)}\|^2\big|\mathcal{F}(k-1)\right]\|\delta(k)\|^2\notag\\
\leq&2\rho^2_{0}c^2(k)V(k)\ \mbox{a.s.}
\end{align}
By $\delta(k)\in\mathcal{F}(k-1)$ and \Cref{noise1}, we have
\begin{align}
\label{xidelta}
&E\Big[2c(k)\xi^T(k)\Psi^T(k)D^T(k)(P\otimes I_{n})((I_{N}-c(k)P\mathcal{L}_{\mathcal{G}(k)})\otimes I_{n}) \delta(k)|\mathcal{F}(k-1)\Big]\notag\\
=&2c(k)E[\xi^T(k)|\mathcal{F}(k-1)]\Psi^T(k) E[D^T(k)(P\otimes I_{n})((I_{N}-c(k) P\mathcal{L}_{\mathcal{G}(k)})\otimes I_{n})|\mathcal{F}(k-1)]\delta(k)
=0.
\end{align}
Similarly, by \Cref{noise2}, we have $E\left[2\alpha(k)\zeta^T(k)(P\otimes I_{n})((I_{N}-c(k)P\mathcal{L}_{\mathcal{G}(k)})\otimes I_{n})\delta(k)|\mathcal{F}(k-1)\right]=0$.
Thus, for the $4$th term, combining the above two equalities gives
\begin{align}\label{xiwdelta}
&E\left[2\Big(c(k)D(k)\Psi(k)\xi(k)-\alpha(k)\zeta(k)\Big)^T(P\otimes I_{n})((I_{N}-c(k)P\mathcal{L}_{\mathcal{G}(k)})\otimes I_{n})\delta(k)|\mathcal{F}(k-1)\right]=0.
\end{align}
By \Cref{intensity} and $C_{r}$ inequality, we have
\begin{align}
&\|\Psi(k)\|^2\notag\\
\leq&\max\limits_{1\leq i,j\leq N}\big[2\sigma^2\|x_{j}(k)-x_{i}(k)\|^2+2b^2\big]\notag\\
\leq&4\sigma^2\max\limits_{1\leq i,j\leq N}\big[\|x_{j}(k)-\bar{x}(k)\|^2+\|x_{i}(k)-\bar{x}(k)\|^2\big]+2b^2\notag\\
\leq &4\sigma^2\|X(k)-\mathbf{1}^T_{N}\otimes \bar{x}(k)\|^2+2b^2. \label{Ee3}
\end{align}
By
$$|\mathcal{E}_{\mathcal{G}(k)}|\max_{1\leq i,j\leq N}a_{ij}^2(k)\leq N(N-1)\max_{1\leq i,j\leq N}a^2_{ij}(k) \leq N(N-1)\|\mathcal{L}_{\mathcal{G}(k)}\|^2_{F}\leq N^{2}(N-1)\|\mathcal{L}_{\mathcal{G}(k)}\|^2,$$  and $\sup_{k\geq0}\left[E\left[\|\mathcal{L}_{\mathcal{G}(k)}\|^2\big|
\mathcal{F}(k-1)\right]\right]^{\frac{1}{2}}\leq\rho_{0}$
 a.s., there exists $\rho_{1}>0,$ such that
 \begin{align}
 E\left[|\mathcal{E}_{\mathcal{G}(k)}|\max_{1\leq i,j\leq N}a^2_{ij}(k)\big|\mathcal{F}(k-1)\right] \leq \rho_{1} \ \mbox{a.s.} \notag
 \end{align}
 Then, for the 5th  term, by (\ref{Ee3}), \Cref{noise1}, $\|P\otimes I_{n}\|=1$  and  $\|D^{T}(k)D(k)\|=\lambda_{max}(D^{T}(k)D(k))= \max_{1\leq i \leq N} \lambda_{max}(a_{i}(k)a^{T}_{i}(k))= \max_{1\leq i \leq N}\operatorname{Tr}(a_{i}^{T}(k)a_{i}(k))\leq \big|\mathcal{E}_{\mathcal{G}(k)}\big| \max\limits_{1\leq i,j\leq N}a^2_{ij}(k)$, we have
\begin{align}
&E\left[4c^2(k)\|D^{T}(k)D(k)\|\|\Psi(k)\|^{2}\|\xi(k)\|^2\big|\mathcal{F}(k-1)\right]\notag\\
=&4c^2(k)\|\Psi(k)\|^2E\left[\|D^{T}(k)D(k)\||\mathcal{F}(k-1)\right] E\left[\left\|\xi(k)\right\|^2\big|\mathcal{F}(k-1)\right]\notag\\
\leq&4c^2(k)C_{\xi} (4\sigma^2V(k)+2b^2) E\left[\big|\mathcal{E}_{\mathcal{G}(k)}\big|\max\limits_{1\leq i,j\leq N}a^2_{ij}(k)\big|\mathcal{F}(k-1)\right]\notag\\
\leq&16\sigma^2C_{\xi}\rho_{1}c^2(k)V(k)+8b^2C_{\xi}\rho_{1}c^2(k)\ \mbox{a.s.}\label{xixi}
\end{align}
 For the $6$th term, by \Cref{noise2}, we have
\begin{align}
&E\left[4\alpha^2(k) \left\|\zeta(k)\right\|^2\big|\mathcal{F}(k-1)\right]
\leq  4\sigma_{\zeta}\alpha^2(k)\|X(k)\|^2+4C_{\zeta}\alpha^2(k)\ \mbox{a.s.}\label{ww}
\end{align}
For the last term,
by \Cref{subgradient} and $C_{r}$ inequality, we have
\begin{align}
&E\left[3\alpha^{2}(k)\|d(k)\|^2|\mathcal{F}(k-1)\right]\notag\\
=& 3\alpha^2(k) \|d(k)\|^2
\leq6\alpha^2(k)(\sigma^2_{d}  \|X(k)\|^2+NC^2_{d}).\label{Edd}
\end{align}
This together with  \eqref{V(k+1)}-\eqref{ww} leads to
(\ref{itconvk}).

In the following part, we prove \Cref{lemma2} (ii). Noting that $\mathcal{L}_{\mathcal{G}(k)}\mathbf{1}_{N}=\mathbf{0}_{N}$ and by \eqref{compact}, we get
\begin{align}
&\|X(k+1)-\mathbf{1}_{N}\otimes x\|^2\notag\\
\leq&\|X(k)-\mathbf{1}_{N}\otimes x\|^2-2c(k)(X(k)-\mathbf{1}_{N}\otimes x)^T\frac{1}{2}\left(\mathcal{L}^T_{\mathcal{G}(k)}
+\mathcal{L}_{\mathcal{G}(k)}\right)\otimes I_{n}(X(k)-\mathbf{1}_{N}\otimes x)\notag\\
&+2c^2(k)\|\mathcal{L}_{\mathcal{G}(k)}\otimes I_{n}\|^2\|X(k)-\mathbf{1}_{N}\otimes x\|^2\notag\\
&+2(c(k)D(k)\Psi(k)\xi(k)-\alpha(k)\zeta(k))^T ((I_{N}-
c(k)\mathcal{L}_{\mathcal{G}(k)})\otimes I_{n})(X(k)-\mathbf{1}_{N}\otimes x)\notag\\
&+4c^2(k)\|D^{T}(k)D(k)\|\|\Psi(k)\|^{2}\|\xi(k)\|^2+4\alpha^{2}(k)
\|\zeta(k)\|^{2} \notag\\
&-2\alpha(k)d^T(k)(X(k)-\mathbf{1}_{N}\otimes x)+3\alpha^2(k)\|d(k)\|^2.\notag
\end{align}
Then, taking conditional expectation of each term on the right side of the above inequality, similar to the proof of \eqref{V(k+1)}-\eqref{ww}, we have
 (\ref{itconvk1}).
$\hfill\blacksquare$

\emph{Proof of \Cref{lemma3}:}
By $C_{r}$ inequality and (\ref{Edd}),  we have
\begin{align}
-2\alpha(k)d^T(k)X(k)
\leq\alpha(k)(2\sigma^2_{d}+1)\|X(k)\|^2
+2NC^2_{d}\alpha(k),\notag
\end{align}
which together with \Cref{lemma2} (ii) leads to
\begin{align}
\label{Xk+1}
&E\big[\|X(k+1)\|^2\big]\cr
\leq&\big(1+\alpha(k)(2\sigma^2_{d}+1)+2c^2(k)(\rho^2_{0}+8\sigma^2C_{\xi}\rho_{1})
+4\alpha^2(k) (2\sigma_{\zeta}+3\sigma^2_{d})\big)E\big[\|X(k)\|^2\big]+2NC^2_{d}\alpha(k)\cr
&+8b^2C_{\xi}\rho_{1}c^2(k)+2\alpha^2(k)
\big(2C_{\zeta}+3NC^2_{d}).
\end{align}
By Conditions (C1)-(C2), there exists a positive integer $k_{1}$, such that $\alpha^2(k)\leq \alpha(k)$ and $c^2(k)\leq \alpha(k)$,  $\forall\ k\geq k_{1}$. Thus, from \eqref{Xk+1},  we have
$$
E\left[\|X(k+1)\|^2\right]
\leq\left(1+C_{0}\alpha(k)\right)E\left[\|X(k)\|^2\right]+\widetilde{C}_{0}\alpha(k),\ \forall \ k\geq k_{1},
$$
where $\widetilde{C}_{0}=8NC^2_{d}+8b^2C_{\xi}\rho_{1}+4C_{\zeta}$.
This  gives
\begin{align}
&E\left[\left\|X(k+1)\right\|^2\right]\notag\\
\leq&\prod\limits_{i=k_{1}}^k\Big[1+C_{0}\alpha(i)\Big]
E\left[\left\|X(k_{1})\right\|^2\right]
+\widetilde{C}_{0}\sum\limits_{i=k_{1}}^k\prod\limits_{j=i+1}^k
\big(1+C_{0}\alpha(j)\big)\alpha(i)\cr
\leq&\exp\left(C_{0}\sum_{t=k_{1}}^k\alpha(t)\right)E\left[\left\|X(k_{1})\right\|^2\right]+
\widetilde{C}_{0}\sum\limits_{i=k_{1}}^k\alpha(i)
\exp\left(C_{0}\sum_{j=i+1}^k\alpha(j)\right)\cr
\leq&\beta(k+1)\left(E\left[\left\|X(k_{1})\right\|^2\right]+\widetilde{C}_{0}
\sum_{i=k_{1}}^{\infty}\alpha(i) \exp\left(-C_{0}\sum_{j=0}^i\alpha(j)\right)\right),\ \forall \ k\geq k_{1}.\notag
\end{align}
This together with Condition (C3)  leads to (\ref{incresecon}).
$\hfill\blacksquare$

\emph{Proof of \Cref{lemmac}:}
 In the following part, we will prove \Cref{lemmac} (i). By \Cref{lemma3}, we get
\begin{equation}
E\left[\left\|X(j)\right\|^2\right]\leq C_{1}\beta(j)\leq C_{1}\beta((m+1)h),\ \forall \ mh\leq j\leq (m+1)h.\notag
\end{equation}
Then,  taking $\tau(mh)=\alpha(mh)$  in  \Cref{lemmac(i)}, by  condition (b.1), noting that $\beta((m+1)h)\geq1$ and  $\alpha(j)\leq \alpha(mh)$, $c(j)\leq c(mh)$, $mh\leq j\leq (m+1)h$, there exist $C_{4}>0$ and $m_{0}>0$, such that
\begin{align}
\label{Vmh4}
&E[V((m+1)h)]\cr
\leq&(1+\alpha(mh))[1-2\theta c((m+1)h)+C_{4} c^2((m+1)h)]  E[V(mh)]\notag\\
&+4h\rho_{1}C_{\rho}C_{\xi} \big(4\sigma^2C_{1} \beta((m+1)h)+2b^2\big)\left( \sum_{j=mh}^{(m+1)h-1}c^2(j)\right)+4hC_{\rho}
\Big(C_{\zeta}  +\sigma_{\zeta}C_{1} \beta((m+1)h)
 \Big)\notag\\
& \times \sum_{j=mh}^{(m+1)h-1}\alpha^2(j)+\big(2\sigma^2_{d}C_{1}
\beta((m+1)h)+2NC^2_{d}\big)   \left(\frac{1}{\alpha(mh)}+2\right) \left(hC_{\rho}
\sum_{j=mh}^{(m+1)h-1}\alpha^2(j)  \right)\notag\\
\leq&\big(1-2\theta c((m+1)h)+q_{0}(mh)\big)E[V(mh)]+p(mh), \ \forall \  m\geq m_{0},
\end{align}
where
$q_{0}(mh)=C_{4} c^2((m+1)h)+\alpha(mh)\big(1-2\theta c((m+1)h)+C_{4} c^2((m+1)h)\big),$
$p(mh)=C_{5}\alpha(mh)\beta((m+1)h)+\big(2C_{5}+4h^2C_{\rho}$ $\big(\sigma_{\zeta}C_{1}+C_{\zeta}\big)\big)\alpha^2(mh)\beta((m+1)h)
+\big(4h^2\rho_{1}C_{\rho}C_{\xi}(4\sigma^2C_{1}+2b^2)\big)c^2(mh)\beta((m+1)h),$
$C_{5}=2h^2C_{\rho}(\sigma^2_{d}C_{1}+NC^2_{d}).$
From Conditions (C1) and (C4), we obtain $q_{0}(mh)=o(c((m+1)h))$. Thus, there exists a positive integer $m_{3}$, such that
\begin{equation}\label{hatq}
0<2\theta c((m+1)h)-q_{0}(mh)\leq1,\ \forall\ m\geq m_{3}.
\end{equation}
Let $\Pi(k)=\frac{c(k)V(k)}{\alpha(k)\beta(k)}$. By \eqref{Vmh4}, \eqref{hatq} and Condition (C1), we get
\begin{align}

\end{equation}
From conditional H\"{o}lder inequality, \Cref{noise2} and \eqref{phiphi}, it follows that
\begin{align}
\label{phiomega}
&E\left[\|\Phi^T((m+1)h,j+1)\Phi((m+1)h,j+1)\|\|\zeta(j)\|^2\right]\cr
=&E\left[E\left[\|\Phi^T((m+1)h,j+1)\Phi((m+1)h,j+1)\|\|\zeta(j)
\|^2|\mathcal{F}(j-1)\right]\right]\cr
=&E\left[E\left[\|\Phi^T((m+1)h,j+1)\Phi((m+1)h,j+1)\||\mathcal{F}(j-1)\right]
E\left[\|\zeta(j)\|^2|\mathcal{F}(j-1)\right]\right]\cr
\leq&E\left[\left\{E\left[\|\Phi^T((m+1)h,j+1)\Phi((m+1)h,j+1)
\|^2\Big|\mathcal{F}(j-1)\right]\right\}^{\frac{1}{2}}
\left(\sigma_{\zeta}\|X(j)\|^2+C_{\zeta}\right)\right]\cr
\leq&C_{\rho}\left(\sigma_{\zeta}E\left[\|X(j)\|^2\right]+C_{\zeta}\right),\  mh\leq j\leq (m+1)h-1,
\end{align}
which leads to
\begin{align}
\label{omegaomega}
&E\left[\left(\tilde{\zeta}_{m}^{mh}\right)^T\left(\tilde{\zeta}_{m}^{mh}\right)\right]\cr
\leq&hE\left[\sum\limits\limits_{j=mh}^{(m+1)h-1}\alpha^2(j)\zeta^T(j)\left(\left(P
\Phi^T((m+1)h,j+1)\Phi((m+1)h,j+1)P\right)\otimes I_{n}\right)\zeta(j)\right]\cr
\leq&h\sum\limits\limits_{j=mh}^{(m+1)h-1}\alpha^2(j)
E\left[\left\|\Phi^T((m+1)h,j+1)\Phi((m+1)h,j+1)\right\|\|\zeta(j)\|^2\right]\cr
\leq&hC_{\rho}\sum\limits\limits_{j=mh}^{(m+1)h-1}\alpha^2(j)
\big(\sigma_{\zeta}E\left[\|X(j)\|^2\right]+C_{\zeta}\big).
\end{align}
Thus,  by \eqref{zetazeta}, \eqref{xixi3} and \eqref{omegaomega}, we get
\begin{align}
\label{zetazeta1}
E\left[\left(\tilde{\Lambda}_{m}^{mh}\right)^T\left(\tilde{\Lambda}_{m}^{mh}\right)\right]\leq&2h\rho_{1}C_{\rho}C_{\xi}\sum\limits\limits_{j=mh}^{(m+1)h-1}c^2(j)
\left(4\sigma^2E\left[\|X(j)\|^2\right]+2b^2\right)\cr
&+2hC_{\rho}\sum\limits\limits_{j=mh}^{(m+1)h-1}\alpha^2(j)
\big(\sigma_{\zeta}E\left[\|X(j)\|^2\right]+C_{\zeta}\big).
\end{align}
For the third term on the right side of  \eqref{V((m+1)h)}, by $\delta(mh)\in\mathcal{F}(j-1),~j\geq mh$, \Cref{noise1}, we have
\begin{align}
\label{deltaxi1}
&E\left[\delta^T(mh)((\Phi^T((m+1)h,mh)\Phi((m+1)h,j+1)P)\otimes I_{n})D(j)\Psi(j)\xi(j)\right]\cr
=&E\Big[\delta^T(mh)E\Big[((\Phi^T((m+1)h,mh)\Phi((m+1)h,j+1)P)\otimes I_{n})D(j)\Psi(j)\xi(j)
\Big|\mathcal{F}(j-1)\Big]\Big]\cr
=&E\Big[\delta^T(mh)E\left[((\Phi^T((m+1)h,mh)\Phi((m+1)h,j+1)P)\otimes I_{n})D(j)|\mathcal{F}(j-1)\right] \Psi(j)E[\xi(j)|\mathcal{F}(j-1)]\Big]\cr
=&0,\ mh\leq j\leq(m+1)h-1,\ m\geq0.
\end{align}
Similarly, from \Cref{noise2}, we have $E\left[\delta^T(mh)((\Phi^T((m+1)h,mh)\Phi((m+1)h,j+1)P)\otimes I_{n})\zeta(j)\right]=0$, $mh\leq j\leq(m+1)h-1,\ m\geq0$.
This together with \eqref{zeta} and \eqref{deltaxi1} gives
\begin{equation}\label{deltaxi}
E\left[\delta^T(mh)(\Phi((m+1)h,mh)\otimes I_{n})^T\tilde{\Lambda}_{m}^{mh}\right]=0.
\end{equation}
By \Cref{Edd1}, \eqref{phiphi} and conditional H\"{o}lder inequality, we have
\begin{align}
\label{dphidh2}
&E\left[\left\|\Phi^T((m+1)h,j+1)\Phi((m+1)h,j+1)\right\|\|d(j)\|^2\right]\cr
\leq&E\left[\left\|\Phi^T((m+1)h,j+1)\Phi((m+1)h,j+1)\right\|\left(2\sigma^2_{d}
\|X(j)\|^2+2NC^2_{d}\right)\right]\cr
=&E\bigg[E\Big[\left\|\Phi^T((m+1)h,j+1)\Phi((m+1)h,j+1)\right\|\Big|
\mathcal{F}(j-1)\Big]\left(2\sigma^2_{d}\|X(j)\|^2+2NC^2_{d}\right)\bigg]\cr
\leq&E\bigg[\Big[E\big[\|\Phi^T((m+1)h,j+1)\Phi((m+1)h,j+1)
\|^2\big|\mathcal{F}(j-1)\big]\Big]^{\frac{1}{2}}\left(2\sigma^2_{d}\|X(j)\|^2
+2NC^2_{d}\right)\bigg]\cr
\leq&C_{\rho}\left(2\sigma^2_{d}E\left[\|X(j)\|^2\right]+2NC^2_{d}\right),\ j\geq mh,
\end{align}
where the first ``$=$'' is derived by $X(j)\in\mathcal{F}(j-1)$. This together with \eqref{d}  and the $C_r$-inequality gives
\begin{align}
\label{Lambda3}
&E\left[\left(\tilde{d}_{m}^{mh}\right)^T\left(\tilde{d}_{m}^{mh}\right)\right]\cr
\leq& h\sum\limits\limits_{j=mh}^{(m+1)h-1}\alpha^2(j)E\left[d^T(j)((P^T\Phi^T
((m+1)h,j+1)\Phi((m+1)h,j+1)P)\otimes I_{n})d(j)\right]\cr
\leq&h\sum\limits\limits_{j=mh}^{(m+1)h-1}\alpha^2(j)
E\left[\left\|P^T\Phi^T((m+1)h,j+1)\Phi((m+1)h,j+1)P\right\|\|d(j)\|^2\right]\cr
\leq& hC_{\rho}\sum\limits\limits_{j=mh}^{(m+1)h-1}
\alpha^2(j)\left(2\sigma^2_{d}E\left[\|X(j)\|^2\right]+2NC^2_{d}\right).
\end{align}
For the fifth term on the right side of  \eqref{V((m+1)h)}, from the inequality $p^Tq\leq\frac{1}{2\tau}\|p\|^2+\frac{\tau}{2}\|q\|^2,\ \forall\ \tau>0,\ p,q\in\mathbb{R}^n$, \eqref{deltaphidelta} and \eqref{Lambda3}, it follows that
\begin{align}
&-2E\left[\delta^T(mh)\left(\Phi^T((m+1)h,mh)\otimes I_{n}\right)\tilde{d}_{m}^{mh}\right]\cr
\leq&\tau(mh)E\left[\delta^T(mh)(\Phi^T((m+1)h,mh))(\Phi((m+1)h,mh)\otimes I_{n})\delta(mh)\right]+\dfrac{1}{\tau(mh)}E\left[(\tilde{d}_{m}^{mh})^T(\tilde{d}_{m}^{mh})\right]\cr
\leq&\tau(mh)\Big[1-2\theta c((m+1)h)+C_{4}c^2((m+1)h)\Big]E[V(mh)]\cr
&+\dfrac{hC_{\rho}}{\tau(mh)}\sum\limits_{j=mh}^{(m+1)h-1}\alpha^2(j)
\left(2\sigma^2_{d}E\left[\|X(j)\|^2\right]+2NC^2_{d}\right).\notag
\end{align}
Taking the mathematical expectations on both sides of \eqref{V((m+1)h)}, by \eqref{deltaphidelta}, \eqref{zetazeta1}, \eqref{deltaxi}, \eqref{Lambda3} and the above inequality, we get \eqref{Vmh2}.
\end{proof}

\vskip 0.2cm
\begin{remark}
If Conditions (C1)-(C5) hold, then  Conditions (C3)'-(C5)' hold.
 From condition (C4), we have, for any given positive constant $C>0$,
 \begin{align}
 \lim_{k \to \infty} \frac{\alpha^{\frac{3}{2}}(k)\exp(C \sum_{t=0}^{k} \alpha(t))}{c^{\frac{1}{2}}(k)\alpha(k)}=\lim_{k \to \infty}\left(\frac{\alpha(k)\exp(2C \sum_{t=0}^{k} \alpha(t))}{c(k)}\right)^{\frac{1}{2}}=0.\notag
 \end{align}
From the above equality and Condition (C3), we obtain Condition (C3)'. By Conditions (C1) and (C4), we have, for any given positive constant $C>0$,
\begin{align}
\lim\limits_{k\rightarrow\infty}\frac{\alpha(k)}{c(k)}=
\lim\limits_{k\rightarrow\infty}\frac{\alpha(k)\exp(C \sum_{t=0}^{k} \alpha(t))}{c(k)}\lim\limits_{k\rightarrow\infty}\frac{1}{\exp(C \sum_{t=0}^{k} \alpha(t))}=0.\notag
\end{align}
That is,  Condition (C4)' holds. From Condition (C5) and similar to the proof of \Cref{lemmac},  we have, for any given positive integer $h$,
\begin{align}
\alpha(k)\beta(k)-\alpha(k+h)\beta(k+h)\leq hC_{6}C_{7}\alpha(k)\beta(k)\alpha(k+h)\beta(k+h),\ k\geq k_{2},\label{speedpdalphabeta}
\end{align}
where $\beta(k)$ and $k_{2}$ are given by \Cref{lemma3} and (\ref{pi2}) in \Cref{lemmac}.
By Condition (C1), there exists $\widetilde{k}_{2}>0$ such that $0<hC_{0}\alpha(k)<1,\ \forall \ k \geqslant \widetilde{k}_{2}$. Then
\begin{align}
\frac{\beta(k+h)}{\beta(k)}-1 =\exp\left(C_{0}\sum_{t=k+1}^{k+h}\alpha(t)\right)-1\leq \exp\left(hC_{0}\alpha(k+1)\right)-1 \leq h C_{0}\alpha(k+1), \ k \geq \widetilde{k}_{2}.\notag
\end{align}
By (\ref{speedpdalphabeta}) and the above inequality, we have
\begin{align}
\frac{\alpha(k)-\alpha(k+h)}{\alpha(k+h)}=&
\frac{\alpha(k)\beta(k)-\alpha(k+h)\beta(k+h)}{\alpha(k+h)\beta(k)}
+\frac{\alpha(k+h)\beta(k+h)-\alpha(k+h)\beta(k)}{\alpha(k+h)\beta(k)}\notag\\
\leq & hC_{6}C_{7}\alpha(k)\beta(k+h)+h C_{0}\alpha(k+1),\ k \geq \max\{k_{2},\ \widetilde{k}_{2}\}.\label{speedpdalphabeta1}
\end{align}
Noting that $1\leq\dfrac{\beta(k+h)}{\beta(k)}\leq\exp(hC_{0}\alpha(k))$ and $\alpha(k)\downarrow0$, we have $\lim\limits_{k\rightarrow\infty}\dfrac{\beta(k+h)}{\beta(k)}=1$. Thus, by Conditions (C1), (C4) and (C4)', it follows that
\begin{align}
&\limsup_{k \to \infty} \frac{hC_{6}C_{7}\alpha(k)\beta(k+h)+h C_{0}\alpha(k+1)}{c(k+h)}\notag\\
=& hC_{6}C_{7}\limsup_{k \to \infty} \frac{\alpha(k)\beta(k)}{c(k)}\limsup_{k \to \infty} \frac{\beta(k+h)}{\beta(k)}\limsup_{k \to \infty} \frac{c(k)}{c(k+h)}+ h C_{0}\limsup_{k \to \infty} \frac{\alpha(k+1)}{c(k+1)}\limsup_{k \to \infty} \frac{c(k+1)}{c(k+h)}=0. \notag
\end{align}
This together with (\ref{speedpdalphabeta1}) leads to Condition (C5)'.
\end{remark}

\section{Proofs of  \Cref{theorem3.2}}
\label{app:rate}
To prove \Cref{theorem3.2}, we first prove the following lemmas.

\vskip 0.2cm

\begin{lemma}\label{lemmac(iii)}
{For the convex optimization problem \eqref{model}, consider the algorithm \eqref{algorithm}-\eqref{algorithm2}  with step sizes $c(k)=\frac{c_{0}}{(k+1)^{\gamma_{1}}}$, $\alpha(k)=\frac{\alpha_{0}}{(k+1)^{\gamma_{2}}}$, where $\gamma_{1}\in(0.5,1)$, $\gamma_{2}\in(\gamma_{1},1]$, $c_{0}>0$, $\alpha_{0}>0$.  Suppose that the local cost function $f_{i}(\cdot)$ in problem \eqref{model} is $\mu$-strongly convex and assume that}

{(a) Assumptions \ref{subgradient}-\ref{intensity} and Assumption \ref{stochasticgraph} hold;}

{(b) there exists a positive integer $h$ and positive constants $\theta$ and $\rho_{0}$, such that}

{\ \  (b.1) $\inf_{m\geq0}\lambda_{mh}^h\geq\theta$\  a.s.;}

{\ \  (b.2) $\sup_{k\geq0}\left[E\left[\left\|\mathcal{L}_{\mathcal{G}(k)}\right\|^{2^{\max\{h,2\}}}|
\mathcal{F}(k-1)\right]\right]^{\frac{1}{2^{\max\{h,2\}}}}\leq\rho_{0}$\  a.s.;\\}
{
(i) if $3\gamma_{1}>2\gamma_{2}$, then
\begin{align}\label{EVkorder2str1}
\limsup_{k\rightarrow\infty}\frac{c^2(k)E[V(k)]}{\alpha^2(k)}\leq C_{V1};
\end{align}
(ii) if $3\gamma_{1}\leq2\gamma_{2}$, then
\begin{align}\label{EVkorder2str2}
\limsup_{k\rightarrow\infty}\frac{E[V(k)]}{c(k)}\leq C_{V2}.
\end{align}
Especially, if $\gamma_{1}\in(0.5,\frac{2}{3})$, and $\gamma_{2}=1$, then
\begin{align}\label{EVkorder2str3}
E[V(k)]\leq C_{V3}c(k),\ \forall\ k\geq \lceil\theta c_{0}\rceil-1,
\end{align}
where $C_{V1}$ and $C_{V2}$ are given in \Cref{theorem3.2}, $C_{p}=\left(\frac{1}{\theta}+2\right)\bigg(h^2C_{\rho}\Big(2NC^2_{d}+
2\sigma^2_{d}C_{X1}\Big)\bigg)\frac{\alpha_{0}^2}{c_{0}^{3}}(1+h)$
$+4h^2\rho_{1}$ $\times C_{\rho}C_{\xi}\Big(4\sigma^2C_{X1}
+2b^2\Big)
+4h^2C_{\rho}\Big(\sigma_{\zeta}C_{X1}+C_{\zeta}\Big) \frac{\alpha_{0}^2}{c_{0}^2}$, $C_{V3}=\frac{\eta^hC_{q}C_{p}c_{0}^2}
{c_{0}\theta\Big(1+2h\Big)^{-\gamma_{1}}-\gamma_{1}h}+\eta^{h}C_{q}\exp\left(\frac{c_{0}\theta((\hat{m}+1)h+1
)^{1-\gamma_{1}}}{h(1-\gamma_{1})}\right)$ $\times E[V(\hat{m}h)]\exp\left(-\frac{c_{0}\theta(2h+1)^{1-\gamma_{1}}}{h(1-\gamma_{1})}\right)
$,
$C_{q}=\max\left\{0, (1+c_{0}\theta)C_{4}-2\theta^2\right\}\sum_{m=0}^{\infty}c^2((m+1)h)$,
$C_{4}=(1+h)[(1+\rho_{0})^{2h}-1-2h\rho_{0}]$, $\hat{m}=\max\left\{\lceil\frac{\lceil\alpha_{0}\mu\rceil-1}{h}\rceil,\lceil\frac{\lceil\theta c_{0}\rceil-1}{h}\rceil-1\right\}$.}
\end{lemma}

\begin{proof}
{
From \Cref{lemmac(i)} with $\tau(mh)=\tau_{0}c((m+1)h),\tau_{0}\in(0,2\theta)$, \Cref{lemma3(iii)} and the monotone property of $c(k),\alpha(k)$, we get
\begin{align}\label{Vmh2cstr}
&E[V((m+1)h)]\cr
\leq&\left(1+\tau_{0}c((m+1)h)\right)\Big[1-2\theta c((m+1)h)+C_{4} c^2((m+1)h))\Big]E[V(mh)]\cr
&+\left(\dfrac{1}{\tau_{0}c((m+1)h)}+2\right)\left(h^2C_{\rho}\Big(2\sigma^2_{d}\Big(\frac{2NC_{d}^2}{\mu^2}+1\Big)
+2NC^2_{d}\Big)\right)\alpha^2(mh)\cr
&+4\Bigg(h^2\rho_{1}C_{\rho}C_{\xi}\Big(4\sigma^2\Big(\frac{2NC_{d}^2}{\mu^2}+1\Big)
+2b^2\Big)c^2(mh)+h^2C_{\rho}\Big(\sigma_{\zeta}\Big(\frac{2NC_{d}^2}{\mu^2}+1\Big)+C_{\zeta}\Big)\alpha^2(mh)\Bigg)\cr
=&\Big[1-(2\theta-\tau_{0}) c((m+1)h)+\tilde{q}_{0}(mh)\Big]E[V(mh)]
+\tilde{p}(mh),\ m\geq\max\left\{m_{0},\left\lceil\frac{k_{0}}{h}\right\rceil\right\},
\end{align}
where $\tilde{q}_{0}(mh)=(C_{4}-2\tau_{0}\theta) c^2((m+1)h)+\tau_{0} C_{4}c^3((m+1)h)$, $\tilde{p}(mh)=\left(\dfrac{1}{\tau_{0}c((m+1)h)}+2\right)
\bigg(h^2C_{\rho}\times $\\ $\Big(2NC^2_{d}+$
$2\sigma^2_{d}\Big(\frac{2NC_{d}^2}{\mu^2}+1\Big)\Big)\bigg)\alpha^2(mh)
+4h^2\rho_{1}C_{\rho}C_{\xi}\Big(4\sigma^2\Big(\frac{2NC_{d}^2}{\mu^2}+1\Big)
+2b^2\Big)c^2(mh)
+4h^2C_{\rho}\Big(\sigma_{\zeta}\Big(\frac{2NC_{d}^2}{\mu^2}+\\ 1\Big)
+C_{\zeta}\Big)\alpha^2(mh)$. By $2\theta-\tau_{0}>0$ and $\tilde{q}_{0}(mh)=o(c((m+1)h))$, then there exists a positive integer $m_{1}$, such that
\begin{align}\label{hatq0}
0<(2\theta-\tau_{0}) c((m+1)h)-\tilde{q}_{0}(mh))\leq1,\ \forall\ m\geq m_{1}.
\end{align}}
\ \ \ {(i).
Let $\widetilde{\Pi}(k)=\dfrac{c^2(k)V(k)}{\alpha^2(k)}$. From \eqref{Vmh2cstr}, \eqref{hatq0} and the monotonically decreasing property of $c(k)$,  we have
\begin{align}\label{pi11strong1}
E\left[\widetilde{\Pi}((m+1)h)\right]
\leq&\dfrac{\alpha^2(mh)}{\alpha^2((m+1)h)}\Big[1-(2\theta-\tau_{0}) c((m+1)h)+\tilde{q}_{0}(mh)\Big]E\left[\widetilde{\Pi}(mh)\right]\cr
&+\dfrac{c^2((m+1)h)}{\alpha^2((m+1)h)}\tilde{p}(mh),\ \ m\geq\max\left\{m_{0},m_{1},\left\lceil\dfrac{k_{0}}{h}\right\rceil\right\}.
\end{align}
From $\alpha(k)=\dfrac{\alpha_{0}}{(k+1)^{\gamma_{2}}}$, we have
\begin{align}\label{alphastrong}
\dfrac{\alpha^2(mh)}{\alpha^2((m+1)h)}
=\left(1+\dfrac{h}{\alpha_{0}}\alpha(mh)\right)^{2\gamma_{2}}
\leq\left(1+\dfrac{\gamma_{2}h}{\alpha_{0}}\alpha(mh)\right)^2
\leq1+C_{\alpha}\alpha(mh),
\end{align}
where $C_{\alpha}=\dfrac{\gamma_{2}h}{\alpha_{0}}(2+\gamma_{2}h)$, the first $"\leq"$ is by the inequality $(1+x)^{\gamma}\leq1+\gamma x$, $\forall\ \gamma\in(0,1]$, and the second is by $\alpha(mh)\leq\alpha_{0}$. From \eqref{pi11strong1} and \eqref{alphastrong}, we obtain, for any $m\geq\max\left\{m_{0},m_{1},\left\lceil\dfrac{k_{0}}{h}\right\rceil\right\}$,
\begin{align}\label{pi11strong2}
&E[\widetilde{\Pi}((m+1)h)]\cr
\leq&\Big[1-(2\theta-\tau_{0}) c((m+1)h)+\tilde{q}_{1}(mh)\Big]E[\widetilde{\Pi}(mh)]+\dfrac{c^2((m+1)h)}{
\alpha^2((m+1)h)}\tilde{p}(mh),
\end{align}
where $\tilde{q}_{1}(mh)=\tilde{q}_{0}(mh)+C_{\alpha}\alpha(mh)\Big[1-(2\theta-\tau_{0}) c((m+1)h)+\tilde{q}_{0}(mh)\Big]$.
Noting that $c(k)=\dfrac{c_{0}}{(k+1)^{\gamma_{1}}}$, $\alpha(k)=\dfrac{\alpha_{0}}{(k+1)^{\gamma_{2}}}$, we have  $\lim_{k\rightarrow\infty}\dfrac{\alpha(k)}{c(k+h)}=0$, thus
$$\lim_{m\rightarrow\infty}\dfrac{\tilde{q}_{1}(mh)}{c((m+1)h)}
=\lim_{m\rightarrow\infty}\dfrac{\tilde{q}_{0}(mh)}{c((m+1)h)}+
\lim_{m\rightarrow\infty}\dfrac{C_{\alpha}\alpha(mh)}{c((m+1)h)}
\big(1-(2\theta-\tau_{0})c((m+1)h)+\tilde{q}_{0}(mh)\big)=0,$$
i.e. $\tilde{q}_{1}(mh)=o(c((m+1)h))$.
Then, by $c(k)=\dfrac{c_{0}}{(k+1)^{\gamma_{1}}}$ and $2\theta-\tau_{0}>0$, there exists a positive integer $m_{2}$ such that
\begin{align}
0<(2\theta-\tau_{0}) c((m+1)h)-\tilde{q}_{1}(mh)\leq1,\ \forall\ m\geq m_{2} \text{ and}\
\sum\limits\limits_{m=0}^{\infty}[(2\theta-\tau_{0}) c((m+1)h)-\tilde{q}_{1}(mh)]=\infty. \label{hatq1str}
\end{align}
From $3\gamma_{1}>2\gamma_{2}$, we have $c^2(k)=o\left(\frac{\alpha^{2}(k)}{c(k)}\right)$.
Noting that $\lim\limits_{m\rightarrow\infty}\frac{\alpha(k)}{\alpha(k+h)}=1$, from \eqref{pi11strong2}, \eqref{hatq1str}, $\tilde{q}_{1}(mh)=o(c((m+1)h))$ and Lemma 1.2.25 in \cite{Guo}, we obtain
\begin{align}\label{limsuppi1}
\limsup\limits_{k\rightarrow\infty}E[\widetilde{\Pi}(mh)]
\leq&\lim\limits_{m\rightarrow\infty}\frac{\frac{c^2((m+1)h)}{\alpha^2((m+1)h)}\tilde{p}(mh)}
{(2\theta-\tau_{0}) c((m+1)h)-\tilde{q}_{1}(mh)}\cr
&=\frac{h^2C_{\rho}\Big(2\sigma^2_{d}\Big(\frac{2NC_{d}^2}{\mu^2}+1\Big)
+2NC^2_{d}\Big)}{\tau_{0}(2\theta-\tau_{0})}.
\end{align}
By  (\ref{Edd}), we get
\begin{align}
&2\alpha(k)E\left[\left\|d(k)\right\|\left\|\delta(k)\right\|\right]\cr
\leq& E\left[\|\alpha(k)d(k)\|^2\right]+E\left[\|\delta(k)\|^2\right]\cr
\leq& \alpha^2(k)E\left[2\sigma^2_{d}\|X(k)\|^2+2NC^2_{d}\right]+E[V(k)],\notag
\end{align}
which together with
\Cref{lemma2} (i) and \Cref{lemma3(iii)} gives
\begin{align}\label{EV}
E[V(k+1)]\leq&2\big(1+c^2(k)(\rho^2_{0}+8\sigma^2C_{\xi}\rho_{1})\big)E[V(k)]+8b^2C_{\xi}\rho_{1}c^2(k)\cr
&+2\alpha^2(k)\Big((2\sigma_{\zeta}+4\sigma^2_{d})
\Big(\frac{2NC_{d}^2}{\mu^2}+1\Big)
+\left(2C_{\zeta}+4NC^2_{d}\right)\Big),\ k\geq k_{0}.
\end{align}
From the above inequality, \eqref{limsuppi1} and \Cref{lemma0}, we have
\begin{align}
\limsup_{k\rightarrow\infty}\frac{c^2(k+1)E[V(k+1)]}{\alpha^2(k+1)}
\leq\frac{\eta^hh^2C_{\rho}\Big(2\sigma^2_{d}\Big(\frac{2NC_{d}^2}{\mu^2}+1\Big)
+2NC^2_{d}\Big)}{\tau_{0}(2\theta-\tau_{0})}.\notag
\end{align}
From the arbitrariness of $\tau_{0}$ in $(0,2\theta)$ and  the above inequality, we get \eqref{EVkorder2str1}.}

{(ii). If $3\gamma_{1}\leq2\gamma_{2}$, then $\frac{\alpha^{2}(k)}{c(k)}=O(c^2(k))$. Let $\widehat{\Pi}(k)=\frac{V(k)}{c(k)}$. By $c(k)=\frac{c_{0}}{(k+1)^{\gamma_{1}}}$, we have
\begin{align}
\frac{c(k)}{c(k+h)}
=\frac{\frac{c_{0}}{(k+1)^{\gamma_{1}}}}{\frac{c_{0}}{(k+1+h)^{\gamma_{1}}}}
=\Big(1+\frac{h}{k+1}\Big)^{\gamma_{1}}
\leq1+\frac{\gamma_{1}h}{k+1}=1+o(c(k)).\notag
\end{align}
Then, from \eqref{Vmh2cstr} and \eqref{hatq0}, we obtain
\begin{align}\label{pi11strong3}
&E\left[\widehat{\Pi}((m+1)h)\right]\notag\\
=& E\left[\frac{V((m+1)h)}{c((m+1)h)}\right]\cr
\leq&\frac{c(mh)}{c((m+1)h)}\left[1-(2\theta-\tau_{0}) c((m+1)h)+\tilde{q}_{0}(mh)\right]E\left[\widehat{\Pi}(mh)\right]
+\frac{\tilde{p}(mh)}{c((m+1)h)}\cr
\leq&\big(1+o(c(mh))\big)\Big[1-(2\theta-\tau_{0}) c((m+1)h)+\tilde{q}_{0}(mh)\Big]E\left[\widehat{\Pi}(mh)\right]
+\frac{\tilde{p}(mh)}{c((m+1)h)}\cr
=&\Big[1-(2\theta-\tau_{0}) c((m+1)h)+\tilde{q}_{2}(mh)\Big]E\left[\widehat{\Pi}(mh)\right]+\frac{\tilde{p}(mh)}{c((m+1)h)},\ m\geq\max\big\{m_{0},m_{1},\big\lceil\dfrac{k_{0}}{h}\big\rceil\big\},
\end{align}
where $\tilde{q}_{2}(mh)=\tilde{q}_{0}(mh)+o(c(mh))\big(1-(2\theta-\tau_{0}) c((m+1)h)+\tilde{q}_{0}(mh)o(c(mh))\big)$, then $\tilde{q}_{2}(mh)=o(c((m+1)h))$. Thus, there exists a positive integer $m_{2}$, such that
\begin{align}\label{hatq2}
0<(2\theta-\tau_{0}) c((m+1)h)-\tilde{q}_{2}(mh)\leq1,\ \forall\ m\geq m_{2},\
\text{and}
\sum\limits\limits_{m=0}^{\infty}[(2\theta-\tau_{0}) c((m+1)h)-\tilde{q}_{2}(mh)]=\infty.
\end{align}
Since $c(k)=\frac{c_{0}}{(k+1)^{\gamma_{1}}}$,  $\alpha(k)=\frac{\alpha_{0}}{(k+1)^{\gamma_{2}}}$, and from the definition of $\tilde{p}(mh)$ and  $3\gamma_{1}\leq2\gamma_{2}$, $\gamma_{1}<\gamma_{2}$, we have
\begin{align}\label{p2mh}
&\lim\limits_{m\rightarrow\infty}\frac{\tilde{p}(mh)}{c^{2}((m+1)h)}\cr
=&h^2C_{\rho}\Big(2\sigma^2_{d}\Big(\frac{2NC_{d}^2}{\mu^2}+1\Big)
+2NC^2_{d}\Big)\lim\limits_{m\rightarrow\infty}\frac{\alpha^2(mh)}{\tau_{0} c^{3}((m+1)h)}\cr
&+4h^2\rho_{1}C_{\rho}C_{\xi}\Big(4\sigma^2\Big(\frac{2NC_{d}^2}{\mu^2}+1\Big)
+2b^2\Big)\lim\limits_{m\rightarrow\infty}\frac{c^2(mh)}{c^{2}((m+1)h)}\cr
&+4h^2C_{\rho}\bigg(\Big(\sigma^2_{d}+\sigma_{\zeta}\Big)\Big(\frac{2NC_{d}^2}{\mu^2}+1\Big)
+NC^2_{d}+C_{\zeta}\bigg)\lim\limits_{m\rightarrow\infty}\frac{\alpha^2(mh)}{c^{2}((m+1)h)}\cr
\leq&h^2C_{\rho}\Big(2\sigma^2_{d}\Big(\frac{2NC_{d}^2}{\mu^2}+1\Big)
+2NC^2_{d}\Big)\frac{\alpha_{0}^{2}}{\tau_{0}c_{0}^{3}}+4h^2\rho_{1}C_{\rho}C_{\xi}\Big(4\sigma^2\Big(\frac{2NC_{d}^2}{\mu^2}+1\Big)
+2b^2\Big).
\end{align}
By \eqref{pi11strong3}-\eqref{p2mh} and Lemma 1.2.25 in \cite{Guo} , we derive
\begin{align}
\limsup\limits_{k\rightarrow\infty}E[\widehat{\Pi}(mh)]
\leq&\lim\limits_{m\rightarrow\infty}\frac{\frac{\tilde{p}(mh)}{c((m+1)h)}}
{(2\theta-\tau_{0}) c((m+1)h)-\tilde{q}_{2}(mh)}\cr
\leq&\frac{h^2C_{\rho}\Big(2\sigma^2_{d}\Big(\frac{2NC_{d}^2}{\mu^2}+1\Big)
+2NC^2_{d}\Big)\alpha_{0}^{2}}{(2\theta-\tau_{0})\tau_{0}c_{0}^{3}}+\frac{4h^2\rho_{1}C_{\rho}C_{\xi}\Big(4\sigma^2\Big(\frac{2NC_{d}^2}{\mu^2}+1\Big)
+2b^2\Big)}{2\theta-\tau_{0}},\notag
\end{align}
which together with \eqref{EV} and \Cref{lemma0} gives \eqref{EVkorder2str2}.}

{In the following part, we will prove \eqref{EVkorder2str3}. From \eqref{Vmh2cstr} with $\tau_{0}=\theta$, we have
\begin{align}\label{v1}
E[V((m+1)h)]
\leq\Big[1-\theta c((m+1)h)+\tilde{q}_{0}(mh)\Big]E[V(mh)]
+\tilde{p}(mh),\ m\geq\left\lceil\frac{T_{0}}{h}\right\rceil,
\end{align}
where $\tilde{q}_{0}(mh)=\max\{0,(C_{4}-2\theta^2) c^2((m+1)h)+\theta C_{4}c^3((m+1)h)\}$, $\tilde{p}(mh)=\left(\dfrac{1}{\theta c((m+1)h)}+2\right)\times$ $\bigg(h^2C_{\rho}\Big(2NC^2_{d}+$
$2\sigma^2_{d}C_{X1}\Big)\bigg)\alpha^2(mh)
+4h^2\rho_{1}C_{\rho}C_{\xi}\Big(4\sigma^2C_{X1}
+2b^2\Big)c^2(mh)
+4h^2C_{\rho}\Big(\sigma_{\zeta}C_{X1}+C_{\zeta}\Big)\alpha^2(mh)$.
Let $T_{1}=\lceil\theta c_{0}\rceil-1$, then $1-\theta c((m+1)h)+\tilde{q}_{0}(mh)\geq 0,\ \forall\ m\geq \lceil\frac{T_{1}}{h}\rceil-1$. Let $\hat{m}=\max\{\lceil\frac{T_{0}}{h}\rceil,\lceil\frac{T_{1}}{h}\rceil-1\}$, from \eqref{v1}, we have
\begin{align}
&E[V((m+1)h)]\notag\\
\leq&\prod\limits_{t=\hat{m}}^{m}\big(1-\theta c((t+1)h)+\tilde{q}_{0}(th)\big)E[V(\hat{m}h)]+\sum_{s=\hat{m}}^{m}\tilde{p}(sh)\prod\limits_{t=s+1}^{m}\big(1-\theta c((t+1)h)+\tilde{q}_{0}(th)\big)\cr
\leq&\exp\left(\sum_{t=\hat{m}}^{m}\tilde{q}_{0}(th)\right)\exp\left(\sum_{t=\hat{m}}^{m}-\theta c((t+1)h)\right)E[V(\hat{m}h)]\cr
&+\sum_{s=\hat{m}}^{m}\tilde{p}(sh)\exp\left(\sum_{t=s+1}^{m}\tilde{q}_{0}(th)\right)\exp\left(\sum_{t=s+1}^{m}-\theta c((t+1)h)\right)\cr
\leq&C_{q}\exp\left(-\theta\sum_{t=\hat{m}}^{m} c((t+1)h)\right)E[V(\hat{m}h)]+\sum_{s=\hat{m}}^{m}\tilde{p}(sh)C_{q}\exp\left(-\theta\sum_{t=s+1}^{m} c((t+1)h)\right),\ m\geq\hat{m}.\notag
\end{align}
Since $c(k)=\frac{c_{0}}{(k+1)^{\gamma_{1}}}$, we have $\sum_{t=a}^{m}c((t+1)h)\geq\int_{a}^{m+1}c(t)dt=$ $\frac{c_{0}}{h(1-\gamma_{1})}(((m+2)h+1)^{1-\gamma_{1}}-
((a+1)h+1)^{1-\gamma_{1}})$, which together with the above inequality gives
\begin{align}\label{v2}
&E\left[V((m+1)h)\right]\cr
\leq&C_{q}\exp\left(\frac{c_{0}\theta((\hat{m}+1)h+1
)^{1-\gamma_{1}}}{h(1-\gamma_{1})}\right)E\left[V(\hat{m}h)\right]
\exp\left(-\frac{c_{0}\theta((m+2)h+1)^{1-\gamma_{1}}}{h(1-\gamma_{1})}\right)\cr
&+\sum_{s=\hat{m}}^{m}\tilde{p}(sh)C_{q}\exp\left(\frac{c_{0}\theta((s+2)h+1)^{1-\gamma_{1}}}{h(1-\gamma_{1})}\right)
\exp\left(-\frac{c_{0}\theta((m+2)h+1)^{1-\gamma_{1}}}{h(1-\gamma_{1})}\right),\ m\geq\hat{m}.
\end{align}
From $c(k)=\frac{c_{0}}{(k+1)^{\gamma_{1}}}$, $\alpha(k)=\frac{\alpha_{0}}{k+1}$, we have
\begin{align}
\label{v3}
\tilde{p}(mh)\leq C_{p}c^2(mh),
\end{align}
where $C_{p}=\left(\frac{1}{\theta}+2\right)\bigg(h^2C_{\rho}\Big(2NC^2_{d}+
2\sigma^2_{d}C_{X1}\Big)\bigg)\frac{\alpha_{0}^2}{c_{0}^{3}}(1+h)$
$+4h^2\rho_{1}C_{\rho}C_{\xi}\Big(4\sigma^2C_{X1}
+2b^2\Big)
+4h^2C_{\rho}\Big(\sigma_{\zeta}C_{X1}+C_{\zeta}\Big)\frac{\alpha_{0}^2}{c_{0}^2}$.
From \eqref{v3}, $\sum_{s=\hat{m}}^{m}\tilde{p}(sh)C_{q}\exp\left(\frac{c_{0}\theta((s+2)h+1)^{1-\gamma_{1}}}{h(1-\gamma_{1})}\right)\leq$
$\int_{\hat{m}}^{m+1}\tilde{p}(sh)C_{q}\exp\left(\frac{c_{0}\theta((s+2)h+1)^{1-\gamma_{1}}}{h(1-\gamma_{1})}\right)ds$ and Cauchy Integral Mean Value Theorem, we have there exists  $m^{\prime}\in[\hat{m},m+1]$, such that
\begin{align}
&(mh+1)^{\gamma_{1}}\sum_{s=\hat{m}}^{m}\tilde{p}(sh)C_{q}\exp\left(\frac{c_{0}\theta((s+2)h+1)^{1-\gamma_{1}}}{h(1-\gamma_{1})}\right)
\exp\left(-\frac{c_{0}\theta((m+2)h+1)^{1-\gamma_{1}}}{h(1-\gamma_{1})}\right)\cr
\leq&\frac{C_{q}C_{p}\int_{\hat{m}}^{m+1}\frac{c_{0}^2}{(sh+1)^{2\gamma_{1}}}
\exp\left(\frac{c_{0}\theta((s+2)h+1)^{1-\gamma_{1}}}{h(1-\gamma_{1})}\right)ds}
{(mh+1)^{-\gamma_{1}}\exp\left(\frac{c_{0}
\theta((m+2)h+1)^{1-\gamma_{1}}}{h(1-\gamma_{1})}\right)
-(\hat{m}h+1)^{-\gamma_{1}}\exp\left(\frac{c_{0}\theta((\hat{m}+2)h+1)^{1-\gamma_{1}}}{h(1-\gamma_{1})}\right)}\cr
=&\frac{C_{q}C_{p}\frac{c_{0}^2}{(m'h+1)^{2\gamma_{1}}}\exp\left(\frac{c_{0}\theta((m'+2)h+1)^{1-\gamma_{1}}}{h(1-\gamma_{1})}\right)}
{(c_{0}\theta((m'+2)h+1)^{-\gamma_{1}}(m'h+1)^{-\gamma_{1}}-\gamma_{1}h(m'h+1)^{-\gamma_{1}-1})\exp\left(\frac{c_{0}\theta((m'+2)h+1)^{1-\gamma_{1}}}{h(1-\gamma_{1})}\right)}\cr
=&\frac{C_{q}C_{p}c_{0}^2}
{\left(c_{0}\theta\Big(\frac{(m'+2)h+1}{m'h+1}\Big)^{
-\gamma_{1}}-\gamma_{1}h(m'h+1)^{\gamma_{1}-1}\right)}\cr
\leq&\frac{C_{q}C_{p}c_{0}^2}
{c_{0}\theta (1+2h )^{-\gamma_{1}}-\gamma_{1}h}.\notag
\end{align}
From \eqref{v2} and the above inequality,  \eqref{EV}  and \Cref{lemma0}, we get \eqref{EVkorder2str3}.}
\end{proof}

\noindent
{{\bfseries Proof of \Cref{theorem3.2}: }}


{In the following part, we will analyze the convergence rate of $E\left[\left\|X(k)-\mathbf{1}_{N}\otimes x^{*}\right\|^2\right]$ based on \Cref{lemma2(iii)}. Let $\tilde{k}=k_{0}$. From  \Cref{lemma2(iii)} and \Cref{lemma3(iii)}, we have
\begin{align}
\label{str2}
&E\left[\left\|X(k+1)-\mathbf{1}_{N}\otimes z^{*}\right\|^2\right]\cr
\leq&\prod\limits_{t=\tilde{k}}^{k}\Big(1-\frac{\mu}{2N}
\alpha(t)+\varphi_{1}(t)\Big)E\left[\left\|X(\tilde{k})-\mathbf{1}_{N}\otimes z^{*}\right\|^2\right]+\sum_{s=\tilde{k}}^{k}\varphi_{2}(s)\prod\limits_{t=s+1}^{k}\Big(1-\frac{\mu}{2N}\alpha(t)+\varphi_{1}(t)\Big)\cr
\leq&\exp\left(\sum\limits\limits_{t=\tilde{k}}^{k}\varphi_{1}(t)\right)
\exp\left(-\frac{\mu}{2N}\sum\limits\limits_{t=\tilde{k}}^{k}\alpha(t)\right)
E\left[\left\|X(\tilde{k})-\mathbf{1}_{N}\otimes z^{*}\right\|^2\right]\cr
&+\sum_{s=\tilde{k}}^{k}\varphi_{2}(s)\exp\left(\sum\limits\limits_{t=s+1}^{k}
\varphi_{1}(t)\right)
\exp\left(-\frac{\mu}{2N}\sum\limits\limits_{t=s+1}^{k}\alpha(t)\right)\cr
\leq&C_{\varphi_{1}}\exp\left(-\frac{\mu}{2N}\sum\limits\limits_{t=\tilde{k}}
^{k}\alpha(t)\right)
E\left[\left\|X(\tilde{k})-\mathbf{1}_{N}\otimes z^{*}\right\|^2\right]\cr
&+C_{\varphi_{1}}\sum_{s=\tilde{k}}^{k}\varphi_{2}(s)\exp
\left(-\frac{\mu}{2N}\sum\limits\limits_{t=s+1}^{k}\alpha(t)\right),\ k\geq \tilde{k},
\end{align}
where $C_{\varphi_{1}}=\sum\limits\limits_{k=0}^{\infty}\varphi_{1}(k)$. Since $c(k)=\frac{c_{0}}{(k+1)^{\gamma_{1}}}$,  $\alpha(k)=\frac{\alpha_{0}}{(k+1)^{\gamma_{2}}}$, we know that $C_{\varphi_{1}}$ exists.}

{By $\alpha(k)=\frac{\alpha_{0}}{(k+1)^{\gamma_{2}}}$, we have
$\int_{s+1}^{k+1}\alpha(x)dx=\sum\limits\limits_{t=s+1}^{k}\int_{t}^{t+1}\alpha(x)dx\leq\sum\limits\limits_{t=s+1}^{k}\alpha(t)$,
which means that
\begin{align}\label{str3}
&\exp\left(-\frac{\mu}{2N}\sum\limits\limits_{t=s+1}^{k}\alpha(t)\right)\cr
\leq&\exp\left(-\frac{\mu}{2N}\int_{s+1}^{k+1}\frac{\alpha_{0}}
{(t+1)^{\gamma_{2}}}dt\right)\cr
\leq&
\left\{
   \begin{aligned}
   &\exp\left(\frac{\mu\alpha_{0}}{2N(1-\gamma_{2})}(s+2)^{1-\gamma_{2}}\right)
\exp\left(\frac{-\mu\alpha_{0}}{2N(1-\gamma_{2})}(k+2)^{1-\gamma_{2}}\right),\ \gamma_{2}\neq1, \cr
   &\exp\left(-\frac{\mu\alpha_{0}}{2N}\ln\left(\frac{k+2}{s+2}\right)\right)
=(s+2)^{\frac{\mu\alpha_{0}}{2N}}(k+2)^{-\frac{\mu\alpha_{0}}{2N}},\ \ \ \ \ \ \gamma_{2}=1.
   \end{aligned}
   \right.
\end{align}}

{($\bullet$) If $3\gamma_{1}>2\gamma_{2}$, by \Cref{lemmac(iii)} (i), we have
\begin{align}
&\limsup_{k\rightarrow\infty}\frac{\varphi_{2}(k)c(k)}{\alpha^2(k)}\cr
=&8b^2C_{\xi}\rho_{1}\limsup_{k\rightarrow\infty}\frac{c^3(k)}{\alpha^{2}(k)}
+\limsup_{k\rightarrow\infty}2c(k)\big(2C_{\zeta}+3NC^2_{d}+2(3\sigma^2_{d}+2\sigma_{\zeta})N\|z^{*}\|^2\big)
\cr
&+\limsup_{k\rightarrow\infty}\frac{\mu}{N}\frac{c^2(k)E[V(k)]}{\alpha^2(k)}
\frac{\alpha(k)}{c(k)}+2\sqrt{2\left(\frac{1}{N}\sigma^2_{d}\left(
\frac{2NC_{d}^2}{\mu^2}+1\right)+C^2_{d}\right)}\limsup_{k\rightarrow\infty}\sqrt{\frac{c^2(k)E[V(k)]}{\alpha^2(k)}}\cr
\leq&2\sqrt{2C_{V1}\left(\frac{1}{N}\sigma^2_{d}\left(\frac{2NC_{d}^2}{\mu^2}+1\right)
+C^2_{d}\right)}.\notag
\end{align}
From $c(k)=\frac{c_{0}}{(k+1)^{\gamma_{1}}}$,  $\alpha(k)=\frac{\alpha_{0}}{(k+1)^{\gamma_{2}}}$ and the above inequality, there exists a positive integer $k_{5}$, such that
\begin{align}\label{varphi20}
\varphi_{2}(k)\leq C_{\varphi_{2}}(k+2)^{\gamma_{1}-2\gamma_{2}},\ k\geq k_{5},
\end{align}
where $C_{\varphi_{2}}=4\frac{\alpha^2_{0}}{c_{0}}\Big(1+2\sqrt{2C_{V1}
\Big(\frac{1}{N}\sigma^2_{d}\Big(\frac{2NC_{d}^2}{\mu^2}+1\Big)+C^2_{d}\Big)}\Big)
$.}

{(1)\ $\gamma_{2}\in(\gamma_{1},1)$. From \eqref{str3} and \eqref{varphi20}, we have
\begin{align}
\label{str4}
&\sum_{s=\tilde{k}}^{k}\varphi_{2}(s)\exp\Big(-\frac{\mu}{2N}\sum\limits\limits_{t=s+1}^{k}\alpha(t)\Big)\cr
\leq&\sum_{s=\tilde{k}}^{k}C_{\varphi_{2}}(s+2)^{\gamma_{1}-2\gamma_{2}}
\exp\Big(\frac{\mu\alpha_{0}}{2N(1-\gamma_{2})}(s+2)^{1-\gamma_{2}}\Big)
\exp\Big(\frac{-\mu\alpha_{0}}{2N(1-\gamma_{2})}(k+2)^{1-\gamma_{2}}\Big)\cr
=&C_{\varphi_{2}}\exp\Big(\frac{-\mu\alpha_{0}}
{2N(1-\gamma_{2})}(k+2)^{1-\gamma_{2}}\Big)
\sum_{s=\tilde{k}}^{k}g(s),\ k\geq k_{5},
\end{align}
where $g(s)=(s+2)^{\gamma_{1}-2\gamma_{2}}\exp\Big(\frac{\mu\alpha_{0}}{2N(1-\gamma_{2})}(s+2)^{1-\gamma_{2}}\Big)$, then there exists  $k_{6}>0$, such that for any $s>k_{6}$, $g(s)$ is monotonically increasing. From Integral Mean Value Theorem, we have $\int_{s}^{s+1}g(t)dt\geq g(s)$, thus,
\begin{align}
\sum_{s=k_{6}}^{k}g(s)\leq\sum_{s=k_{6}}^{k}\int_{s}^{s+1}g(t)dt
=\int_{k_{6}}^{k+1}g(t)dt.\notag
\end{align}
Then, by L'Hospital's Rule, we get
\begin{align}
\label{str6}
&\lim\limits_{k\rightarrow\infty}\frac{\int_{k_{6}}^{k+1}g(t)dt}
{(k+2)^{\gamma_{1}-\gamma_{2}}\exp\Big(\frac{\mu\alpha_{0}}{2N(1-\gamma_{2})}(k+2)^{1-\gamma_{2}}\Big)}\cr
=&\lim\limits_{k\rightarrow\infty}\frac{(k+3)^{\gamma_{1}-2\gamma_{2}}
\exp\Big(\frac{\mu\alpha_{0}}{2N(1-\gamma_{2})}(k+3)^{1-\gamma_{2}}\Big)}
{\Big((\gamma_{1}-\gamma_{2})(k+2)^{\gamma_{2}-1}+\frac{\mu\alpha_{0}}
{2N}\Big)(k+2)^{\gamma_{1}-2\gamma_{2}}\exp\Big(\frac{\mu\alpha_{0}}{2N(1-\gamma_{2})}(k+2)^{1-\gamma_{2}}\Big)}\cr
=&\frac{1}
{\lim\limits_{k\rightarrow\infty}\Big((\gamma_{1}-\gamma_{2})(k+2)^{\gamma_{2}-1}+\frac{\mu\alpha_{0}}
{2N}\Big)}\lim\limits_{k\rightarrow\infty}\Big(\frac{k+3}{k+2}\Big)^{\gamma_{1}-2\gamma_{2}}\cr
&\times\lim\limits_{k\rightarrow\infty}\exp\Big(\frac{\mu\alpha_{0}}{2N(1-\gamma_{2})}\big((k+3)^{1-\gamma_{2}}-(k+2)^{1-\gamma_{2}}\big)\Big)\cr
=&\frac{2N}{\mu\alpha_{0}},
\end{align}
where the last ``$=$" is by $\lim\limits_{k\rightarrow\infty}\big((k+3)^{1-\gamma_{2}}-(k+2)^{1-\gamma_{2}}\big)$
$=\lim\limits_{k\rightarrow\infty}(k+2)^{-\gamma_{2}}\big((k+2)^{\gamma_{2}}(k+3)^{1-\gamma_{2}}-(k+2)\big)$
$\leq\lim\limits_{k\rightarrow\infty}(k+2)^{-\gamma_{2}}\big((k+3)^{\gamma_{2}}(k+3)^{1-\gamma_{2}}-(k+2)\big)$
$=\lim\limits_{k\rightarrow\infty}(k+2)^{-\gamma_{2}}=0$.
From \eqref{str4}-\eqref{str6}, we have
\begin{align}
\label{str411}
&\limsup_{k\rightarrow\infty} (k+2)^{\gamma_{2}-\gamma_{1}}\sum_{s=\tilde{k}}^{k}\varphi_{2}(s)\exp\Big(-\frac{\mu}{2N}\sum\limits\limits_{t=s+1}^{k}\alpha(t)\Big)\cr
\leq&\limsup_{k\rightarrow\infty}C_{\varphi_{2}}(k+2)^{\gamma_{2}-\gamma_{1}}\exp\Big(\frac{-\mu\alpha_{0}}
{2N(1-\gamma_{2})}(k+2)^{1-\gamma_{2}}\Big)
\sum_{s=\tilde{k}}^{k}g(s)\cr
\leq&\limsup_{k\rightarrow\infty}\frac{C_{\varphi_{2}}\int_{\max\{\tilde{k},k_{6}\}}^{k+1}g(t)dt}
{(k+2)^{\gamma_{1}-\gamma_{2}}\exp\Big(\frac{\mu\alpha_{0}}{2N(1-\gamma_{2})}(k+2)^{1-\gamma_{2}}\Big)}
+\limsup_{k\rightarrow\infty}\frac{(k+2)^{\gamma_{2}-\gamma_{1}}C_{\varphi_{2}}\sum_{s=\tilde{k}}^{\max\{\tilde{k},k_{6}\}}g(s)}
{\exp\Big(\frac{\mu\alpha_{0}}{2N(1-\gamma_{2})}(k+2)^{1-\gamma_{2}}\Big)}\cr
=&\frac{2NC_{\varphi_{2}}}{\mu\alpha_{0}}.
\end{align}
From $\int_{\tilde{k}}^{k+1}\alpha(x)dx=\sum\limits\limits_{t=\tilde{k}}^{k}\int_{t}^{t+1}\alpha(x)dx\leq\sum\limits\limits_{t=\tilde{k}}^{k}\alpha(t)$, we have
\begin{align}
\label{str42}
&\limsup_{k\rightarrow\infty}(k+2)^{\gamma_{2}-\gamma_{1}}\exp\Big(-\frac{\mu}{2N}\sum\limits\limits_{t=\tilde{k}}^{k}\alpha(t)\Big)\cr
\leq&\limsup_{k\rightarrow\infty}(k+2)^{\gamma_{2}-\gamma_{1}}\exp\Big(-\frac{\mu}{2N}\int_{\tilde{k}}^{k+1}\frac{\alpha_{0}}{(t+1)^{\gamma_{2}}}dt\Big)\cr
=&\exp\Big(\frac{\mu\alpha_{0}}{2N(1-\gamma_{2})}(\tilde{k}+1)^{1-\gamma_{2}}\Big)
\limsup_{k\rightarrow\infty}\frac{(k+2)^{\gamma_{2}
-\gamma_{1}}}{\exp\Big(\frac{\mu\alpha_{0}}{2N
(1-\gamma_{2})}(k+2)^{1-\gamma_{2}}\Big)}
=0.
\end{align}
From \eqref{str2}, \eqref{str411} and \eqref{str42}, we get \Cref{theorem3.2} (1).}

{If $\gamma_{2}=1$,
from \eqref{str3} and \eqref{varphi20}, we get
\begin{align}
\label{str9}
&\sum_{s=\tilde{k}}^{k}\varphi_{2}(s)\exp\Big(-\frac{\mu}{2N}\sum\limits\limits_{t=s+1}^{k}\alpha(t)\Big)\cr
\leq&C_{\varphi_{2}}(k+2)^{-\frac{\mu\alpha_{0}}{2N}}
\sum_{s=\tilde{k}}^{k}(s+2)^{\gamma_{1}-2}(s+2)^{\frac{\mu\alpha_{0}}{2N}}\cr
\leq&C_{\varphi_{2}}(k+2)^{-\frac{\mu\alpha_{0}}{2N}}\int_{\tilde{k}-1}^{k+1}(s+2)^{\gamma_{1}-2+\frac{\mu\alpha_{0}}{2N}}ds\cr
\leq&
\left\{
   \begin{aligned}
   &C_{\varphi_{2}}(k+2)^{-\frac{\mu\alpha_{0}}{2N}}\ln(k+3),\ \ \gamma_{1}+\frac{\mu\alpha_{0}}{2N}=1, \cr
   &\frac{C_{\varphi_{2}}}{\gamma_{1}+\frac{\mu\alpha_{0}}{2N}-1}(k+2)^{-\frac{\mu\alpha_{0}}{2N}}
\Big((k+3)^{\gamma_{1}+\frac{\mu\alpha_{0}}{2N}-1}-(\tilde{k}+1)^{\gamma_{1}+\frac{\mu\alpha_{0}}{2N}-1}\Big),\ \gamma_{1}+\frac{\mu\alpha_{0}}{2N}\neq1,
   \end{aligned}
   \right.
\end{align}
where the 2nd ``$\leq$'' is by $\sum_{s=a}^{b}g(s)\leq\int_{a-1}^{b+1}g(s)ds$, and $g(s)=(s+2)^{\gamma_{1}+\frac{\mu\alpha_{0}}{2N}-2}$ is a nonnegative monotonic function.}

{(2)\ If $\gamma_{1}+\frac{\mu\alpha_{0}}{2N}>1$, then $\frac{C_{\varphi_{2}}}{\gamma_{1}+\frac{\mu\alpha_{0}}{2N}-1}>0$ and $1-\gamma_{1}
-\frac{\mu\alpha_{0}}{2N}<0$. By \eqref{str9}, we obtain
\begin{align}
\sum_{s=\tilde{k}}^{k}\varphi_{2}(s)\exp\Big(-\frac{\mu}{2N}\sum\limits\limits_{t=s+1}^{k}\alpha(t)\Big)
\leq&\frac{C_{\varphi_{2}}}{\gamma_{1}+\frac{\mu\alpha_{0}}{2N}-1}(k+2)^{-\frac{\mu\alpha_{0}}{2N}}
(k+3)^{\gamma_{1}+\frac{\mu\alpha_{0}}{2N}-1},\ k\geq k_{5},
\end{align}
which together with \eqref{str2} and \eqref{str3} gives
\begin{align}
&\limsup_{k\rightarrow\infty}(k+2)^{1-\gamma_{1}} E\left[\left\|X(k+1)-\mathbf{1}_{N}\otimes z^{*}\right\|^2\right]\cr
\leq&\limsup_{k\rightarrow\infty}C_{\varphi_{1}}\Big((\tilde{k}+1)^{\frac{\mu\alpha_{0}}{2N}}(k+2)^{1-\gamma_{1}
-\frac{\mu\alpha_{0}}{2N}}\Big)E\left[\left\|X(\tilde{k})-\mathbf{1}_{N}\otimes z^{*}\right\|^2\right]\notag\\
&+\frac{ C_{\varphi_{1}}C_{\varphi_{2}}}{\gamma_{1}+\frac{\mu\alpha_{0}}{2N}-1}
\limsup_{k\rightarrow\infty}\Big(\frac{k+3}{k+2}\Big)^{\gamma_{1}
+\frac{\mu\alpha_{0}}{2N}-1}
= \frac{ C_{\varphi_{1}}C_{\varphi_{2}}}{\gamma_{1}+\frac{\mu\alpha_{0}}{2N}-1}.
\end{align}}
\ \ {(3)\ If $\gamma_{1}+\frac{\mu\alpha_{0}}{2N}=1$, by \eqref{str2}, \eqref{str3} and \eqref{str9}, we have
\begin{align}
&\limsup_{k\rightarrow\infty}(k+2)^{1-\gamma_{1}}(\ln(k+2))^{-1}
E\left[\left\|X(k+1)-\mathbf{1}_{N}\otimes z^{*}\right\|^2\right]\cr
\leq&\limsup_{k\rightarrow\infty}C_{\varphi_{1}}\Big((\tilde{k}+1)^{\frac{\mu\alpha_{0}}{2N}}
(k+2)^{1-\gamma_{1}-\frac{\mu\alpha_{0}}{2N}}(\ln(k+2))^{-1}\Big)
E\left[\left\|X(\tilde{k})-\mathbf{1}_{N}\otimes z^{*}\right\|^2\right]\cr
&+\limsup_{k\rightarrow\infty}C_{\varphi_{1}}C_{\varphi_{2}}(k+2
)^{1-\gamma_{1}-\frac{\mu\alpha_{0}}{2N}}\ln(k+3)(\ln(k+2))^{-1}
= C_{\varphi_{1}}C_{\varphi_{2}}.
\end{align}}
\ \ {(4)\ If $\gamma_{1}+\frac{\mu\alpha_{0}}{2N}<1$, then $\frac{C_{\varphi_{2}}}{\gamma_{1}+\frac{\mu\alpha_{0}}{2N}-1}<0$. By \eqref{str9}, we obtain
\begin{align}
\label{str11++}
\sum_{s=\tilde{k}}^{k}\varphi_{2}(s)\exp\Big(-\frac{\mu}{2N}\sum\limits\limits_{t=s+1}^{k}\alpha(t)\Big)
\leq&\frac{C_{\varphi_{2}}}{1-\gamma_{1}-\frac{\mu\alpha_{0}}{2N}}(k+2)^{-\frac{\mu\alpha_{0}}{2N}}
\Big((\tilde{k}+1)^{\gamma_{1}+\frac{\mu\alpha_{0}}{2N}-1}-(k+3)^{\gamma_{1}+\frac{\mu\alpha_{0}}{2N}-1}\Big)\cr
\leq&\frac{C_{\varphi_{2}}}{1-\gamma_{1}-\frac{\mu\alpha_{0}}{2N}}(k+2)^{-\frac{\mu\alpha_{0}}{2N}}
(\tilde{k}+1)^{\gamma_{1}+\frac{\mu\alpha_{0}}{2N}-1}\cr
\leq&\frac{C_{\varphi_{2}}}{1-\gamma_{1}-\frac{\mu\alpha_{0}}{2N}}(k+2)^{-\frac{\mu\alpha_{0}}{2N}},\ k\geq k_{5},
\end{align}
where the last $"\leq"$ is obtained from $(\tilde{k}+1)^{\gamma_{1}+\frac{\mu\alpha_{0}}{2N}-1}<1$. Therefore, from \eqref{str2}, \eqref{str3} and \eqref{str11++}, we derive
\begin{align}
\limsup_{k\rightarrow\infty}(k+2)^{\frac{\mu\alpha_{0}}{2N}}
E\left[\left\|X(k+1)-\mathbf{1}_{N}\otimes z^{*}\right\|^2\right]
\leq&C_{\varphi_{1}}(\tilde{k}+1)^{\frac{\mu\alpha_{0}}{2N}}
E\left[\left\|X(\tilde{k})-\mathbf{1}_{N}\otimes z^{*}\right\|^2\right]
+\frac{C_{\varphi_{1}}C_{\varphi_{2}}}{1-\gamma_{1}-\frac{\mu\alpha_{0}}{2N}},\notag
\end{align}
from above, we get \Cref{theorem3.2} (4).}

{($\bullet\bullet$) If $3\gamma_{1}\leq2\gamma_{2}$, by \Cref{lemmac(iii)} (ii), we have
\begin{align}\label{tildevarphi20}
\limsup_{k\rightarrow\infty}\frac{\varphi_{2}(k)}{c^2(k)}
=&8b^2C_{\xi}\rho_{1}+2\big(2C_{\zeta}+3NC^2_{d}+2(3\sigma^2_{d}+2\sigma_{\zeta})N\|z^{*}\|^2\big)
\limsup_{k\rightarrow\infty}\frac{\alpha^2(k)}{c^2(k)}\cr
&+\limsup_{k\rightarrow\infty}\frac{\mu}{N}\frac{E[V(k)]}{c(k)}\frac{\alpha(k)}{c(k)}\cr
&+2\sqrt{2\Big(\frac{1}{N}\sigma^2_{d}\Big(\frac{2NC_{d}^2}{\mu^2}+1\Big)+C^2_{d}\Big)}
\limsup_{k\rightarrow\infty}\sqrt{\frac{E[V(k)]}{c(k)}\frac{\alpha^2(k)}{c^3(k)}}\cr
\leq&8b^2C_{\xi}\rho_{1}+\frac{2\alpha^2_{0}}{c_{0}^2}\sqrt{2C_{V2}
\left(\frac{1}{N}\sigma^2_{d}\left(\frac{2NC_{d}^2}{\mu^2}+1\right)+C^2_{d}\right)}.
\end{align}
From $c(k)=\frac{c_{0}}{(k+1)^{\gamma_{1}}}$ and \eqref{tildevarphi20}, there exists a positive integer $k_{7}$, such that
\begin{align}\label{varphi22}
\varphi_{2}(k)\leq C'_{\varphi_{2}}(k+2)^{-2\gamma_{1}},\ k\geq k_{7},
\end{align}
where $C'_{\varphi_{2}}=
4\left(8b^2C_{\xi}\rho_{1}+\frac{2\alpha^2_{0}}{c_{0}^2}\sqrt{2C_{V2}
\left(\frac{1}{N}\sigma^2_{d}\left(\frac{2NC_{d}^2}{\mu^2}+1\right)
+C^2_{d}\right)}+1\right)c^2_{0}$. Let $\hat{k}=\max\{\tilde{k},k_{7}\}$.}

{(5)\ If $\gamma_{2}\in(\frac{3}{2}\gamma_{1},1)$, by \eqref{str3} and \eqref{varphi22}, we have
\begin{align}
&\sum\limits\limits\limits_{s=\hat{k}}^{k}\varphi_{2}(s)\exp\Big(-\frac{\mu}{2N}\sum\limits\limits_{t=s+1}^{k}\alpha(t)\Big)\cr
\leq&\sum\limits\limits\limits_{s=\hat{k}}^{k}C'_{\varphi_{2}}(s+2)^{-2\gamma_{1}}\exp\Big(\frac{\mu\alpha_{0}}
{2N(1-\gamma_{2})}(s+2)^{1-\gamma_{2}}\Big)\exp\Big(\frac{-\mu\alpha_{0}}
{2N(1-\gamma_{2})}(k+2)^{1-\gamma_{2}}\Big)\cr
=&C'_{\varphi_{2}}\exp\Big(\frac{-\mu\alpha_{0}}
{2N(1-\gamma_{2})}(k+2)^{1-\gamma_{2}}\Big)\sum\limits\limits\limits_{s=\hat{k}}^{k}(s+2)^{-2\gamma_{1}}
\exp\Big(\frac{\mu\alpha_{0}}{2N(1-\gamma_{2})}(s+2)^{1-\gamma_{2}}\Big),\ k\geq \hat{k}.
\end{align}
Similar to \eqref{str6}-\eqref{str42}, we have
\begin{align}\label{str71}
\limsup_{k\rightarrow\infty}(k+1)^{2\gamma_{1}-\gamma_{2}}E\left[\left\|X(k+1)-\mathbf{1}_{N}\otimes z^{*}\right\|^2\right]\leq\frac{2NC_{\varphi_{1}}C'_{\varphi_{2}}}{\mu\alpha_{0}}.
\end{align}}
{If $\gamma_{2}=1$, by \eqref{str3} and \eqref{varphi22}, we derive
\begin{align}\label{str91}
&\sum\limits\limits\limits_{s=\hat{k}}^{k}\varphi_{2}(s)\exp\Big(-\frac{\mu}{2N}\sum\limits\limits_{t=s+1}^{k}\alpha(t)\Big)\cr
\leq&C'_{\varphi_{2}}(k+2)^{-\frac{\mu\alpha_{0}}{2N}}
\sum\limits\limits\limits_{s=\hat{k}}^{k}(s+2)^{-2\gamma_{1}}(s+2)^{\frac{\mu\alpha_{0}}{2N}}\cr
\leq&C'_{\varphi_{2}}(k+2)^{-\frac{\mu\alpha_{0}}{2N}}
\int_{\hat{k}-1}^{k+1}(s+1)^{-2\gamma_{1}+\frac{\mu\alpha_{0}}{2N}}ds\cr
\leq&\left\{
   \begin{aligned}
   &C'_{\varphi_{2}}(k+2)^{-\frac{\mu\alpha_{0}}{2N}}\ln(k+2),\ \ \frac{\mu\alpha_{0}}{2N}=2\gamma_{1}-1, \cr
   &\frac{C'_{\varphi_{2}}}{1-2\gamma_{1}+\frac{\mu\alpha_{0}}{2N}}(k+2)^{-\frac{\mu\alpha_{0}}{2N}}
\left((k+2)^{1-2\gamma_{1}+\frac{\mu\alpha_{0}}{2N}}-\hat{k}^{1-2\gamma_{1}
+\frac{\mu\alpha_{0}}{2N}}\right),\ \frac{\mu\alpha_{0}}{2N}\neq2\gamma_{1}-1.
   \end{aligned}
   \right.
\end{align}
Similar to the proof of \Cref{theorem3.2} (2), (3) and (4), we get \Cref{theorem3.2} (6), (7) and (8).}
\QEDA

\section{Verification for the example in Section 2}
\label{appendix:example}

Denote $u(k)=(u^T_{1}(k),\ldots,u^T_{N}(k))^T,$ $\nu(k)=(\nu_{1}(k),\ldots,\nu_{N}(k))^T$. Suppose that $\{\xi(k),k\geq0\}$, $\{u(k),k\geq0\}$, $\{\nu(k),k\geq0\}$ and $\{\mathcal{A}_{\mathcal{G}(k)},k\geq0\}$ are mutually independent.

Firstly, we will verify that \Cref{subgradient} holds. By (\ref{lasso}), we have
\begin{align}
&E\left[\ell_{i}(x;u_{i}(k),p_{i}(k))\right]\cr
=&\frac{1}{2}E\left[\|u^T_{i}(k)x_{0}+\nu_{i}(k)-u^T_{i}(k)x\|^2\right]\cr
=&\frac{1}{2}E\left[\|u^T_{i}(k)(x_{0}-x)+\nu_{i}(k)\|^2\right]\cr
=&\frac{1}{2}E\left[(x_{0}-x)^Tu_{i}(k)u^T_{i}(k)(x_{0}-x)+2\nu_{i}(k)u^T_{i}(k)(x_{0}-x)
+\nu_{i}(k)\nu_{i}(k)\right]\cr
=&\frac{1}{2}\left[(x-x_{0})^TR_{u,i}(x-x_{0})+\sigma_{\nu,i}\right],\notag
\end{align}
{then,} $\nabla E[\ell_{i}(x;\mu_{i}(k))]=R_{u,i}(x-x_{0})$, and the subgradient  of the local risk function $f_{i}(x)$ is given by
\begin{equation}\label{riskd}
d_{f_{i}}(x)=R_{u,i}(x-x_{0})+d_{R_{i}}(x),\ \forall\  d_{R_{i}}(x)\in\partial R_{i}(x).
\end{equation}
By the definition of $d_{R_{i}}(x)$, it is known that $\|d_{R_{i}}(x)\|\leq \sqrt{n}\kappa,\ \forall\ d_{R_{i}}(x)\in\partial R_{i}(x)$. Hence, it can be obtained from \eqref{riskd} that $$\|d_{f_{i}}(x)\|=\|R_{u,i}(x-x_{0})+d_{R_{i}}(x)\|\leq\|R_{u,i}\|\|x\|
+\|R_{u,i}\|\|x_{0}\|+\sqrt{n}\kappa,$$
thus, \Cref{subgradient} holds. The subgradients of the local cost functions are required to be bounded in \cite{Srivastava,NedicA,NedicA2,LiuS} which can not cover the case above, while our assumption covers both $L_{2}$-regularization and $L_{1}$-regularization.

Secondly, we will verify that \Cref{noise1} and \Cref{noise2} hold.
For \eqref{lasso}, the subgradients of local risk functions are measured with noises, i.e.
\ban
\tilde{d}_{f_{i}}(x_{i}(k))=d_{f_{i}}(x_{i}(k))+\zeta_{i}(k),
\ean
where \begin{equation}\label{observenoise}
\zeta_{i}(k)=(u_{i}(k)u^T_{i}(k)-R_{u,i})(x_{i}(k)-x_{0})-u_{i}(k)\nu_{i}(k)
\end{equation}
is the subgradient measurement noise of the $i$th optimizer.

Let $\mathcal{F}(k)=\sigma\{\xi_{ji}(t),\ u_{i}(t),\ \nu_{i}(t),\ \mathcal{A}_{\mathcal{G}(t)},\ 0\leq t\leq k,\ 1\leq i,j\leq N\},\ k\geq0,\ \mathcal{F}(-1)=\{{\O},\Omega\}$. It can be derived from the algorithm \eqref{algorithm}-\eqref{algorithm2} that $(x_{i}(k)-x_{0})\in\mathcal{F}(k-1)\subseteq\mathcal{F}(k),i=1,\ldots,N$, and by \eqref{observenoise}, we obtain $\zeta_{i}(k)\in\mathcal{F}(k)$, so $\{\zeta(k),\mathcal{F}(k),k\geq0\}$ is an adapted process. Note that $\{u_{i}(k),k\geq0\}$ is i.i.d., $\{u_{i}(k),k\geq0\}$, $\{\xi_{ji}(k),k\geq0\}$, $\{\mathcal{A}_{\mathcal{G}(k)},k\geq0\}$ and $\{\nu_{i}(k),k\geq0\}$ are mutually independent. Then, $\sigma\{u_{i}(k)\}$ and $\mathcal{F}(k-1)$ are mutually independent. Similarly, $\sigma\{\nu_{i}(k)\}$ and $\mathcal{F}(k-1)$ are also mutually independent. Hence, from \eqref{observenoise}, we have
\begin{align}
&E[\zeta_{i}(k)|\mathcal{F}(k-1)]\notag\\
=&E[(u_{i}(k)u^T_{i}(k)-R_{u,i})(x_{i}(k)-x_{0})-u_{i}(k)\nu_{i}(k)|\mathcal{F}(k-1)]\notag\\
=&E[(u_{i}(k)u^T_{i}(k)-R_{u,i})|\mathcal{F}(k-1)](x_{i}(k)-x_{0})-E[u_{i}(k)
\nu_{i}(k)|\mathcal{F}(k-1)]\notag\\
=&(E[u_{i}(k)u^T_{i}(k)]-R_{u,i})(x_{i}(k)-x_{0})-E[u_{i}(k)]E[\nu_{i}(k)]\notag\\
=&0\ \ \mbox{a.s.},\ \forall\ k\geq0,\ i=1,\ldots,N.\notag
\end{align}
Thus, $\{\zeta(k),\mathcal{F}(k),k\geq0\}$ is a martingale difference sequence. By \eqref{observenoise}, we have
\begin{align}
\label{ex1}
&E\left[\zeta^T_{i}(k)\zeta_{i}(k)|\mathcal{F}(k-1)\right]\cr
=&E\Big[\left(u_{i}(k)u^T_{i}(k)-R_{u,i})(x_{i}(k)-x_{0})-u_{i}(k)\nu_{i}(k)\right)^T\cr
&\times\Big((u_{i}(k)u^T_{i}(k)-R_{u,i})(x_{i}(k)-x_{0})-u_{i}(k)\nu_{i}(k)\Big)\Big
|\mathcal{F}(k-1)\Big]\cr
=&E\Big[(x_{i}(k)-x_{0})^T(u_{i}(k)u^T_{i}(k)-R_{u,i})^T(u_{i}(k)u^T_{i}(k)-R_{u,i})(x_{i}(k)-x_{0})\cr
&-2\nu_{i}(k)u^T_{i}(k)(u_{i}(k)u^T_{i}(k)-R_{u,i})(x_{i}(k)-x_{0})+(\nu_{i}(k))^2u^T_{i}(k)u_{i}(k)\Big|\mathcal{F}(k-1)\Big]\ \mbox{a.s.}
\end{align}
Noting that $\sigma(u_{i}(k))$ and $\mathcal{F}(k-1)$ are mutually independent, by $x_{i}(k)\in\mathcal{F}(k-1)$, we have
\begin{align}
\label{ex2}
&E\left[(x_{i}(k)-x_{0})^T(u_{i}(k)u^T_{i}(k)-R_{u,i})^T(u_{i}(k)u^T_{i}(k)-R_{u,i})(x_{i}(k)-x_{0})
\Big|\mathcal{F}(k-1)\right]\cr
=&(x_{i}(k)-x_{0})^TE\Big[(u_{i}(k)u^T_{i}(k)-R_{u,i})^T(u_{i}(k)u^T_{i}(k)-R_{u,i})
\Big|\mathcal{F}(k-1)\Big](x_{i}(k)-x_{0})\cr
=&(x_{i}(k)-x_{0})^TE\Big[(u_{i}(k)u^T_{i}(k)-R_{u,i})^T(u_{i}(k)u^T_{i}(k)-R_{u,i})\Big](x_{i}(k)-x_{0})\ \mbox{a.s.}
\end{align}
Noting that $u_{i}(k)$ and $\nu_{i}(k)$ are mutually independent, by $x_{i}(k)\in\mathcal{F}(k-1)$ and $E[\nu_{i}(k)]=0$, we have
\begin{align}
\label{ex3}
&E\left[-2\nu_{i}(k)u^T_{i}(k)(u_{i}(k)u^T_{i}(k)-R_{u,i})(x_{i}(k)-x_{0})
\Big|\mathcal{F}(k-1)\right]\cr
=&-2E\left[\nu_{i}(k)\Big|\mathcal{F}(k-1)\right]E\left[u^T_{i}(k)(u_{i}(k)u^T_{i}(k)-R_{u,i})
\Big|\mathcal{F}(k-1)\right](x_{i}(k)-x_{0})\cr
=&-2E\Big[\nu_{i}(k)\Big]E\left[u^T_{i}(k)(u_{i}(k)
u^T_{i}(k)-R_{u,i})\right](x_{i}(k)-x_{0})
=0\ \mbox{a.s.}
\end{align}
It follows from the definitions of $u_{i}(k)$ and $\nu_{i}(k)$ that
\begin{align}
\label{ex4}
E\left[(\nu_{i}(k))^2u^T_{i}(k)u_{i}(k)\Big|\mathcal{F}(k-1)\right]
=&E\left[(\nu_{i}(k))^2u^T_{i}(k)u_{i}(k)\right]\cr
=&E\left[(\nu_{i}(k))^2\right]E\left[u^T_{i}(k)u_{i}(k)\right]\cr
=&\sigma^2_{i,\nu}\mathrm{Tr}(R_{u,i})\ \mbox{a.s.}
\end{align}
Substituting \eqref{ex2}-\eqref{ex4} into \eqref{ex1} gives
\begin{align}
&E[\zeta^T_{i}(k)\zeta_{i}(k)|\mathcal{F}(k-1)]\notag\\
=&(x_{i}(k)-x_{0})^TE\Big[(u_{i}(k)u^T_{i}(k)-R_{u,i})^T(u_{i}(k)u^T_{i}(k)-R_{u,i})\Big](x_{i}(k)-x_{0})
+\sigma^2_{i,\nu}\mathrm{Tr}(R_{u,i})\notag\\
\leq&2E\Big[\|u_{i}(k)u^T_{i}(k)-R_{u,i}\|^2\Big]\|x_{i}(k)\|^2
+2E\Big[\|u_{i}(k)u^T_{i}(k)-R_{u,i}\|^2\Big]\|x_{0}\|^2+\sigma^2_{i,\nu}|\mathrm{Tr}(R_{u,i})| \ \mbox{a.s.}\notag
\end{align}
Denote
$\sigma_{\zeta}=\max\limits_{1\leq i\leq N}\Big\{2E\Big[\|u_{i}(k)u^T_{i}(k)-R_{u,i}\|^2\Big]\Big\}$
and
$C_{\zeta}=N\max\limits_{1\leq i\leq N}\Big\{2E\Big[\|u_{i}(k)u^T_{i}(k)-R_{u,i}\|^2\Big]\|x_{0}\|^2
+\sigma^2_{i,\nu}|\mathrm{Tr}(R_{u,i})|\Big\}$. Then we have
\begin{equation}\label{wkwk}
E\Big[\zeta^T(k)\zeta(k)\Big|\mathcal{F}(k-1)\Big]
=\sum_{i=1}^NE\Big[\zeta^T_{i}(k)\zeta_{i}(k)\Big|\mathcal{F}(k-1)\Big]\leq\sigma_{\zeta}\|X(k)\|^2+C_{\zeta}\ \mbox{a.s.}
\end{equation}

Noting that $\{\xi(k),k\geq0\}$, $\{u(k),k\geq0\}$, $\{\nu(k),k\geq0\}$ and $\{\mathcal{A}_{\mathcal{G}(k)},k\geq0\}$ are mutually independent,  by Lemma A.1 in \cite{LiT},
we obtain that $\sigma\{\xi(k),\xi(k+1),\ldots\}$ and $\sigma\{\mathcal{A}_{\mathcal{G}(k)},\mathcal{A}_{\mathcal{G}(k+1)},\ldots\}$  are conditionally independent given $\mathcal{F}(k-1)$, $\forall\ k\geq0$, which means that $\sigma\{\mathcal{A}_{\mathcal{G}(k)},\mathcal{A}_{\mathcal{G}(k+1)},\ldots\}$ and $\sigma\{\xi(k)\}$ are conditionally independent given $\mathcal{F}(k-1)$, i.e. $\{\xi(k), k\geq0\}$ satisfies \Cref{noise1}.

By \eqref{observenoise}, we get $\sigma\{\zeta(k)\}\subseteq\sigma\{u_{i}(k),\nu_{i}(k),x_{i}(k),1\leq i\leq N\}$. {Then}, by $\sigma\{x_{i}(k),1\leq i\leq N\}\subseteq\mathcal{F}(k-1)$, we have $\sigma\{\zeta(k)\}\subseteq\sigma\big\{\sigma\{u_{i}(k),\nu_{i}(k),1\leq i\leq N\}\cup\mathcal{F}(k-1)\big\}$. Therefore,
\begin{equation}\label{227}
\begin{array}{rcl}
\sigma\Big\{\sigma\{\zeta(k)\}\cup\mathcal{F}(k-1)\Big\}
\subseteq\sigma\Big\{\sigma\{u_{i}(k),\nu_{i}(k),1\leq i\leq N\}\cup\mathcal{F}(k-1)\Big\}.
\end{array}
\end{equation}
Noting that $\{\xi(k),k\geq0\}$, $\{u(k),k\geq0\}$, $\{\nu(k),k\geq0\}$ and $\{\mathcal{A}_{\mathcal{G}(k)},k\geq0\}$ are mutually independent, $\{u(k),k\geq0\}$ and $\{\nu(k),k\geq0\}$ are i.i.d., we have $\sigma\{u_{i}(k),\nu_{i}(k),1\leq i\leq N\}$ is independent of  $\sigma\big\{\sigma\{\mathcal{A}_{\mathcal{G}(k)},$ $\mathcal{A}_{\mathcal{G}(k+1)},\ldots\}\cup\mathcal{F}(k-1)\big\}$. By Corollary 7.3.2 in \cite{YSChow},
we have $\sigma\{\mathcal{A}_{\mathcal{G}(k)},\mathcal{A}_{\mathcal{G}(k+1)},\ldots\}$ and $\sigma\{u_{i}(k),\nu_{i}(k),\\ 1\leq i\leq N\}$  are conditionally independent given $\mathcal{F}(k-1)$. Then, by Theorem 7.3.1 in \cite{YSChow}
we obtain that for all $A\in\sigma\{\mathcal{A}_{\mathcal{G}(k)},\mathcal{A}_{\mathcal{G}(k+1)},\ldots\}$, \begin{equation}\label{228}
\begin{array}{rcl}
P\Big\{A\Big|\sigma\Big\{\sigma\{u_{i}(k),\nu_{i}(k),1\leq i\leq N\}\cup\mathcal{F}(k-1)\Big\}\Big\}=P\{A|\mathcal{F}(k-1)\}.
\end{array}
\end{equation}
By \eqref{227} and \eqref{228}, we have
\begin{align}
&P\Big\{A\Big|\sigma\Big\{\sigma\{\zeta(k)\}\cup\mathcal{F}(k-1)\Big\}\Big\}\cr
=&E\Big[\mathbf{1}_{A}\Big|\sigma\Big\{\sigma\{\zeta(k)\}\cup\mathcal{F}(k-1)\Big\}\Big]\cr
=&E\bigg[E\Big[\mathbf{1}_{A}\Big|\sigma\Big\{\sigma\{u_{i}(k),\nu_{i}(k),1\leq i\leq N\}\cup\mathcal{F}(k-1)\Big\}\Big]\Big|\sigma\Big\{\sigma\{\zeta(k)\}\cup\mathcal{F}(k-1)\Big\}\bigg]\cr
=&E\bigg[E\Big[\mathbf{1}_{A}\Big|\mathcal{F}(k-1)\Big]\Big|\sigma\Big\{\sigma\{\zeta(k)\}\cup\mathcal{F}(k-1)\Big\}\bigg]\cr
=&P\Big\{A\Big|\mathcal{F}(k-1)\Big\}.\notag
\end{align}
Furthermore, by Theorem 7.3.1 in \cite{YSChow},
we obtain that $\sigma\{\zeta(k)\}$
and $\sigma\{\mathcal{A}_{\mathcal{G}(k)},\mathcal{A}_{\mathcal{G}(k+1)},\ldots\}$ are conditionally independent given $\mathcal{F}(k-1)$, which together with \eqref{wkwk} gives that $\{\zeta(k),k\geq0\}$ satisfies \Cref{noise2}.

\section{Verification of Remark 2.2}
\label{appendix:step}
\begin{proof}
For the step sizes $\alpha(k)$, $c(k)$ defined in  Remark 2.2, it is easily verified that Condition (C1) holds. By \Cref{lemma2.1}, we know that $\lim_{k\rightarrow\infty}(\sum_{t=0}^{k}\alpha(t)-\int_{0}^{k}\alpha(t)dt)$ exists. Hence, there exists a positive constant $\tilde{\alpha}_{2}$, such that
$$\int_{0}^{k}\alpha(t)dt-\tilde{\alpha}_{2}\leq\sum\limits\limits_{t=0}^{k}\alpha(t)\leq\int_{0}^{k}\alpha(t)dt+\tilde{\alpha}_{2}.$$
By \Cref{examplealphack}, we have
$$\int_{0}^{k}\alpha(t)dt=\frac{\alpha_{1}}{1-\tau_{1}}(\ln^{1-\tau_{1}}(k+3)-\ln^{1-\tau_{1}}(3)).$$
Therefore,
$$\frac{\alpha_{1}}{1-\tau_{1}}\left(\ln^{1-\tau_{1}}(k+3)-\ln^{1-\tau_{1}}(3)\right)-
\tilde{\alpha}_{2}\leq \sum\limits\limits_{t=0}^{k}\alpha(t)\leq\frac{\alpha_{1}}
{1-\tau_{1}}\left(\ln^{1-\tau_{1}}(k+3)
-\ln^{1-\tau_{1}}(3)\right)+\tilde{\alpha}_{2},$$
then, for any given positive constant $C$, we have
\begin{equation}\label{C1}
\alpha_{3}\exp\Big(\tilde{\alpha}_{0}\ln^{1-\tau_{1}}(k+3)\Big)\leq \exp\left(C\sum\limits\limits_{t=0}^{k}\alpha(t)\right)
\leq\alpha_{4}\exp\Big(\tilde{\alpha}_{0}\ln^{1-\tau_{1}}(k+3)\Big),
\end{equation}
where $\tilde{\alpha}_{0}=\frac{C\alpha_{1}}{1-\tau_{1}}$,
$\alpha_{3}=\exp\left(-C\left(\frac{\alpha_{1}\ln^{1-\tau_{1}}(3)}
{1-\tau_{1}}+\tilde{\alpha}_{2}\right)\right)$,
$\alpha_{4}=\exp\left(-C\left(\frac{\alpha_{1}\ln^{1-\tau_{1}}(3)}{1-\tau_{1}}
-\tilde{\alpha}_{2}\right)\right)$.

By \Cref{examplealphack}, we have
\begin{equation*}
\displaystyle\lim\limits_{k\rightarrow\infty}\frac{c^2(k)}{\alpha(k)}
=\displaystyle\lim\limits_{k\rightarrow\infty}\frac{\frac{\alpha_{2}^2}{(k+3)^{2\tau_{2}}\ln^{2\tau_{3}}(k+3)}}
{\frac{\alpha_{1}}{(k+3)\ln^{\tau_{1}}(k+3)}}
=\displaystyle\frac{\alpha_{2}^2}{\alpha_{1}}\lim\limits_{k\rightarrow\infty}\frac{\ln^{\tau_{1}-2\tau_{3}}(k+3)}{(k+3)^{2\tau_{2}-1}}
=0.
\end{equation*}
Thus, Condition (C2) holds.

From the left side of \eqref{C1}, we have
$$\exp\left(-C\sum\limits\limits_{t=0}^{k}\alpha(t)\right)\leq\frac{1}{\alpha_{3}}\exp\Big(-\tilde{\alpha}_{0}\ln^{1-\tau_{1}}(k+3)\Big)
=\frac{1}{\alpha_{3}}(k+3)^{-\frac{\tilde{\alpha}_{0}}{\ln^{\tau_{1}}(k+3)}},\ k\geq0,$$
which gives
\begin{equation}\label{C12}
\begin{array}{rcl}
\displaystyle\alpha(k)\exp\left(-C\sum\limits\limits_{t=0}^{k}\alpha(t)\right)\leq\frac{\alpha_{1}}{\alpha_{3}(k+3)^{1+\frac{\tilde{\alpha}_{0}}
{\ln^{\tau_{1}}(k+3)}}\ln^{\tau_{1}}(k+3)}, \ k\geq0.
\end{array}
\end{equation}
Let
$$f(t)=\frac{\alpha_{1}}{\alpha_{3}(\exp(t))^{1+\frac{\tilde{\alpha}_{0}}
{t^{\tau_{1}}}}t^{\tau_{1}}}=\frac{\alpha_{1}}{\alpha_{3}
\exp(t+\tilde{\alpha}_{0}t^{1-\tau_{1}})t^{\tau_{1}}}.$$
By \eqref{C12}, we have
\begin{equation}\label{C13}
\begin{array}{rcl}
\alpha(k)\exp\left(-C\sum\limits\limits_{t=0}^{k}\alpha(t)\right)\leq f(\ln(k+3)),  \ k\geq0.
\end{array}
\end{equation}
Let $A_{k}=\int_{1}^{k}f(t)dt$, we have
\begin{align}
A_{k}=&\displaystyle\frac{\alpha_1}{\alpha_3(1-\tau_{1})}
\int_{1}^{k^{1-\tau_{1}}}
\frac{1}{\exp(s^{\frac{1}{1-\tau_{1}}}+\tilde{\alpha}_{0}s)}ds\notag\\
\leq&\displaystyle\frac{\alpha_1}{\alpha_3(1-\tau_{1})}\int_{1}^{k^{1-
\tau_{1}}}\frac{1}{\exp(\tilde{\alpha}_{0}s)}ds\notag\\
=&\displaystyle-\frac{\alpha_1}{\tilde{\alpha}_{0}\alpha_{3}(1
-\tau_{1})}\exp(-\tilde{\alpha}_{0}s)\Big|_{1}^{k^{1-\tau_{1}}}\notag\\
=&\displaystyle\frac{1}{\alpha_{3}C}(\exp(-\tilde{\alpha}_{0})-
\exp(-\tilde{\alpha}_{0}k^{1-\tau_{1}}))\notag\\
\leq&\displaystyle\frac{\exp(-\tilde{\alpha}_{0})}{\alpha_{3}C},\notag
\end{align}
i.e. $A_{k}$ is monotonically increasing and upper bounded, so it converges. Besides, noting that $f(t)$ is monotonically decreasing, by \Cref{lemma2.1}, we know that $\sum_{t=1}^{\infty}f(t)<\infty$, which together with \eqref{C13} leads to
$$\sum\limits\limits_{k=0}^{\infty}\alpha(k)\exp\left(-C\sum_{t=0}^{k}\alpha(t)\right)\leq\sum\limits\limits_{k=0}^{\infty}f(\ln(k+3))\leq\sum_{t=1}^{\infty}f(t)<\infty.$$
Therefore,  Condition (C3) holds.

For any given $0<\epsilon<1-\tau_{2}$, there exists a positive integer $k_{0}$, such that $\frac{\tilde{\alpha}_{0}}{\ln^{\tau_{1}}(k+3)}<\epsilon$, $k\geq k_{0}$. Thus, $$\exp\Big(\tilde{\alpha}_{0}\ln^{1-\tau_{1}}(k+3)\Big)=(k+3)^{\frac{\tilde{\alpha}_{0}}{\ln^{\tau_{1}}(k+3)}}
\leq(k+3)^{\epsilon},\ k\geq k_{0}.$$
By the right side of \eqref{C1} and the definition of $c(k)$, we have
\begin{equation*}
\begin{array}{rcl}
\displaystyle\frac{\alpha(k)\exp(C\sum_{t=0}^{k}\alpha(t))}{c(k)}\leq\frac{\frac{\alpha_{1}\alpha_{4}
\exp\big(\tilde{\alpha}_{0}\ln^{1-\tau_{1}}(k+3)\big)}{(k+3)\ln^{\tau_{1}}(k+3)}}
{\frac{\alpha_{2}}{(k+3)^{\tau_{2}}\ln^{\tau_{3}}(k+3)}}
\leq\frac{\alpha_{1}\alpha_{4}\ln^{\tau_{3}-\tau_{1}}(k+3)}{\alpha_{2}(k+3)^{1-\tau_{2}-\epsilon}},\ k\geq k_{0}.
\end{array}
\end{equation*}
Since $1-\tau_{2}-\epsilon>0$, we have
\begin{equation*}
\begin{array}{rcl}
\lim\limits_{k\rightarrow\infty}\dfrac{\alpha_{1}\alpha_{4}\ln^{\tau_{3}-\tau_{1}}(k+3)}
{\alpha_{2}(k+3)^{1-\tau_{2}-\epsilon}}=0.
\end{array}
\end{equation*}
Noting that $\frac{\alpha(k)\exp(C\sum_{t=0}^{k}\alpha(t))}{c(k)}>0$, we have
\begin{equation*}
\begin{array}{rcl}
\lim\limits_{k\rightarrow\infty}\dfrac{\alpha(k)\exp(C\sum_{t=0}^{k}\alpha(t))}{c(k)}=0,
\end{array}
\end{equation*}
which implies Condition (C4).

By the definition of $\alpha(k)$ in \Cref{examplealphack} and mean value theorem of integrals, we get
\begin{align}
\vspace{2mm}\displaystyle\int_{k}^{k+1}d\Big(\frac{1}{\alpha(t)}\Big)
\vspace{2mm}=&\displaystyle\int_{k}^{k+1}d\left(\frac{(t+3)\ln^{\tau_{1}}(t+3)}
{\alpha_{1}}\right)\cr
\vspace{2mm}=&\displaystyle\frac{1}{\alpha_{1}}\int_{k}^{k+1}(\ln^{\tau_{1}}(t+3)+\tau_{1}\ln^{\tau_{1}-1}(t+3))dt\cr
=&\displaystyle\frac{1}{\alpha_{1}}(\ln^{\tau_{1}}(s+3)+\tau_{1}\ln^{\tau_{1}-1}(s+3)),\  s\in[k,k+1].\label{C3}
\end{align}
Then, by  the monotone property of logarithmic functions, we have
\begin{equation}\label{C10}
\begin{array}{rcl}
\displaystyle\frac{1}{\alpha(k+1)}-\frac{1}{\alpha(k)}=\int_{k}^{k+1}d\Big(\frac{1}{\alpha(t)}\Big)\geq
\frac{1}{\alpha_{1}}\ln^{\tau_{1}}(k+3).
\end{array}
\end{equation}
For any given positive constant $C$, we have
\begin{align}\label{C9}
&\hspace{-1.2cm}\alpha(k)\exp\left(C\sum\limits\limits_{t=0}^{k}\alpha(t)\right)
-\alpha(k+1)\exp\left(C\sum\limits\limits_{t=0}^{k+1}\alpha(t)\right)\cr
&\hspace{-1.6cm}=\exp\left(C\sum\limits\limits_{t=0}^{k}\alpha(t)\right)(\alpha(k)-\alpha(k+1)\exp(C\alpha(k+1)))\cr
&\hspace{-1.6cm}=\exp\left(C\sum\limits\limits_{t=0}^{k}
\alpha(t)\right)\left(\alpha(k)-\alpha(k+1)\left(1+C\alpha(k+1)
+\frac{C^2\alpha^2(k+1)}{2}+o\left(C^2\alpha^2(k+1)\right)\right)\right)\cr
&\hspace{-1.6cm}=\alpha(k)\alpha(k+1)\exp\left(C\sum\limits\limits_{t=0}^{k}
\alpha(t)\right)\left(\frac{1}{\alpha(k+1)}-\frac{1}{\alpha(k)}
-\frac{C\alpha(k+1)}{\alpha(k)}
-\frac{C^2\alpha^2(k+1)}{2\alpha(k)}-o\left(\frac{C^2\alpha^2(k+1)}{\alpha(k)}\right)\right).
\end{align}
From \eqref{C10} and \eqref{C9}, it is known that there exists a positive integer $k_{2}$, such that
\begin{equation}\label{C11}
\begin{array}{rcl}
\alpha(k)\exp\left(C\sum\limits\limits_{t=0}^{k}\alpha(t)\right)-
\alpha(k+1)\exp\left(C\sum\limits\limits_{t=0}^{k+1}\alpha(t)\right)\geq0,\ k\geq k_{2},
\end{array}
\end{equation}
which means that the sequence  $\{\alpha(k)\exp(C\sum_{t=0}^k\alpha(t)),k\geq0\}$  decreases monotonically for sufficiently large $k$.

From the monotone property of logarithmic functions, $\tau_{1}-1<0$ and \eqref{C3}, we have
\begin{equation}\label{C4}
\begin{array}{rcl}
\displaystyle\int_{k}^{k+1}d\Big(\frac{1}{\alpha(t)}\Big)
\leq\frac{1}{\alpha_{1}}(\ln^{\tau_{1}}(k+4)+\tau_{1}\ln^{\tau_{1}-1}(k+3)).
\end{array}
\end{equation}
From the left side of \eqref{C1}, we get
\begin{equation}\label{C5}
\begin{array}{rcl}
\displaystyle\frac{1}{\exp(C\sum_{t=0}^{k}\alpha(t))}\leq\frac{1}{\alpha_{3}\exp(\tilde{\alpha}_{0}\ln^{1-\tau_{1}}(k+3))}.
\end{array}
\end{equation}
From $\alpha(k)>0$, we have $\exp(C\sum_{t=0}^{k}\alpha(t))\leq\exp(C\sum_{t=0}^{k+1}\alpha(t))$, therefore,
\begin{align}
\label{C2}
&\frac{\alpha(k)\exp(C\sum_{t=0}^{k}\alpha(t))-\alpha(k+1)\exp(C\sum_{t=0}^{k+1}\alpha(t))}
{\alpha^2(k)\exp(2C\sum_{t=0}^{k}\alpha(t))}\cr
\vspace{2mm}\leq&\displaystyle\frac{\alpha(k)\exp(C\sum_{t=0}^{k}\alpha(t))-\alpha(k+1)\exp(C\sum_{t=0}^{k}\alpha(t))}
{\alpha(k)\alpha(k+1)\exp(2C\sum_{t=0}^{k}\alpha(t))}\cr
\vspace{2mm}=&\displaystyle\frac{\alpha(k)-\alpha(k+1)}{\alpha(k)\alpha(k+1)\exp(C\sum_{t=0}^{k}\alpha(t))}\cr
\vspace{2mm}=&\displaystyle\Big(\frac{1}{\alpha(k+1)}-\dfrac{1}{\alpha(k)}\Big)\frac{1}{\exp(C\sum_{t=0}^{k}\alpha(t))}\cr
=&\frac{1}{\exp(C\sum_{t=0}^{k}\alpha(t))}\int_{k}^{k+1}d\Big(\dfrac{1}{\alpha(t)}\Big).
\end{align}
By \eqref{C4}-\eqref{C2}, we get
\begin{align}
\label{C6}
&\displaystyle\frac{\alpha(k)\exp(C\sum_{t=0}^{k}\alpha(t))-\alpha(k+1)
\exp\left(C\sum_{t=0}^{k+1}\alpha(t)\right)}
{\alpha^2(k)\exp\left(2C\sum_{t=0}^{k}\alpha(t)\right)}\cr
\leq&\displaystyle\frac{1}{\alpha_{1}}\left(\frac{\ln^{\tau_{1}}(k+4)}{\alpha_{3}\exp(\tilde{\alpha}_{0}\ln^{1-\tau_{1}}(k+3))}
+\frac{\tau_{1}\ln^{\tau_{1}-1}(k+3)}{\alpha_{3}\exp(\tilde{\alpha}_{0}\ln^{1-\tau_{1}}(k+3))}\right).
\end{align}
For the first term in the bracket on the right side of  \eqref{C6}, we get
\begin{align}
\label{C7}
\displaystyle\lim\limits_{k\rightarrow\infty}\frac{\ln^{\tau_{1}}
(k+4)}{\alpha_{3}\exp\left(\tilde{\alpha}_{0}\ln^{1-\tau_{1}}(k+3)\right)}
=&\displaystyle\lim\limits_{k\rightarrow\infty}\frac{\ln^{\tau_{1}}(k+3)}
{\alpha_{3}\exp\left(\tilde{\alpha}_{0}\ln^{1-\tau_{1}}(k+3)\right)}
\frac{\ln^{\tau_{1}}(k+4)}{\ln^{\tau_{1}}(k+3)}\cr
=&\displaystyle\lim\limits_{t\rightarrow\infty}\frac{t^{\frac{\tau_{1}}{1-\tau_{1}}}}{\alpha_{3}\exp(\tilde{\alpha}_{0}t)}
\lim\limits_{k\rightarrow\infty}\frac{\ln^{\tau_{1}}(k+4)}{\ln^{\tau_{1}}(k+3)}\cr
=&0.
\end{align}
For the second term in the bracket on the right side of  \eqref{C6}, we get
\begin{equation}\label{C8}
\begin{array}{rcl}
\displaystyle\lim\limits_{k\rightarrow\infty}\frac{\tau_{1}\ln^{\tau_{1}-1}(k+3)}
{\alpha_{3}\exp\left(\tilde{\alpha}_{0}\ln^{1-\tau_{1}}(k+3)\right)}
=\lim\limits_{k\rightarrow\infty}\frac{\tau_{1}}
{\alpha_{3}\ln^{1-\tau_{1}}(k+3)\exp\left(\tilde{\alpha}_{0}
\ln^{1-\tau_{1}}(k+3)\right)}
=0.
\end{array}
\end{equation}
Therefore, from \eqref{C11} and \eqref{C6}-\eqref{C8}, we obtain
\begin{equation*}
\lim\limits_{k\rightarrow\infty}\frac{\alpha(k)\exp(C\sum_{t=0}^{k}\alpha(t))-\alpha(k+1)\exp(C\sum_{t=0}^{k+1}\alpha(t))}
{\alpha^2(k)\exp(2C\sum_{t=0}^{k}\alpha(t))}=0.
\end{equation*}
Thus, Condition (C5) holds.
\end{proof}


\begin{thebibliography}{99}
\bibitem{YiP} P. Yi, Y. Hong, and F. Liu, ``Distributed gradient algorithm for constrained optimization with application to load sharing in power systems,'' \emph{Systems Control Lett.}, vol. 83, no. 711, pp. 45-52, 2015.
\bibitem{MohebifardR} R. Mohebifard and A. Hajbabaie, ``Distributed optimization and coordination algorithms for dynamic traffic metering in urban street networks,'' \emph{IEEE Trans. Intell. Transp.}, vol. 20, no. 5, pp. 1930-1941, 2019.
\bibitem{LiuX} L. Xiao, ``Dual averaging methods for regularized stochastic learning and online optimization,'' \emph{J. Mach. Learn. Res.}, vol. 11, no. 1, pp. 2543-2596, 2010.
\bibitem{ThinhT} T. Doan, J. Lubars, C. Beck, and R. Srikant, ``Convergence rate of distributed random projections,'' \emph{IFAC-PapersOnLine}, vol. 51, no. 23, pp. 373-378, 2018.
\bibitem{PuS} S. Pu and A. Nedi\'c, ``Distributed stochastic gradient tracking methods,'' \emph{Math. Program.}, vol. 187, no. 1, pp. 409-457, 2021.
\bibitem{JakoveticD1} D. Jakovetic, D. Bajovic, A. K. Sahu, and S. Kar, ``Convergence rates for distributed stochastic optimization over random networks,'' in \emph{Proc. 57th IEEE Conf. Decis. Control}, Miami Beach, Fontainebleau, USA, Dec. 17-19, 2018, pp. 4238-4245.
\bibitem{AlghunaimSA} S. A. Alghunaim and A. H. Sayed, ``Distributed coupled multiagent stochastic optimization,'' \emph{IEEE Trans. Autom. Control}, vol. 65, no. 1, pp. 175-190, 2020.

\bibitem{BastianelloN2} N. Bastianello, R. Carli, L. Schenato, and M. Todescato, ``Asynchronous distributed optimization over lossy networks via relaxed ADMM: Stability and linear convergence,'' \emph{IEEE Trans. Autom. Control}, vol. 66, no. 6, pp. 2620-2635, 2020.
\bibitem{SSAlaviani} S. S. Alaviani and  N. Elia, ``Distributed multiagent convex optimization over random digraphs,'' \emph{IEEE Trans. Autom. Control}, vol. 65, no. 3, pp. 986-998, 2019.
\bibitem{WangD} D. Wang, J. L. Wang, and W. Wang, ``Discrete-time distributed optimization for multi-agent systems under Markovian switching topologies,'' in \emph{Proc. IEEE Int. Conf. Control Autom.}, Ohrid, Macedonia, Jul. 3-6, 2017, pp. 747-752.
\bibitem{HongM} M. Hong and T. H. Chang, ``Stochastic proximal gradient consensus over random networks,'' \emph{IEEE Trans. Signal Process.}, vol. 65, no. 11, pp. 2933-2948, 2017.
\bibitem{YiP2} P. Yi, J. Lei, and Y. Hong, ``Distributed resource allocation over random networks based on stochastic approximation,'' \emph{Systems Control Lett.}, vol. 114, pp. 44-51, 2018.
\bibitem{SahuAK} A. K. Sahu, D. Jakovetic, D. Bajovic, and S. Kar, ``Distributed zeroth order optimization over random networks: A Kiefer-Wolfowitz stochastic approximation approach,'' in \emph{Proc. 57th IEEE Conf. Decis. Control}, Miami Beach, Fontainebleau, USA, Dec. 17-19, 2018, pp. 4951-4958.
\bibitem{LeiJ} J. Lei, H. Chen, and H. Fang, ``Asymptotic properties of primal-dual algorithm for distributed stochastic optimization over random networks with imperfect communications,'' \emph{SIAM J. Control Optim.}, vol. 56, no. 3, pp. 2159-2188, 2018.
\bibitem{Srivastava} K. Srivastava and A. Nedi\'c, ``Distributed asynchronous constrained stochastic optimization,'' \emph{IEEE J. Sel. Top. Signal Process.}, vol. 5, no. 4, pp. 772-790, 2011.
 \bibitem{NedicA} A. Nedi\'c and A. Ozdaglar, ``Distributed subgradient methods for multi-agent optimization,'' \emph{IEEE Trans. Autom. Control}, vol. 54, no. 1, pp. 48-61, 2009.
\bibitem{LiuS} S. Liu, Z. Qiu, and L. Xie, ``Convergence rate analysis of distributed optimization with projected subgradient algorithm,'' \emph{Automatica}, vol. 83, no. 83, pp. 162-169, 2017.
\bibitem{ShiW} W. Shi, Q. Ling, K. Yuan, G. Wu, and W. Yin, ``On the linear convergence of the ADMM in decentralized consensus optimization,'' \emph{IEEE Trans. Signal Process.}, vol. 62, no. 7, pp. 1750-1761, 2014.
\bibitem{JakoveticD} D. Jakovetic, J. Xavier, and J. M. F. Moura, ``Fast distributed gradient methods,'' \emph{IEEE Trans. Autom. Control}, vol. 59, no. 5, pp. 1131-1146, 2014.
\bibitem{MokhtariA} A. Mokhtari, Q. Ling, and A. Ribeiro, ``Network Newton distributed optimization methods,'' \emph{IEEE Trans. Signal Process.}, vol. 65, no. 1, pp. 146-161, 2017.

\bibitem{JingWang} J. Wang and N. Elia, ``Distributed averaging under constraints on information exchange: Emergence of l\'evy flights,'' \emph{IEEE Trans. Autom. Control}, vol. 57, no. 10, pp. 2435-2449, 2012.
\bibitem{Koloskova} A. Koloskova, N. Loizou, S. Boreiri, M. Jaggi, and  S. Stich, ``A unified theory of decentralized SGD with changing topology and local updates,'' in \emph{Proc. 37 th International Conference on Machine Learning}, Vienna, Austria, Jul. 12-18, 2020, pp. 5381-5393.
\bibitem{SKar} S. Kar, J. M. F. Moura, and K. Ramanan, ``Distributed parameter estimation in sensor networks: Nonlinear observation models and imperfect communication,'' \emph{IEEE Trans. Inform. Theory}, vol. 58, no. 6, pp. 3575-3605, 2012.
\bibitem{Carli} R. Carli, F. Fagnani, A. Speranzon, and S. Zampieri, ``Communication constraints in the average consensus problem,'' \emph{Automatica}, vol. 44, no. 3, pp. 671-684, 2008.
\bibitem{LiT} T. Li and J. Wang, ``Distributed averaging with random network graphs and noises,'' \emph{IEEE Trans. Inform. Theory}, vol. 64, no. 11, pp. 7063-7080, 2018.


\bibitem{THastie} T. Hastie, R. Tibshirani, and J. H. Friedman, \emph{The Elements of Statistical Learning}, New York: Springer, 2003.

  \bibitem{W. Shi}  L. Bottou, F. E. Curtis, and  J. Nocedal, ``Optimization methods for large-scale machine learning,''  \emph{SIAM Rev.}, vol. 60, no. 2, pp. 223-311, 2018.

   \bibitem{Schmidt} M. Schmidt,  N. Roux, and F. Bach, ``Convergence rates of inexact proximal-gradient methods for convex optimization,'' \emph{
        Advances in neural information processing systems}, no. 24,  pp. 1458-1466, 2011.

\bibitem{YSChow}Y. S. Chow and T. Henry, \emph{Probability Theory: Independence,
  Interchangeability, Martingales},  New York: Springer-Verlag, 1997.
\bibitem{Robbins}
H. Robbins and D. Siegmund, ``A convergence theorem for non negative almost
  supermartingales and some applications,''  \emph{Optim. Methods Statist.}, pp. 233-257, 1971.
\bibitem{Bertsekas1} D. P. Bertsekas, ``Stochastic optimization problems with nondifferentiable cost functionals,'' \emph{J. Optim. Theory Appl.}, vol. 12, no. 2, pp. 218-231, 1973.
\bibitem{Guo}L. Guo, \emph{Time-varying Stochastic System: Stability, Estimation and
  Control}, Changchun: Jilin Science and
  Technology Press, 1993.

\bibitem{NedicA2} A. Nedi\'c and A. Olshevsky, ``Distributed optimization over time-varying directed graphs,'' \emph{IEEE Trans. Autom. Control}, vol. 60, no. 3, pp. 601-615, 2015.
\bibitem{XiC}
C. Xi and U. A. Khan, ``{Distributed subgradient projection algorithm over
  directed graphs},'' \emph{IEEE Trans. Autom. Control}, vol. 62, no. 8, pp.
  3986-3992, 2017.
\bibitem{ApostolTA}
T. Apostol, \emph{Mathematical Analysis}, 2nd ed. New York: Addison Wesley, 1974.





\end{thebibliography}
\end{document}